\newcommand{\Call}[2]{\ProcNameSty{#1}(#2)} 
\newcommand{\Calli}[1]{\textsc{#1}} 
\newcommand{\PO}{Player~$1$\xspace}
\newcommand{\PT}{Player~$2$\xspace}
\newcommand{\PLi}{Player~$i$\xspace}
\newcommand{\C}{{\cal C}}
\newcommand{\G}{{\mathcal G}}
\newcommand{\R}{{\mathcal R}}
\renewcommand{\S}{{\mathcal S}}
\newcommand{\fr}{\texttt{fr}}
\newcommand{\Nat}{\mathbb{N}}
\newcommand{\Natstr}{\mathbb{N}^*}
\newcommand{\Natstro}{\mathbb{N}^* \setminus \mathbb{N}}
\newcommand{\notF}{V \setminus F}
\newcommand{\notFd}{V \setminus F_d}
\newcommand{\thresh}{\texttt{Th}\xspace}
\newcommand{\coTh}{\texttt{coB\"u-Th}}
\newcommand{\safe}{\text{Safe}}
\newcommand{\coBuchi}{\text{co-B\"uchi}\xspace}
\newcommand{\frparity}{\text{Fr-Parity}}
\newcommand{\play}{\text{play}\xspace}
\newcommand{\Spare}{\text{Spare}\xspace}
\NewDocumentCommand{\tbid}{gg}{b^{\IfNoValueTF{#2}{T}{#2}}_{\IfNoValueTF{#1}{}{#1}}}
\newcommand{\zug}[1]{\langle #1  \rangle}
\newcommand{\stam}[1]{}
\renewcommand{\set}[1]{\{ #1  \}} 
\newcommand{\succb}[1]{#1 \oplus 0^*\xspace}
\newcommand{\predb}[1]{#1 \ominus 0^*\xspace}
\newcommand{\succInt}[1]{\left(#1\right)^*}
\newcommand{\eps}{\varepsilon}
\DeclarePairedDelimiter{\adjustfloor}{\lfloor}{\rfloor}
\newcommand{\floor}[1]{\adjustfloor*{#1}}
\newcommand{\adjustbrac}[1]{\left(#1\right)}
\newcommand{\Obj}{\mathcal{O}}
\NewDocumentCommand{\ftg}{g}{\tilde{\IfNoValueTF{#1}{f}{#1}}}
\NewDocumentCommand{\configtg}{gg}{\tilde{\IfNoValueTF{#1}{c}{#1}}_{\IfNoValueTF{#2}{}{#2}}}
\NewDocumentCommand{\playtg}{g}{\tilde{\IfNoValueTF{#1}{\pi}{#1}}}
\newcommand{\placeholdr}{\hat{T}(v)}
\NewDocumentCommand{\invT}{g}{\IfNoValueTF{#1}{T^{\prime}}{{#1}^{\prime}}}
\NewDocumentCommand{\vminus}{g}{v^-_{\IfNoValueTF{#1}{}{#1}}}
\NewDocumentCommand{\vplus}{g}{v^+_{\IfNoValueTF{#1}{}{#1}}}
\newcommand{\absolut}[1]{|#1|}
\NewDocumentCommand{\sumT}{g}{\IfNoValueTF{#1}{\absolut{T(\vplus)} + \absolut{T(\vminus)}}{\absolut{#1(\vplus)} + \absolut{#1(\vminus)}}}\NewDocumentCommand{\diffT}{g}{\IfNoValueTF{#1}{\absolut{T(\vplus)} - \absolut{T(\vminus)}}{\absolut{#1(\vplus)} - \absolut{#1(\vminus)}}}
\newtheorem{problem}{Problem}
\newcommand{\half}[1]{\frac{#1}{2}}
\title{Computing Threshold Budgets in Discrete-Bidding Games}
\begin{document}
\maketitle

\begin{abstract}
In a two-player zero-sum graph game, the players move a token throughout a graph to produce an infinite play, which determines the winner of the game. \emph{Bidding games} are graph games in which in each turn, an auction (bidding) determines which player moves the token: the players have budgets, and in each turn, both players simultaneously submit bids that do not exceed their available budgets, the higher bidder moves the token, and pays the bid to the lower bidder (called {\em Richman} bidding). We focus on {\em discrete}-bidding games, in which, motivated by practical applications, the granularity of the players' bids is restricted, e.g., bids must be given in cents. 

A central quantity in bidding games is {\em threshold budgets}:  a necessary and sufficient initial budget for winning the game. Previously, thresholds were shown to exist in parity games, but their structure was only understood for reachability games. Moreover, the previously-known algorithms have a worst-case exponential running time for both reachability and parity objectives, and output strategies that use exponential memory. 
We describe two algorithms for finding threshold budgets in parity discrete-bidding games. The first is a fixed-point algorithm. 
It reveals, for the first time, the structure of threshold budgets in parity discrete-bidding games. 
Based on this structure, we develop a second algorithm that shows that the problem of finding threshold budgets is in \NP and co\NP for both reachability and parity objectives. Moreover, our algorithm constructs strategies that use only linear memory.

This is a corrected version of the paper (\url{https://arxiv.org/abs/2210.02773v4}) published originally on Jan 22, 2025.
\end{abstract}

\section{Introduction}
Two-player zero-sum {\em graph games} are a central class of games. A graph game proceeds as follows. A token is placed on a vertex and the players move it throughout the graph to produce an infinite play, which determines the winner of the game. The central algorithmic problem in graph games is to identify the winner and to construct winning strategies. One key application of graph games is {\em reactive synthesis}~\cite{PR89}, in which the goal is to synthesize a {\em reactive system}, namely a policy for interacting with an adversarial environment,  that satisfies a given specification no matter how the environment behaves. 

Two orthogonal classifications of graph games are according to the {\em mode} of moving the token and according to the players' {\em objectives}. For the latter, we focus on two canonical qualitative objectives. In {\em reachability} games, there is a set of target vertices and \PO wins if a target vertex is reached. 
In {\em parity} games, each vertex is labeled with a parity index and an infinite path is winning for \PO iff the highest parity index that is visited infinitely often is odd. The simplest and most studied mode of moving is {\em turn-based}:  the players alternate turns in moving the token. We note that reactive synthesis reduces to solving a turn-based parity game. Examples of other modes of moving are {\em concurrent} and {\em probabilistic} moves (see \cite{AG11}). 

We study \emph{bidding graph games}~\cite{LLPU96,LLPSU99}, which apply the following mode of moving: both players have budgets, and in each turn, an auction (bidding) determines which player moves the token. Concretely, we focus on {\em Richman bidding} (named after David Richman): in each turn, both players simultaneously submit bids that do not exceed their available budget, the higher bidder moves the token, and pays his bid to the lower bidder. Note that the sum of budgets stays constant throughout the game. 
We distinguish between {\em continuous}- and {\em discrete}-bidding, where in the latter, the granularity of the players' bids is restricted. 
The central questions in bidding games revolve around the {\em threshold budgets}, namely a necessary and sufficient initial budget for winning the game.

\subparagraph*{Continuous-bidding games.} 
This paper focuses on discrete-bidding. We briefly survey the relevant literature on continuous-bidding games, which have been more extensively studied than their discrete-bidding counterparts. Bidding games were introduced in~\cite{LLPU96,LLPSU99}. The objective that was considered is a variant of reachability, which we call {\em double reachability}: each player has a target and a player wins if his target is reached (unlike reachability games in which \PT's goal is to prevent \PO from reaching his target). The targets are assumed to be distinct, thus the game is zero sum. Note that apriori, it is possible that a play results in a tie if no target is visited. It was shown, however, that in continuous-bidding games, a target is necessarily reached, thus double-reachability games essentially coincide with reachability games continuous-bidding games. 

Threshold budgets were shown to exist; namely, each vertex $v$ has a value $\thresh(v)$ such that if \PO's budget is strictly greater than $\thresh(v)$, he wins the game from $v$, and if his budget is strictly less than $\thresh(v)$, \PT wins the game. Moreover, it was shown that the threshold function $\thresh$ is the {\em unique} function that satisfies the following property, which we call the {\em average property}. Suppose that the sum of budgets is $1$, and $t_i$ is \PLi's target, for $i \in \set{1,2}$. Then, $\thresh$ assigns a value in $[0,1]$ to each vertex such that at the ``end points'', we have $\thresh(t_1) = 0$ and $\thresh(t_2) = 1$, and the threshold at every other vertex is the average of two of its neighbors. 

Uniqueness implies that the problem of finding threshold budgets\footnote{Stated as a decision problem: given a game and a vertex $v$, decide whether $\thresh(v) \geq 0.5$.} is in \NP and co\NP. Moreover, an intriguing equivalence was observed between reachability continuous-bidding  games and a class of stochastic game~\cite{Con92} called {\em random-turn games}~\cite{PSSW09}. Intricate equivalences between {\em mean-payoff} continuous-bidding games and random-turn games have been shown in \cite{AHC19,AHI18,AHZ21,AJZ21} (see also~\cite{AH20}). 

Parity continuous-bidding games were studied in~\cite{AHC19}. The following key property was identified in games played on strongly-connected graphs. With every positive initial budget, a player can force visiting all vertices in the graph infinitely often. Consider a strongly-connected parity continuous-bidding game $\G$. It follows that if the maximal parity index in $\G$ is odd, then \PO wins with any positive initial budget, i.e., the thresholds in $\G$ are all $0$. Dually, if the maximal parity index in $\G$ is even, then the thresholds are all $1$. This property gives rise to a simple reduction from parity bidding games to double-reachability bidding  games: roughly, a player's goal is to reach a bottom strongly-connected component in which he can win with any positive initial budget.

\subparagraph*{Discrete-bidding games.}
This paper studies {\em discrete-bidding games}, which are similar to continuous-bidding games except that the granularity of the bids is restricted: the sum of the budgets in the game is fixed to $k \in \Nat$ and bids are restricted to be integers. A key difference between continuous- and discrete-bidding games is bidding ties, which now need to be handled explicitly. We focus on the tie-breaking mechanism that was defined in~\cite{DP10}: one of the players has the {\em advantage} and when a tie occurs, the player with the advantage chooses between (1)~use the advantage to win the bidding and pass it to the other player, or (2)~keep the advantage and let the other player win. Other tie-breaking mechanisms and the properties that they lead to were studied in~\cite{AAH21}. For example, the tie-breaking mechanism ``alternate tie breaking'' in which the players alternate turns in winning ties, is not determined, i.e., there is a game with an initial position such that neither player has a (pure) winning strategy. On the other hand, it was shown that any tie-breaking mechanism that breaks ties without considering the history of ties, admits determinacy.

The motivation to study discrete-bidding games is practical: in most applications, the assumption that bids can have arbitrary granularity is unrealistic. We point out that the results in continuous-bidding games, particularly those on infinite-duration games, do in fact develop strategies that bid arbitrarily small bids. It is highly questionable whether such strategies are applicable in practical applications. 

Bidding games model ongoing and stateful auctions. We list examples of domains in which such auctions arise. An immediate example is auctions for online advertisements~\cite{Mut09}. 
Bidding games were applied in~\cite{AMS24} as a scheduling mechanism in a ``decoupled''~synthesis procedure: given an objective of the form $\psi_1 \wedge \psi_2$, the idea is to find, independently, two bidding strategies $f_1$ for $\psi_1$ and $f_2$ for $\psi_2$  and an initial budget allocation such that the strategies guarantee that the outcome of playing $f_1$ against $f_2$ satisfies $\psi_1 \wedge \psi_2$. For example, the strongest guarantees are obtained when each strategy $f_i$, for $i =1,2$, is winning for $\psi_i$, i.e., it guarantees $\psi_i$ even against an adversary. 
Bidding as a mechanism for scheduling arises in {\em blockchain} technology, where {\em miners} accept transaction fees, which can be thought of as bids, and prioritize transactions based on them. Verification against attacks is a well-studied problem~\cite{CGV18,ABC16}. Attacks based on manipulations of these fees are hard to detect, can cause significant losses, and thus call for verification of the protocols~\cite{CGV18,ABC16}. 
Another example is applying bidding games as a mechanism for fair allocation of resources. Non-zero-sum Richman-bidding games were studied and applied to resource allocation in~\cite{MKT18} and {\em poorman}-bidding games (in which winning bids are paid to the ``bank'') were applied in~\cite{BEF21}. Poorman discrete-bidding were studied in~\cite{AM+23}.
In addition, researchers have studied training of agents that accept ``advice'' from a ``teacher'', where the advice is equipped with a ``bid'' that represents its importance \cite{AK+16}.
Finally, recreation bidding games have been studied, e.g., bidding chess~\cite{BP09}, as well as combinatorial games that apply bidding instead of alternating turns~\cite{RLP21}. 

We reiterate that practical applications of bidding games require some granularity on the bids. At the same time, we seek a high granularity to enable flexibility. A high granularity translates to  choosing a large sum of budgets $k$.

\subparagraph*{Previous results.}
For reachability objectives, the theory of continuous bidding games was largely adapted to discrete-bidding in~\cite{DP10}: threshold budgets were shown to exist and satisfy a discrete version of the average property and  winning strategies are derived from the threshold budgets. However, the only known algorithm to compute thresholds is a value-iteration algorithm whose worst-case running time is exponential when $k$ is given in binary.

For parity discrete-bidding games, there were large gaps in our understanding. In~\cite{AAH21}, {\em determinacy} was shown, namely from each configuration of the game, one of the players has a (pure) winning strategy. Determinacy is achieved by showing that the game satisfies a local property, which implies ``global'' determinacy. Using the observation that an additional budget will not harm a player, we obtain existence of thresholds. Importantly, this technique does not show that the thresholds satisfy the average property, and it was left open whether they indeed satisfy the average property. In terms of complexity, the algorithm to decide the winner from a configuration is naive: construct and solve the explicit concurrent game that corresponds to a bidding game. The running time of the algorithm is exponential when $k$ is given in binary. Another disadvantage of the algorithm is that the strategies that it produces use exponential memory and do not connect between bids and thresholds, as is done in reachability discrete-bidding games. 
To make matters worse, it was observed that unlike the properties of thresholds in reachability discrete-bidding games, which are conceptually similar to those in continuous-bidding, thresholds in parity discrete- and continuous-bidding games are inherently different: a strongly-connected discrete-bidding game $\G$ was shown in which a player cannot force visiting a vertex $v$ infinitely often even if he is initially allocated all of the budget. That is, when $\G$ is a B\"uchi game and $v$ is the only accepting vertex, then under continuous-bidding semantics, the threshold are $0$ whereas under discrete-bidding, the thresholds are $k+1$ (meaning that even a budget of $k$ does not suffice for winning).

\subparagraph*{Our results.}
We develop two complementary algorithms for computing threshold budgets in parity discrete-bidding games. 
Our first algorithm is a fixed-point algorithm. It repeatedly solves (i.e., finds threshold budgets) in {\em frugal-reachability} bidding games, which is an objective that we introduce in which on top of a reachability objective, in order to win, a player must reach its target with a sufficient budget. 
Our algorithm is inspired by algorithms to solve turn-based games such as Zielonka's~\cite{Zie98} and Kupferman and Vardi's~\cite{KV98} algorithms, whereas continuous-bidding games reduce to stochastic games. 
Recently, the fixed-point algorithm was adapted to bidding games {\em with charging}~\cite{AGHM24}.
This algorithm shows, for the first time, that threshold budgets in parity discrete-bidding games satisfy the average property. Moreover, the strategies that it produces are derived from the thresholds, as in reachability discrete-bidding games. 
On the downside, the algorithm runs in exponential time when $k$ is given in binary.

Second, we show that the problem of finding threshold budgets in parity discrete-bidding games\footnote{Formally, given a discrete-bidding game $\G$, a vertex $v$, and a threshold $\ell$, decide whether $\thresh(v) \geq \ell$.} 
is in NP and coNP. The bound applies also to reachability discrete-bidding games for which only an exponential-time algorithm was known. 
We briefly describe the idea of our proof. A first attempt to find thresholds is to guess thresholds (this is possible since the budgets are discrete) and verify that the guess satisfies the average property (recall that in continuous-bidding games, functions that satisfy the average property are unique). 
We show, however, that functions that satisfy the discrete average property are not unique. That is, even if a function satisfies the average property, it might not represent the thresholds in the game. We point out that this observation holds already in reachability discrete-bidding games, and to the best of our knowledge, was never made before. 
We overcome this challenge as follows. Our algorithm first guesses a function, checks whether it satisfies the average property, then verifies that it coincides with the thresholds. This last step is done via a reduction to turn-based parity games and is based on the structure of the thresholds and strategies that our first algorithm establishes. 
Another advantage of this algorithm is that it outputs a strategy that can be implemented using linear memory (previously, only construction of exponential-size strategies was known).

The new version of this paper corrects typos that previously appeared in crucial places and addresses the corner case of losing vertices in Lem.~\ref{lemma:TBparity-iff-thresh} and Lem.~\ref{lemma:thresholdpassesthetest}.

\section{Preliminaries}

\subsection{Concurrent games}
We define the formal semantics of bidding games via two-player {\em concurrent games}~\cite{AHK02}. Intuitively, a concurrent game proceeds as follows. A token is placed on a vertex of a graph. In each turn, both players concurrently select actions, and their joint actions determine the next position of the token. The outcome of a game is an infinite path. A game is accompanied by an objective, which specifies which plays are winning for \PO. 
In this paper, we will consider {\em reachability} and {\em parity} objectives. 

Formally, a concurrent game is played on an {\em arena} $\zug{A, Q, \lambda, \delta}$, where $A$ is a finite non-empty set of actions, $Q$ is a finite non-empty set of states (in order to differentiate, we use ``states'' or ``configurations'' in concurrent games and ``vertices'' in bidding games), the function $\lambda: Q \times \set{1,2} \rightarrow 2^A \setminus \{\emptyset\}$ specifies the allowed actions for \PLi in vertex $v$, and the transition function is $\delta: Q \times A \times A \rightarrow Q$. Suppose that the token is placed on a state $q \in Q$ and, for $i \in \set{1,2}$, \PLi chooses action $a_i \in \lambda(q, i)$. Then, the token moves to $\delta(q, a_1, a_2)$. For $q,q' \in Q$, we call $q'$ a {\em neighbor} of $q$ if there is a pair of actions $\zug{a_1,a_2} \in \lambda(q, 1) \times \lambda(q, 2)$ with $q' = \delta(q, a_1, a_2)$. We denote the neighbors of $q$ by $N(q) \subseteq Q$.

A (finite) {\em history} is a sequence  $\zug{q_0,a^1_0, a^2_0},\ldots, \zug{q_{n-1},a^1_{n-1},a^2_{n-1}}, q_n \in (Q \times A \times A)^*\cdot Q$ such that, for each $0 \leq i <n$, we have $q_{i+1} = \delta(q_i, a^1_i, a^2_i)$. 
A {\em strategy} is a ``recipe'' for playing the game. Formally it is a function $\sigma: (Q \times A \times A)^*\cdot Q \rightarrow A$. We restrict attention to {\em legal} strategies; namely, strategies that for each history $\pi \in (Q \times A \times A)^*\cdot Q$ that ends in $q \in Q$, choose an action in $\lambda(q, i)$, for $i \in \set{1,2}$. A {\em memoryless} strategy is a strategy that, for every state $q \in Q$, assigns the same action to every history that ends in $q$. 

Two strategies $\sigma_1$ and $\sigma_2$ for the two players and an initial state $q_0$, give rise to a unique {\em play}, denoted $\play(q_0, \sigma_1, \sigma_2)$, which is a sequence in $(Q \times A \times A)^\omega$ and is defined inductively as follows. The first element of $\play(q_0, \sigma_1, \sigma_2)$ is $q_0$. Suppose that the prefix of length $j \geq 1$ of \(\play(q_0, \sigma_1, \sigma_2)\) is defined to be $\pi^j \cdot q_j$, where $\pi^j \in (Q \times A \times A)^*$. Then, at turn $j$, for $i \in \set{1,2}$, \PLi takes action $a^j_i = \sigma_i(\pi^j \cdot q_j)$, the next state is $q^{j+1} = \delta(q_j, a^j_1, a^j_2)$, and we define $\pi^{j+1} = \pi^j \cdot \zug{v_j, a^j_1, a^j_2} \cdot q_{j+1}$. The {\em path} that corresponds to $\play(q_0, \sigma_1, \sigma_2)$ is $q_0,q_1,\ldots \in Q^\omega$. 

For $i \in \set{1,2}$, we say that \PLi ~{\em controls} a state $q \in Q$ if, intuitively, the next state is determined solely according to their chosen action. Formally, $q$ is controlled by \PO if for every action $a_1 \in A$, there is a state $q'$ such that no matter which action $a_2 \in A$ \PT takes, we have $q' = \delta(q, a_1, a_2)$, and the definition is dual for \PT. {\em Turn-based games} are a special case of concurrent games in which all states are controlled by one of the players.  Note that a concurrent game that is not turn-based might still contain some vertices that are controlled by one of the players.

\subsection{Bidding games}
A discrete-bidding game is played on an arena $G = \zug{V, E, k}$, where $V$ is a set of vertices, $E \subseteq V \times V$ is a set of directed edges, and $k \in \Nat$ is the sum of the players' budgets. For a vertex $v \in V$, we slightly abuse notation and use $N(v)$ to denote the neighbors of $v$ in $G$, namely $N(v) = \set{u: E(v,u)}$. We will consider decision problems in which $G$ is given as input. We then assume that $k$ is encoded in binary, thus the size of $G$ is $O(|V| + |E| + \log(k))$.

Intuitively, in each turn, both players simultaneously choose a bid that does not exceed their available budgets. The higher bidder moves the token and pays the other player. Note that the sum of budgets is constant throughout the game. 
Tie-breaking needs to be handled explicitly in discrete-bidding games as it can affect the properties of the game \cite{AAH21}. In this paper, we focus on {\em advantage-based} tie-breaking mechanism \cite{DP10}: exactly one of the players holds the {\em advantage} at a turn, and when a tie occurs, the player with the advantage chooses between (1) win the bidding and pass the advantage to the other player, or (2) let the other player win the bidding and keep the advantage.
We describe the semantics of bidding games formally below.

We will describe the formal semantics of a bidding game by constructing the explicit concurrent game that it corresponds to. We introduce the required notation. Following \cite{DP10}, we denote the advantage with $*$. Let $\Nat$ denote the non-negative integers, $\Nat^*$ the set $\set{0, 0^*, 1, 1^*, 2, 2^*, \ldots}$, and $[k]$ the set $\set{0,0^*,\ldots, k, k^*}$. We define an order $<$ on $\Nat^*$ by $0 < 0^* < 1 < 1^* < \ldots$. Let $m \in \Nat^*$. When saying that \PO has a budget of $m^* \in [k]$, we mean that \PO has the advantage, and implicitly, we mean that \PT's budget is $k-m$, and she does not have the advantage. 
We use $|m|$ to denote the integer part of $m$, i.e., if \(m = x^*\) for some \(x \in \Nat\), we denote \(|m| = x\).
Specifically, for \(m \in \Nat\), we have \(|m| = m\).  
We define operators $\oplus$ and $\ominus$ over $\Nat^*$.
Intuitively, we use \(\oplus\) as follows: suppose that \PO's budget is $m^*$ and \PT wins a bidding with a bid of $b_2$, then \PO's budget is updated to $m^* \oplus b_2$. Similarly, for $\ell \leq m$, a bid of $b_1 = \ell^*$ means that \PO will use the advantage if a tie occurs and $b_1 = \ell$ means that he will not use it. Upon winning the bidding, his budget is updated to $m^* \ominus b_1$.

  \begin{definition}[\(\oplus\) and \(\ominus\) operators]
    For \(x, y \in \bbN\), define $x^* \oplus y = x \oplus y^* = (x + y)^*$, $x \oplus y = x+y$. For $x,y \in \Nat$, define \(x \ominus y = x-y\), \(x^* \ominus y = (x-y)^*\), and in particular $x^* \ominus y^* = x-y$.
	For notational consistency, for \(x, y \in \Nat\), we define \(x^* \oplus y^* = (x+y+1)\), and \(x \ominus y^* = (x - y - 1)^*\). 
	Recall that $\ominus$ is intuitively used to deduct the winning bid from the winner's budget and \(\oplus\) is intuitively used to add the winning bid to the losing player's budget,
	hence the latter two cases do not follow this intuitive meaning.
\end{definition}

Next, we highlight two special cases, which are used frequently throughout the paper.

\begin{definition}
[Successor and predecessor]
For $B \in \Nat^*$, we denote by $\succb{B}$ and $\predb{B}$ respectively the {\em successor} and {\em predecessor} of $B$ in $\Nat^*$ according to $<$, defined as $\succb{B} = \min \set{x > B}$ and $\predb{B} = \max \set{x < B}$.
We note that this notation is convenient since it applies both for budgets that include the advantage and those that do not. When the status of the advantage is known we use the following notation. When \(B \in \Nat\) does not include the advantage, we use \(B^*\) as shorthand for \(\succb{B}\). 
When \(B = x^*\) for some \(x \in \Nat\), we use \(|B|+1 = x+ 1\) as shorthand for~\(\succb{B}\). 
\end{definition}

\subsection{Bidding games as concurrent games}
\label{sec:bidding2conc}
Consider an arena $\zug{V, E, k}$ of a bidding game. 
The corresponding {\em configurations} are $\C = \set{\zug{v, B} \in V \times ([k] \cup \{k+1\})}$, where a configuration $c = \zug{v, B} \in \C$ means that the token is placed on vertex \(v \in V\) and \PO's budget is \(B\). Implicitly, \PT's budget is \(k^* \ominus B\). 
Note that, vertices of the form \(\zug{v, k+1} \in \C\) are symbolic, namely they do not represent configurations of the game since \PO's budget cannot exceed $k$. 
The arena of the explicit concurrent game is $\zug{A, \C, \lambda, \delta}$, where $A = [k] \times V$, and we define the allowed actions in each configuration and transitions next. 
An action \(\zug{b, v} \in A\) means that the player bids \(b\) and proceeds to \(v\) upon winning the bidding. 
We require the player with the advantage to decide prior to the bidding whether they will use the advantage or not. 
Thus, when \PO's budget is $B^*$, \PO's legal bids are $[B]$ and \PT's legal bids are $\set{0,\ldots, k- B}$, and when \PO's budget is $B$, \PO's legal bids are $\set{0,1,\ldots, B}$ and \PT's legal bids are $[k \ominus B]$.
Next, we describe the transitions.
Suppose that the token is placed on a configuration $c = \zug{v, B}$ and \PLi chooses action $\zug{b_i, u_i}$, for $i \in \set{1,2}$. If $b_1 > b_2$, \PO wins the bidding and the game proceeds to $\zug{u_1, B_1 \ominus b_1}$. The definition for $b_2 > b_1$ is dual. 
The remaining case is a tie, i.e., $b_1 = b_2$. Since only one of the players has the advantage, a tie can occur only when the player who has the advantage does not use it. Suppose that $c = \zug{v, B^*}$, i.e., \PO has the advantage, and the definition when \PT has the advantage is dual. 
\PT wins the bidding, \PO keeps the advantage, and we proceed to $\zug{u_2, B^* \oplus b_2}$. 
Note that the size of the arena is $O(|V|  \times k)$, which is exponential in the size of $\G$ since $k$ is given in binary.

Consider two strategies $f$ and $g$ and an initial configuration $c = \zug{v, B}$
We sometimes abuse notation and treat $\play(v, f, g) = \zug{v_0, B_0}, \zug{v_1, B_1},\ldots$ as the infinite path $v_0,v_1,\ldots$ in the bidding game.

\subsection{Objectives and threshold budgets}
\label{sec:prelim-obj}
A bidding game is $\G = \zug{V, E, k, \Obj}$, where $\zug{V, E, k}$ is an arena and $\Obj \subseteq V^\omega$ is an {\em objective}, which specifies the infinite paths that are winning for \PO. 

We introduce notations on paths before defining the objectives that we consider. Consider a path $\pi = v_0,v_1,\ldots$ and consider a subset of vertices $A \subseteq V$. We say that $\pi$ {\em visits} $A$ if there is $j \geq 0$ such that $v_j \in A$. We denote by $inf(\pi) \subseteq V$, the set of vertices that $\pi$ visits infinitely often. 
We say that $\pi$ {\em enters} $A$ at time $j \geq 1$ if $v_j \in A$ and $v_{j-1} \notin A$, and it is {\em exited} at time $j$ if $v_j \notin A$ and $v_{j-1} \in A$. 

We consider the following two canonical objectives:
\begin{itemize}
\item {\bf Reachability:} A reachability bidding game is $\zug{V, E, k, S}$, where $S \subseteq V$ is a set of sinks. \PO, the reachability player, wins an infinite play $\pi$ iff it visits $S$, and we then say that $\pi$ {\em ends} in $S$. {\em Safety} objectives are dual to reachability objectives; the safety player wins a play iff it never visits $S$.  
\item {\bf Parity:}  A parity bidding game is $\zug{V, E, k, p}$, where $p: V \rightarrow \set{1,\ldots, d}$ assigns to each vertex a {\em parity index}, for $d \in \Nat$. A play $\tau$ is winning for \PO iff $\max_{v \in inf(\tau)} p(v)$ is odd. The special case in which $p$ assigns parities in $\set{2,3}$ is called {\em B\"uchi} objective;  \PO wins a play iff it visits the set $\set{v \in V: p(v) = 3}$ infinitely often. 
\end{itemize}

We introduce {\em frugal} objectives in bidding games in which, roughly, \PO wins by reaching a target with a sufficient budget.
\begin{definition}
[Frugal objectives]
\mbox{}\newline
\begin{itemize}
\item A {\em frugal-reachability} bidding game is $\zug{V, E, k, S, \fr}$, where $V$, $E$, and $k$ are as in bidding games, $S \subseteq V$ is a set of target vertices, and $\fr:S \rightarrow [k]$ assigns a {\em frugal-target budget} to each target. Consider a play $\pi$. 
 \PO wins $\pi$ iff $\pi$ reaches $S$, i.e., $\pi$ ends in a configuration \(\zug{s, B}\) with \(s \in S\), and \PO's budget at $S$ exceeds the frugal-target budget, i.e., $\pi$ ends in \(\zug{s, B}\) with $B \geq \fr(s)$. Note that a reachability bidding game is a special case of a frugal-reachability bidding game in which $\fr \nequiv 0$. 
\item The {\em frugal-safety} objective is dual to frugal-reachability. We describe the winning condition explicitly. A frugal-safety bidding game is $\zug{V, E, k, S, \fr}$, where $V$, $E$, and $k$ are as in bidding games, $S \subseteq V$ is a set of sinks, and $\fr:S \rightarrow [k]$ assigns a frugal-target budget to each sink. \PO, the safety player, wins a play $\pi$ if: (1) $\pi$ never reaches $S$, or (2) $\pi$ reaches a configuration $\zug{s, B}$ with $s \in S$ and $B \geq \fr(s)$. Note that a safety bidding game is a special case of a frugal-safety bidding game in which $\fr \nequiv k+1$. 
\item A {\em frugal-parity} bidding game is $\zug{V, E, k, p, S, \fr}$, where $p: (V \setminus S) \rightarrow \set{0,\ldots, d}$ and the other components are as in the above. \PO wins a play $\pi$ if (1) $\pi$ does not reach $S$ and satisfies the parity objective, or (2) $\pi$ satisfies a frugal-reachability objective: it ends in a configuration $\zug{s, B}$ with $s \in S$ and $B \geq \fr(s)$.
\end{itemize}
\end{definition}

\begin{remark}
We point out that an $\omega$-regular objective (e.g., reachability and parity) is a collection of paths in the graph, i.e., a subset of $V^\omega$. On the other hand, a frugal objective is a subset of configurations, i.e., a subset of $(V \times [k])^\omega$. We do not know of a reduction from a frugal objective game to a non-frugal objective game that preserves the thresholds. At the same time, as we show in the next section, it is not hard to extend algorithms for reachability bidding games to frugal-reachability games. Moreover, we are not aware of applications of frugal objectives, and define them as a means to solve parity bidding games. 
\end{remark}

Next, we define winning strategies.
\begin{definition}[Winning strategies]
Consider a configuration $c = \zug{v, B}$ and an objective~\(\Obj\).
\begin{itemize}
\item A \PO strategy $f$ is winning from \(c\) if for every strategy $g$, $\play(c, f, g)$ satisfies \(\Obj\). 
\item A \PT strategy \(g\) is winning from \(c\) if for every strategy \(f\), \(\play(c, f, g)\) does not satisfy~\(\Obj\). 
\end{itemize}
A player wins from \(c\) if they have a winning strategy from \(c\).
\end{definition}

The central quantity in bidding games is the {\em threshold budget} at a vertex, which is the necessary and sufficient initial budget at that vertex for \PO to guarantee winning the game. It is formally defined as follows.

\begin{definition}[Threshold budgets] Consider a bidding game $\G$. The {\em threshold budget} at a vertex $v$ in $\G$, denoted $\thresh_\G(v)$, is such that 
\begin{itemize}
\item \PO wins from every configuration $\zug{v, B}$ with $B \geq \thresh_\G(v)$, and 
\item \PT wins from every configuration $\zug{v, B}$ with $B < \thresh_\G(v)$.
\end{itemize}
We refer to the function $\thresh_\G$ as the {\em threshold budgets}. 
\end{definition}

\begin{remark}[Thresholds in losing vertices]
\label{rem:losing-vertices}
Note that a game might contain vertices that are losing for \PO with every initial budget. For example, in a reachability game, a sink with no path to the target is always losing for \PO. Following~\cite{DP10}, in such a losing vertex $v$, we set the threshold to $\thresh(v) = k+1$. Since the highest possible budget for a player in the game is $k^*$, setting $\thresh(v) = k+1$ can intuitively be understood as \PO requires more budget for winning than he can possibly obtain.  
\end{remark}

\begin{remark}
We point out that existence of threshold budgets is not trivial. 
Indeed, existence of threshold budgets implies {\em determinacy}: from each configuration, there is a player who has a winning strategy. 
By observing that an additional budget will not harm a player in a bidding game, we obtain that determinacy implies existence of thresholds.
Recall that bidding games are succinctly-represented concurrent games, and concurrent games are often not determined, for example, the simple concurrent game ``matching pennies'' is not determined since neither player can win the game. 
Typically, the vertices of a concurrent game can be partitioned into {\em surely winning} vertices from which \PO has a winning strategy, {\em surely losing} from which \PT has a winning strategy, and in the rest, neither player has a winning strategy. An optimal strategy from a vertex in the last set is {\em mixed}; it assigns a probability distribution over actions. 
Interestingly, in bidding games, all vertices are either surely winning or surely losing. 
Previously, existence of thresholds in reachability games with advantage-based tie-breaking was shown in~\cite{DP10} by identifying the structure of the thresholds. An alternative technique was developed in~\cite{AAH21} in which bidding games were shown to have a local property (called ``local determinacy'') that implies ``global'' determinacy, and was used to show that Muller bidding games with various tie-breaking mechanisms are determined. Other subclasses of concurrent games (beyond bidding games) were shown to have a local property that implies determinacy in~\cite{BBR21}.
\end{remark}

\section{Frugal-Reachability Discrete-Bidding Games}
The study in \cite{DP10} focuses primarily on reachability discrete-bidding games played on DAGs. 
We revisit their results, provide explicit and elaborate proofs for games played on general graphs, and extend the results to frugal-reachability games. 
Specifically, Theorem~\ref{thm:non-unique} points to an issue in bidding games played on general graphs that was not explicitly addressed in~\cite{DP10}.

\subsection{Background: reachability continuous-bidding games}
Many of the techniques used in reachability discrete-bidding games are adaptations of techniques developed for reachability continuous-bidding games~\cite{LLPU96,LLPSU99}. 
In order to develop intuition and ease presentation of discrete-bidding games, in this section, we illustrate the ideas and techniques of continuous-bidding games.

Recall that in continuous-bidding games there is no restriction on the granularity of bids, i.e., bids can be arbitrarily small. Throughout this section we assume that the sum of the players' budgets is $1$. Note that since winning bids are paid to the opponent, the sum of budgets stays constant throughout the game. 

\begin{definition}[Continuous threshold budgets]\label{def:cont-thresh}
The {\em continuous threshold budget} at a vertex $v$ is a budget $\thresh(v) \in [0,1]$ such that for every $\epsilon > 0$: 
\begin{itemize}
\item if \PO's budget is $\thresh(v) + \epsilon$, he wins the game from $v$, and 
\item if \PO's budget is $\thresh(v) - \epsilon$, \PT wins the game from $v$.
\end{itemize}
\end{definition}

\begin{remark}[Losing vertices in continuous-bidding games]
Recall that in discrete bidding, the threshold budget in a losing vertex is $k+1$, which is higher than the highest budget a player can obtain in a discrete bidding game (Remark~\ref{rem:losing-vertices}). 
The treatment of losing vertices in continuous bidding is similar, though implicit in Def.~\ref{def:cont-thresh}: the continuous threshold budget in a losing vertex is $1$, thus by Def.~\ref{def:cont-thresh}, \PO wins if his budget is $1+\epsilon$, which is never the case since it is higher than the total budget. 
\end{remark}

\begin{remark}
We point out that the issue of tie breaking is avoided in continuous-bidding games by considering initial budgets that differ from the threshold. 
That is, the guarantee is that if a player's budget is strictly above the threshold, he wins the game no matter which tie-breaking mechanism is used, e.g., even if the opponent wins all bidding ties. 
\end{remark}

A {\em double-reachability} continuous-bidding game is $\zug{V, E, t_1, t_2}$, where for $i \in \set{1,2}$, the vertex $t_i$ is the target of \PLi and every vertex $v \neq t_1,t_2$ has a path to both. The game ends once one of the targets is reached, and the player whose target is reached is the winner. The careful reader might notice that the definition does not define a winner when no target is reached. We will show below that this case is avoided.

\begin{definition}[Continuous average property]
\label{def:continuous-average}
Consider a double-reachability continuous-bidding game $\G = \zug{V, E, t_1, t_2}$ and a function $T: V \rightarrow [0,1]$. For $v \in V$, denote $v^+_T := \arg\max_{u \in N(v)} T(u)$ and $v^-_T := \arg\min_{u \in N(v)} T(v)$. We say that $T$ has the {\em continuous average property} if for every vertex $v \in V$: 
\[T(v) = \begin{cases}
1 & \text{ if } v = t_2 \\
0 & \text{ if } v = t_1 \\
\frac{T(v^-_T) + T(v^+_T)}{2} & \text{ otherwise}
\end{cases}
\]
when the function \(T\) is clear from the context, we simply refer \(v^+_T\) and \(v^-_T\) as \(v^+\) and \(v^-\) respectively. 
Note that, there could be more than one vertex \(u\) (similarly, \(w\)) such that \(T(u) = T(v^-)\) (respectively, \(T(w) = T(v^+)\)), but for the sake of convenience, we collectively denote any of them as \(v^-\) and \(v^+\) respectively. 
\end{definition}

The next theorem presents the main results on reachability continuous-bidding games: a function that satisfies the continuous average property is unique, and it coincides with the continuous threshold budgets. We illustrate the proof techniques, in particular how to construct a winning bidding strategy given the thresholds in the game.

\begin{restatable}{theorem}{contreach}
\label{thm:cont-reach}
\cite{LLPU96,LLPSU99}
Consider a double-reachability continuous-bidding game $\zug{V, E, t_1, t_2}$. Continuous threshold budgets exist, and the threshold budgets $\thresh: V \rightarrow [0,1]$ is the unique function that has the continuous average property. Moreover, the problem of deciding whether $\thresh(v) \leq 0.5$ is in NP and coNP.
\end{restatable}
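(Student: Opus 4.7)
The plan is to derive all three claims from a single construction: for any $T \colon V \to [0,1]$ satisfying the continuous average property, exhibit matching winning strategies for both players, forcing $T = \thresh$, from which existence, uniqueness, and the complexity bound all follow.

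I would first describe \PO's strategy from a configuration $\zug{v,B}$ with $B \geq T(v)+\eps$ and $v \notin \set{t_1,t_2}$: bid $b = \tfrac{1}{2}\bigl(T(v^+) - T(v^-)\bigr)$ and commit to moving to $v^-$ upon winning. A one-line use of the average property preserves the invariant $B' \geq T(v')+\eps$ in both outcomes: if \PO wins, the new budget is $B - b \geq T(v^-)+\eps$ since $T(v) - b = T(v^-)$; if \PT wins with some bid $b' \geq b$ and routes to an arbitrary neighbor $u$, then the new budget is at least $T(v)+\eps+b = T(v^+)+\eps \geq T(u)+\eps$. Combined with the budget cap $B \leq 1$, this forces $T(v_j) \leq 1-\eps$ at every step and rules out reaching $t_2$. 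Dually, \PT bids the same $b$ but routes to $v^+$, winning from every budget strictly below $1 - T(v)$.

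The main obstacle is to argue that \PO's strategy actually \emph{reaches} $t_1$, rather than merely avoiding $t_2$. I would combine two observations: whenever \PO wins a bidding, $T(v_j)$ weakly decreases along the play, and strictly decreases unless all neighbors of $v_j$ share a common $T$-value (in which case $b=0$ and any neighbor choice preserves the invariant); whenever \PT wins, \PO's budget grows by at least $b>0$, yet the sum of budgets is capped at $1$. A careful accounting then shows that only finitely many \PT-victories can occur before $T$ must be monotonically driven down, so the play must terminate at a target, and by the first step that target is $t_1$.

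From the two-sided strategies we obtain $T = \thresh$, which yields both existence and uniqueness: any two functions $T_1, T_2$ with the average property satisfying $T_1(v) < T_2(v)$ at some $v$ would produce opposing winners at budget $\tfrac{1}{2}(T_1(v)+T_2(v))$. For the \NP\ bound on ``$\thresh(v) \leq 1/2$'', a certificate guesses, for each non-target $u$, its extreme neighbors $u^-,u^+$; the constraints $T(t_1)=0$, $T(t_2)=1$, $2T(u)=T(u^-)+T(u^+)$ form a polynomial-size rational linear system solved in polynomial time, after which one verifies that the guessed $u^\pm$ genuinely realize the min and max of $T$ over $N(u)$ and that $T(v)\leq 1/2$. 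Uniqueness identifies the computed $T$ with $\thresh$, and the co\NP\ certificate is symmetric.
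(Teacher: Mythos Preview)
Your overall architecture matches the paper's: build a bidding strategy from a function $T$ with the average property, show it is winning for \PO when his budget exceeds $T(v)$, dualize for \PT, and read off uniqueness and the \NP/co\NP bound. The invariant argument and the complexity certificate are both fine and essentially identical to what the paper does.

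The gap is in your termination argument. With the plain bid $b = \tfrac{1}{2}(T(v^+)-T(v^-))$, the invariant is preserved but the \emph{spare change} $B-T(v)$ is only weakly preserved: it stays exactly the same when \PO wins, and it stays exactly the same when \PT ties with bid $b$ (assume she wins ties) and routes to $v^+$. So your claim that ``only finitely many \PT-victories can occur'' is false. Concretely, take $v_0\leftrightarrow v_1$ with $v_0\to t_1$, $v_1\to t_2$; then $T(v_0)=1/3$, $T(v_1)=2/3$, $b\equiv 1/3$. Starting at $\zug{v_0,1/3+\eps}$, \PT ties at $v_0$ and moves to $v_1$; there her budget is $1/3-\eps$, she cannot tie, \PO wins and returns to $v_0$ with budget $1/3+\eps$. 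The play cycles forever and never reaches $t_1$, with \PT winning every other bidding. Your ``budget grows by $b>0$ and is capped at $1$'' reasoning ignores that the budget also drops by $b$ whenever \PO wins, so nothing accumulates.

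The paper handles this with the standard Richman trick that you are missing: \PO does not bid exactly $b$ but $b+\delta_i$, where $\delta_1,\delta_2,\ldots$ is a carefully chosen increasing sequence summing to at most the initial spare $\eps$. If \PO wins $|V|$ consecutive biddings he reaches $t_1$ along $v^-$-edges; if \PT ever outbids him, his spare change jumps by more than it has been depleted, and he restarts the scheme with a strictly larger spare. This forces eventual termination. Your ``careful accounting'' needs to be replaced by this bid-perturbation idea; the rest of your plan then goes through.
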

\begin{proof}
(SKETCH)
Let $T$ be a function that satisfies the continuous average property, where we omit the proof of existence of such a function. We prove that for every vertex $v$, the continuous threshold budget at $v$ is $T(v)$. Uniqueness follows immediately. 

The complexity bound is obtained by guessing, for every vertex $v$, two neighbors $v^-$ and $v^+$, and constructing and solving a linear program based on the constraints in Def.~\ref{def:continuous-average} (for more details, see~\cite{AHC19}, which shows a reduction to stochastic games~\cite{Con92}). 

\subparagraph*{From a function with the continuous-average property to a strategy.}
Suppose that \PO's budget at $v$ is $T(v) + \eps$, for $\eps > 0$. We describe a winning \PO strategy. 
Recall that $v^+, v^- \in N(v)$ are respectively the neighbors of $v$ that attain the maximal and minimal values according to $T$. Let 
\[b(v) := \frac{T(v^+) - T(v^-)}{2}.\]
The key observation is that $T(v) + b(v) = T(v^+)$ and $T(v)-b(v) = T(v^-)$.

Consider the following \PO strategy. At vertex $v \in V$, bid $b(v)$ and proceed to $v^-$ upon winning. We show that the strategy maintains the following invariant:

\noindent{\bf Invariant:}
When the game reaches a configuration $\zug{u, B}$, then $B > T(u)$. 

We list two implications of the invariant. First, it implies that the strategy is legal, namely \PO's budget at $v$ suffices to bid $b(v)$. 
Second, it implies that \PO does not lose, namely no matter how \PT plays, the game will not reach $t_2$. Indeed, assume towards contradiction that $t_2$ is reached. Then, the invariant implies that \PO's budget is strictly greater than $1$, which violates the assumption that the sum of budgets is $1$.

Note that ``not losing'' does not suffice for winning, namely that \PO forces the game to $t_1$. These details, however, are not relevant to this paper. For completeness, we describe the rough idea. 
Suppose that the game reaches configuration $\zug{u, B}$. The invariant implies $B > T(u)$. We call $B - T(u)$ \PO's {\em spare change}. 
The idea is to choose \PO's bids carefully in a way that ensures that as the game continues, his spare change strictly increases so that eventually his budget suffices to win $|V|$ times in a row. We point out that this idea can be extended to show that in a strongly-connected game, a player can force infinitely many visits to a vertex with any positive initial budget, which is at the core of solving parity continuous-bidding games. 

We prove that \PO's strategy maintains the invariant against any \PT strategy. 
Note that the invariant holds initially. Suppose that the game reaches configuration $\zug{u, B}$ with $B > T(u)$. We claim that the invariant is maintained in the next turn. 
Indeed, if \PO wins the bidding, the next configuration is $\zug{v^-, B - b(v)}$, and the claim follows from $T(v)-b(v) = T(v^-)$. 
If \PT wins the bidding, she bids at least $b(v)$, thus \PO's updated budget is at least $B + b(v)$, and the worst that \PT can do for \PO is to move to $v^+$. The claim follows from $T(v) + b(v) = T(v^+)$. 

\subparagraph*{Reasoning about the flipped game.}
Finally, we show that \PT wins when \PO's budget is $T(v) - \eps$. We intuitively ``flip'' the game and associate \PO with \PT. 
More formally, let $\G'$ be the same as $\G$ except that \PO's goal is to reach $t_2$ and \PT's goal is to reach $t_1$. 
For every $u \in V$, define $T'$ by $T'(u) = 1-T(u)$. A key observation is that $T'$ satisfies the continuous average property in $\G'$. 
In particular, note that $T'(t_1) = 1$ and $T'(t_2) = 0$. 
Now, in order to win from $v$ in $\G$ when \PO's budget is $T(v) - \varepsilon$, \PT follows a winning \PO strategy in $\G'$ with an initial budget of $1-T(v) + \varepsilon$. 
\end{proof}

\subsection{Frugal-Reachability discrete-bidding games}

We turn to study discrete-bidding games. 

\subsubsection{The discrete average property}
\label{sec:average-based-strat}

In this section, we adapt the definition of the continuous average property (Def.~\ref{def:continuous-average}) to the discrete setting and analyze its properties. 

\begin{definition}[Average property]
\label{def:average}
Consider a frugal-reachability discrete-bidding game $\G = \zug{V, E, k, S, \fr}$. 
We say that a function $T: V  \rightarrow [k] \cup \set{k+1}$ has the {\em average property} if for every $s \in S$, we have $T(s) = \fr(s)$, and for every $v \in V \setminus S$, 
\[
T(v) = \floor{\frac{|T(v^+)| + |T(v^-)|}{2}} +\eps
\text{ where }
\eps = 
\begin{cases}
0 &\text{if~} |T(v^+)| + |T(v^-)| \text{ is even and~} T(v^-) \in \Nat \\
1 &\text{if~} |T(v^+)| + |T(v^-)| \text{ is odd and~} T(v^-) \in \Nat^* \setminus \Nat \\
* &\text{otherwise}
\end{cases}
\]
where $v^+ := \arg\max_{u \in N(v)} T(u)$ and $v^- := \arg\min_{u \in N(v)} T(v)$
\end{definition}

The following theorem shows, somewhat surprisingly and for the first time, that functions that satisfy the discrete average property are not unique. That is, there are functions satisfying the discrete average property but not coinciding with the threshold budgets. 
This is in stark contrast to continuous-bidding games in which there is a unique function that satisfies the average property.

\begin{theorem}
\label{thm:non-unique}
The reachability discrete-bidding game $\G_1$ that is depicted in Fig.~\ref{fig:avgbutnotthreshold} with target $t$ for \PO has more than one function that satisfies the average property. 
\end{theorem}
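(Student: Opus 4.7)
The plan is to exhibit two distinct functions $T_1, T_2 : V \to [k] \cup \set{k+1}$ on the vertex set of $\G_1$, verify that both satisfy Def.~\ref{def:average}, and point to a vertex on which they disagree. The proof is essentially a construction followed by a finite case check, with no uniqueness argument to obstruct it.

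For the first function I will take $T_1 = \thresh_{\G_1}$, which can be computed directly by analyzing optimal bids on $\G_1$ (or by running value iteration on the explicit concurrent-game arena of Sec.~\ref{sec:bidding2conc}). That the threshold budgets satisfy the average property is the main structural fact for reachability discrete-bidding games in this section (inherited from \cite{DP10}), so $T_1$ is a legitimate witness. For the second function $T_2$, I will exploit a subset $U \subseteq V \setminus S$ of vertices lying on a cycle of $\G_1$, and assign $T_2(v) = k+1$ for every $v \in U$. The value $k+1$ is designed to be locally consistent with Def.~\ref{def:average}: at any $v$ whose neighborhood $N(v)$ contains at least one vertex in $U$ and no vertex of strictly greater value, both $v^+$ and $v^-$ may be chosen in $U$ (since $k+1$ is the maximum value in the range), yielding
\[
\floor{\tfrac{|T_2(v^+)| + |T_2(v^-)|}{2}} + \eps \;=\; \floor{k+1} \;=\; k+1,
\]
which matches the prescribed value. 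Outside $U$, I will propagate $T_2$ by the averaging rule itself, producing well-defined values that agree with the rule at every vertex.

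The remaining work is to choose $U$ so that: (i) every $v \in U$ has $v^+, v^-$ achievable inside $U$ under $T_2$, so that the averaging equation is satisfied at all $v \in U$; and (ii) at boundary vertices $v \notin U$, the formula of Def.~\ref{def:average} using the inflated neighbor values yields a consistent entry. Condition (i) is immediate if every vertex on the chosen cycle has both a neighbor in $U$ and does not expose a neighbor whose $T_2$-value exceeds $k+1$ (impossible by definition of the range). Condition (ii) is a mechanical computation of $\floor{\cdot}$ and the parity-dependent $\eps$. Finally, I will fix a vertex $v_0 \in U$ and verify that $T_1(v_0) \in [k]$ is finite — because a sufficiently rich \PO can in fact bid his way out of $U$ and reach the target $t$ — whereas $T_2(v_0) = k+1$, establishing $T_1 \nequiv T_2$.

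The main obstacle, and the reason why this non-uniqueness phenomenon does not appear in the continuous setting of Thm.~\ref{thm:cont-reach}, is that the discrete averaging rule is not a strict contraction: once two neighbors of $v$ share the maximal value, the floor and $*$-tiebreaking permit $T$ to "freeze" at that value rather than being pinned down by the boundary condition at $t$. Identifying the correct cycle $U$ inside $\G_1$ and checking consistency at its boundary is the only non-routine step; both reduce to inspection of the small graph depicted in Fig.~\ref{fig:avgbutnotthreshold}.
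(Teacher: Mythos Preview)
Your construction has a genuine gap. For a vertex $v \in U$ to satisfy Def.~\ref{def:average} with $T_2(v) = k+1$, you need $\floor{(|T_2(v^+)|+|T_2(v^-)|)/2} + \eps = k+1$. Since every value is at most $k+1$, this forces $|T_2(v^-)| = k+1$, i.e., the \emph{minimum} neighbor of $v$ must also have value $k+1$. That means \emph{every} neighbor of $v$ must lie in $U$, not merely one of them. Your argument for condition~(i) --- ``every vertex on the chosen cycle has both a neighbor in $U$ and does not expose a neighbor whose $T_2$-value exceeds $k+1$'' --- only guarantees $v^+ \in U$; it says nothing about $v^-$. In $\G_1$, every non-sink vertex has a path to $t$, so any set $U \subseteq \{v_0,v_1,v_2\}$ that is closed under successors would make $t$ unreachable from $U$, which is impossible. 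Concretely: $v_2$ has $t$ as a neighbor, $v_1$ has $v_2$ as a neighbor, and $v_0$ has $v_1$ as a neighbor, so no nonempty $U$ works.

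The paper's proof is quite different and much more direct: with $k=5$ and $\fr(t)=2$, it simply exhibits the two vectors $\zug{T(v_0),T(v_1),T(v_2),T(t)} = \zug{4,3^*,3,2}$ and $\zug{5,4^*,3^*,2}$ and leaves the (finite, mechanical) verification of Def.~\ref{def:average} to the reader. Notice that \emph{both} functions take only finite values in $[k]$; the non-uniqueness here is not caused by a $k{+}1$ ``trap'' but by the interaction of the floor and the $*$-dependent $\eps$ in the discrete averaging rule. Your intuition that the discrete rule fails to be a strict contraction is on the right track, but the mechanism is the rounding slack at vertices with a self-loop (here $v_0$), not absorption at the maximal value.
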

\begin{proof}
Assume a total budget of $k = 5$. We represent a function $T: V \rightarrow [k]$ as a vector $\zug{T(v_0), T(v_1), T(v_2), T(t)}$. It is not hard to verify that both $\zug{4,3^*,3,2}$ and $\zug{5,4^*,3^*,2}$ satisfy the average property. (The latter represents the threshold budgets). 
\end{proof}

\begin{figure}
  \begin{center}
  	\includegraphics[width=0.6\textwidth]{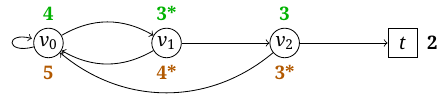}
      \end{center}
		

		
	\caption{A discrete-bidding reachability game with two functions that satisfy the average property.}\label{fig:avgbutnotthreshold}
\end{figure}

The following lemma intuitively shows that the ``complement'' of $T$, denoted below $T'$, satisfies the average property. For a vertex $v$, the value $T'(v)$ should be thought of as \PT's budget when \PO's budget is strictly less than $T(v)$. We will later show that since $T'$ satisfies the average property, \PT can win with a budget of $T'(v)$. A similar idea is used in continuous-bidding in the last point in the proof of Theorem~\ref{thm:cont-reach}. It follows that a function $T$ that satisfies the average property satisfies $T \leq \thresh_\G$. Indeed, on the one hand, if \PO's budget is less than $T(v)$, then \PT, the safety player, wins by avoiding \PO's targets. However, it could be the case that $T < \thresh_\G$; namely, \PO cannot win (force a visit to a target) with a budget of $T(v)$.

\begin{definition}[Complement of \(T\)]\label{def: ComplimentofT}
	Let \(\G = \zug{V, E, k, S, \fr}\) be a discrete-bidding game with a frugal objective. 
	Let \(T: V \rightarrow [k] \cup \{k+1\}\) be a function. 
	We define ``complement'' of \(T\), denoted by \(\invT: V \rightarrow [k] \cup \set{k+1}\) as: \(\invT(v) = (k+1) \ominus T(v)\) for all \(v \in V\). 
\end{definition}

\begin{observation}[Relation between \(T\) value and \(\invT\) value of a vertex]\label{obs: property}
	We make the following observations about such a function \(T\) and its complement \(\invT\), which can be directly derived using the definition of \(\ominus\), in the following.
	
	\begin{itemize}
		\item[\textbf{I.} ]
		For every $v \in V$, we have that $T(v)$ and $\invT(v)$ agree on which player has the advantage, formally \(T(v) \in \Nat\) iff \(\invT(v) \in \Nat\).  
		
		\item[\textbf{II.}] For $v \in V$, a neighbouring vertex with maximum \(T\)-value is the neighbouring vertex with minimum \(\invT\)-value, and vice-versa. 
		Notationally, \(v^+_T = v^-_{\invT}\) and \(v^-_T = v^+_{\invT}\)
		
		\item[\textbf{III}] For any \(v \in V\), if \(T(v) \in \Nat\) then \(\invT(v) = |\invT(v)| = (k+1) - |T(v)|\), otherwise \(|\invT(v)| = k - |T(v)|\). 		
	\end{itemize}
\end{observation}

\begin{restatable}{lemma}{flippedaverage}
	\label{lem:flippedaverage}
	Let $\G = \zug{V, E, k, S, \fr}$ be a discrete-bidding game with a frugal objective. Let \(T: V \rightarrow [k] \cup \{k+1\}\) be a function that satisfies the average property. 
	Then, the complement of \(T\), denoted by \(\invT: V \rightarrow [k] \cup \{k+1\}\),  also satisfies the average property. 

\end{restatable}

\begin{proof}
	Let us first fix the notation for \(\vminus\) and \(\vplus\) as to denote the neighbouring vertices of \(v\) with minimum and maximum \(T\)-value, respectively for this proof.  
	We need to show that \(\invT(v) = \floor{\frac{|\invT(v^-)|+ |\invT(v^+)|}{2}} + \eps'\), where \(\eps'\) comes from the definition of the average property.
	We already have \(T(v) = \floor{\frac{|T(v^+)|+ |T(v^-)|}{2}} + \eps\).
	Note that \(\invT(v^-) = \invT(v^+_{\invT})\) and \(\invT(v^+) = \invT(v^-_{\invT})\), as per Observation \ref{obs: property}.  
	
We proceed to prove the lemma. 

First, we consider the scenario when \(T(v) = 0\). 
This necessarily means \(\eps = 0\), \(T(\vminus) = 0\), \(|T(\vplus)| = 0\), and \(\invT(v) = (k+1) \ominus T(v) = k+1\). 
This implies \(\invT(\vplus{\invT}) = \invT(\vminus) = k+1\) and \(\invT(\vminus{\invT}) = \invT(\vplus) \in \{k^*, k+1\}\). 
If \(\invT(\vminus{\invT})\) is \(k^*\), we have \(|\invT(\vplus)| + |\invT(\vminus)| = 2k+1\) and \(\eps' = 1\). 
On the other hand, if \(\invT(\vminus{\invT}) = k+1\), then we have \(|\invT(\vplus)| + |\invT(\vminus)| = 2(k+1)\), and \(\eps' = 0\). 
Thus, we have \(\invT(v) = k+1 = \floor{\frac{\left|\invT\left(\vminus\right)\right| + \left|\invT\left(\vplus\right)\right|}{2}} + \eps'\)

\smallskip

Now, we assume that \(T(v) > 0\). We divide the analysis in four exhaustive cases in the following. 
Before delving into the case analysis, for better reading comprehension, we explain the structure that each of these analyses follows. 
We divide the analysis into four exhaustive cases according to the advantage statuses of \(T(\vplus)\) and \(T(\vminus)\), the first being where both of them are in \(\Nat\), in the second one, both of them are  in \(\Natstro\), and the final two cases are when one of them is in \(\Nat\) and the other one is in \(\Natstro\). 

Then, in each case, we look into the relation between  the parity of \(\sumT\) and \(\sumT{\invT}\), and what the \(\eps, \eps'\) values are. 
Once those are settled, we start with the corresponding expression of \(\floor{\frac{\sumT{\invT}}{2}} + \eps'\), and show that it is indeed \(\invT(v)\).

We now delve into the technical part of this analysis  below: 
	\begin{itemize}
		\item \(T(v^+), T(v^-) \in \Nat\).
		
		First, from Observation~\ref{obs: property}, we have \(\invT(\vminus), \invT(\vplus) \in \Nat\) as well. 
		
		Thus, it is enough to discuss the two cases corresponding to when the sum is even or odd, respectively. 
		Note that  \(|T(v^+)|+|T(v^-)|\) is even iff \(|\invT(v^-)|+|\invT(v^+)|\) is even, simply because in this case \(\invT(\vplus) = (k+1) - T(\vplus)\) and \(\invT(\vminus) = (k+1) - T(\vminus)\). 
		It follows that irrespective of whether \(|T(\vminus)| + |T(\vplus)|\) is odd or even, we have \(\eps' = \eps\). 
		
		\smallskip
		
		When \(\eps = \eps' = 0\) (i.e., the sum is even), we have
		\begin{align*}
			\frac{|\invT(v^+)|+|\invT(v^-)|}{2} &= \frac{(k+1 - |T(v^-)|) + (k+1 - |T(v^+)|)}{2}\\
			&=k+1 - \frac{\sumT}{2}\\
			&= (k+1) - T(v) = \invT(v) 
		\end{align*}
		
		When \(\eps = \eps' = 0^*\) (i.e., the sum is odd), 
		we have 
		\begin{align*}
			\left(\floor{\frac{|\invT(v^+)|+|\invT(v^-)|}{2}}\right)^* &= \left(\floor{k+1 - \frac{\sumT}{2}}\right)^*\\
			&= \left(k+1 - \floor{\frac{\sumT}{2}} - 1\right)^*\\
			&= (k+1) \ominus \left(\floor{\frac{\sumT}{2}}\right)^*\\
			&= (k+1) \ominus T(v) = \invT(v) 
		\end{align*}
	
		\smallskip
		
		\item \(T(v^+), T(v^-) \in \Natstro\).
		
		In this case too, \(|T(v^+)|+|T(v^-)|\) is even iff  \(|\invT(v^-)|+|\invT(v^+)|\) is even.
		Therefore, we have \(\eps' = \eps\).
		
		When \(\eps = \eps' = *\) (i.e., the sum is even and the minimum is in \(\Natstro\)), we have 
		\begin{align*}
			\left(\frac{|\invT(v^+)|+|\invT(v^-)|}{2}\right)^* &= \left(\frac{(k - |T(v^-)|)+(k - |T(v^+)|)}{2}\right)^*\\
			&= \left(k - \frac{|T(\vminus)| + |T(\vplus)|}{2}\right)^* \\
			&= \left((k+1) - \frac{|T(\vminus)| + |T(\vplus)|}{2} - 1\right)^*\\
			&= (k+1) \ominus \left(\frac{|T(\vminus)| + |T(\vplus)|}{2}\right)^*\\
			&= (k+1) \ominus T(v) = \invT(v) 
		\end{align*}
		
		When \(\eps = \eps' = 1\) (i.e., the sum is odd and the minimum is in \(\Natstro\)), we have 
		\begin{align*}
			\floor{\frac{|\invT(v^+)|+|\invT(v^-)|}{2}} + 1 &= \floor{\frac{(k - |T(\vplus)|) + (k - |T(\vminus)|)}{2}} + 1\\
			&= \floor{k - \frac{|T(\vminus)| + |T(\vplus)|}{2}} + 1\\
			&= k - \floor{\frac{|T(\vplus)| + |T(\vminus)|}{2}} - 1 + 1\\
			&= (k+1) - \left(\floor{\frac{|T(\vplus)| + |T(\vminus)|}{2}} + 1\right)\\
			&= (k+1) \ominus T(v) = \invT(v)
		\end{align*}
		
		\smallskip
		
		\item \(T(v^+) \in \Nat\) and \(T(v^-) \in \Natstro\)
		
		Note that, in this case, \(|T(v^+)|+|T(v^-)|\) is even iff  \(|\invT(v^-)|+|\invT(v^+)|\) is odd.
		
		We can also see that when \(|T(v^+)|+|T(v^-)|\) is even, we have \(\eps = \eps' = 0^*\). 
		Therefore, we have 
		\begin{align*}
			\left(\floor{\frac{|\invT(v^+)|+|\invT(v^-)|}{2}}\right)^* &= 
			\left(\floor{\frac{(k+1) - |T(\vplus)| + k - |T(\vminus)|}{2}}\right)^* \\
			&= \left(\floor{\frac{2k+1 - (|T(\vplus)|+|T(\vminus)|)}{2}}\right)^*\\
			&= \left(k - \frac{|T(\vplus)| + |T(\vminus)|}{2}\right)^*\\
			&= \left(k+1 - \frac{|T(\vplus)| + |T(\vminus)|}{2} - 1\right)^*\\
			&= (k+1) \ominus \left(\frac{|T(\vplus)| + |T(\vminus)|}{2}\right)^*\\
			&= (k+1) \ominus T(v) = \invT(v)
		\end{align*}
		
		On the other hand, when \(|T(\vplus)| + |T(\vminus)|\) is odd, we have \(\eps = 1, \eps' = 0\). 
		In this case, we proceed as follows:
		\begin{align*}
			\frac{|\invT(v^+)|+|\invT(v^-)|}{2} &= \frac{k - |T(\vplus)| + (k+1) - |T(\vminus)|}{2}\\
			&= \frac{2k+1 - (\absolut{T(\vplus)}+\absolut{T(\vminus)})}{2}\\
			&= k - \frac{\absolut{T(\vplus)} + \absolut{T(\vminus)} -1}{2}\\
			&= k - \floor{\frac{\absolut{T(\vplus)}+ \absolut{T(\vminus)}}{2}}\\
			&= (k+1) - \left(\floor{\frac{\absolut{T(\vplus)}+ \absolut{T(\vminus)}}{2}} + 1\right)\\
			&= (k+1) \ominus T(v) = \invT(v) 
		\end{align*}
		
		\smallskip
		
		\item \(T(v^+) \in \Nat^*\) and \(T(v^-) \in \Nat\)
		
		In this case, like the earlier one,  \(|T(v^+)|+|T(v^-)|\) is even iff  \(|\invT(v^-)|+|\invT(v^+)|\) is odd.
		But unlike the earlier case, here when \(\absolut{T(\vplus)} + \absolut{T(\vminus)}\) is even, we have \(\eps = 0\) and \(\eps' = 1\).
		On the other hand, when \(\sumT\) is odd, then we have \(\eps = \eps' = *\). 
		
		We first consider the case when \(\sumT\) is even, in the following:
		\begin{align*}
			\floor{\frac{\sumT{\invT}}{2}} + 1  &= \floor{\frac{k - \absolut{T(\vplus)} + (k+1) - \absolut{T(\vminus)}}{2}}\\
			&= \floor{\frac{2k+1 - (\sumT)}{2}}\\
			&= \floor{k - \frac{\sumT - 1}{2}}\\
			&= k - \floor{\frac{\sumT - 1}{2}} - 1\\
			&= k - \left(\frac{\sumT}{2} - 1\right) - 1\\
			&= (k+1) - \left(\frac{\sumT}{2} + 1\right)\\
			&= (k+1) \ominus T(v) = \invT(v) 
		\end{align*}
	
	On the other hand, when \(\sumT\) is odd, we have:
	\begin{align*}
		\left(\floor{\frac{\sumT{\invT}}{2}}\right)^* &= \left(\floor{k - \frac{\sumT - 1}{2}}\right)^*\\
		&= \left(k - \frac{\sumT - 1}{2}\right)^*\\
		&= \left(k - \floor{\frac{\sumT}{2}}\right)^*\\
		&= \left((k+1) - \floor{\frac{\sumT}{2}} - 1\right)^*\\
		&= (k+1) \ominus \left(\floor{\frac{\sumT}{2}}\right)^*\\
		&= (k+1) \ominus T(v) = \invT(v) 
	\end{align*}
	\end{itemize}

As we establish exhaustively that \(\floor{\frac{\sumT{\invT}}{2}}\) is indeed \(\invT(v)\), we conclude with the fact that  \(\invT\) satisfies the average property when \(T\) does. 
\end{proof}

    \subsubsection{From a function that satisfies the average to a bidding strategy}
\label{sec:def-f_T}
Throughout this section, fix a function $T: V \rightarrow [k] \cup \set{k+1}$ that satisfies the average property. We show how to construct a bidding strategy from $T$. We will use this construction also for parity games. 

\begin{definition}[Partial strategy]
A {\em partial strategy} based on \(T\) is a function \(f_T: \C \rightarrow [k] \times 2^V\) that chooses, at each configuration, a bid and a set of {\em allowed} vertices. 
\end{definition}
Note that $f_T$ is not a strategy since it does not assign a unique vertex to proceed to upon winning the bidding. 
 
\begin{definition}[$f'$ agrees with $f_T$]
Consider a strategy \(f'\) and a partial strategy $f_T: \C \rightarrow [k] \times 2^V$. Consider a configuration $c$. Let \(\zug{b, A} = f_T(c)\) and \(\zug{b', u'} = f'(c)\). We say that $f'$ agrees with $f_T$ at $c$ if $b=b'$ and $u' \in A$. We say that $f'$ agrees with \(f_T\) if $f'$ agrees with $f_T$ in all configurations. 
\end{definition}

We describe the intuition behind the construction of $f_T$. We construct $f_T$ so that every strategy $f'$ that agrees with $f_T$ maintains the following invariant. 
Suppose that the game starts from configuration $\zug{v, B}$ with $B \geq T(v)$. Then, against any opponent strategy, when the game reaches a configuration $\zug{u, B'}$, we have $B' \geq T(u)$. Since for every sink $s \in S$, we have $T(s) \geq \fr(s)$, the invariant implies that $f'$ guarantees that the frugal objective is not violated. Technically, the construction of $f_T$ is similar in spirit to the construction in continuous bidding (Theorem~\ref{thm:cont-reach}). There, a non-losing strategy maintains the invariant that \PO's budget exceeds $\thresh(v)$ by bidding $b = \frac{\thresh(v^+) - \thresh(v^-)}{2}$ at a vertex $v$, and proceeding to $v^-$ upon winning the bidding. Recall that the invariant is maintained since $\thresh(v) - b = \thresh(v^-)$ and $\thresh(v) + b = \thresh(v^+)$. 

We describe $f_T$ formally. Consider $v \in V$ and a budget $B \in [k]$ with $B \geq T(v)$. Let $v^+ = \arg \max_{u \in N(v)} T(u)$ and $v^- = \arg \min_{u \in N(v)} T(u)$. We define $f_T(\zug{v, B}) =~\zug{b^T(v,B), A(v)}$ as follows.
First, we define the allowed vertices
\begin{align}
\label{eq: allowedvertices}
A(v) = \begin{cases}
\set{u \in N(v): T(u) = T(v^-)} & \text{ if } T(v^-) \in \Nat \\
\set{u \in N(v): T(u) \leq \succb{T(v^-)}} & \text{ if } T(v^-) \in \Natstro
\end{cases}
\end{align}
Second, the definition of the bid \(b^T(v,B)\) is based on \(\tbid{v}\), defined as follows: 
\begin{align}
	\label{eq: bids}
	\tbid{v} = 
	\begin{cases}
		T(v) \ominus T(\vminus) &\text{~if~} T(\vminus) \in \Nat\\
		T(v) \ominus (\absolut{T(\vminus)} + 1) &\text{otherwise}
	\end{cases}
\end{align}

We define the bid chosen by $f_T$ at a configuration $c = \zug{v, B}$. 
Intuitively, \PO ``attempts'' to bid $\tbid{v}$. This is not possible when $\tbid{v}$ requires the advantage but \PO does not have it in $c$, i.e., $\tbid{v} \in \Nat^*\setminus \Nat$ and $B \in \Nat$. In such a case, \PO bids $\absolut{\tbid{v}} + 1 \in \Nat$. Formally, we define $b^T(v,B) = \tbid{v}$ when both $\tbid{v}$ and $B$ belong to either $\Nat$ or $\Nat^*\setminus \Nat$, and $\succb{\tbid{v}}$ otherwise. 

\subsubsection{Strategies that agree with $f_T$ are not losing}
Suppose that the game starts from a configuration $\zug{v, B}$ with $B \geq T(v)$ and \PO follows $f'$ that agrees with \(f_T\). In this section, we will show that $f'$ maintains the invariant that whenever the play reaches a configuration $\zug{u, B'}$, we have $B' \geq T(u)$. This implies that $f'$ is ``not losing'' since if the play reaches a sink $s \in S$, \PO's budget exceeds the frugal target budget. Note that ``not losing'' is not sufficient for winning; indeed, we require a winning strategy to draw the game to a sink. We will show in the next section that there is a winning strategy that agrees with $f_T$. 

We start with the following technical observation. 

\begin{observation}\label{obs: TmatcheswithBid}
	For every vertex $v$, we have \(T(v)\) is in \(\Nat\) iff \(\tbid{v}\) is in \(\Nat\).
\end{observation}

Next, assuming that \PO bids \(\tbid{v}\) from configuration $\zug{v, T(v)}$ (i.e., \PO's budget is $T(v)$), we establish a lower bound on his budget in the next round. The following observation takes care of the case that \PO wins the bidding at $v$. 

\begin{observation}\label{obs: AfterbidBudget}
	If \(T(\vminus) \in \Nat\), then \(T(v) \ominus \tbid{v}  = T(\vminus)\), and if \(T(\vminus) \in \Natstro\), then \(T(v) \ominus \tbid{v} = \absolut{T(\vminus)} + 1\)
\end{observation}

The following lemma takes care of the case that \PO loses the bidding at $v$. 

\begin{lemma}\label{lem:resultonT}
	Let $T$ be a function that satisfies the average property and a vertex $v \in V$. Then \(T(v) \oplus (\succb{\tbid{v}}) = \absolut{T(\vplus)}^*\)
\end{lemma}

\begin{proof}
	We establish this result by analysing four cases in the following, where each case corresponds to a parity of \(\sumT\) and an advantage status of \(T(\vminus)\):
	
	\begin{itemize}
		\item \(\sumT\) is even, and \(T(\vminus) \in \Nat\). 
		
		In this case, \(\tbid{v} = T(v) \ominus T(\vminus) = \frac{\sumT}{2} \ominus T(\vminus) = \frac{\diffT}{2}\). 
		Therefore, we have 
		\begin{align*}
			T(v) \oplus (\succb{\tbid{v}}) = \frac{\sumT}{2} +  \succInt{\frac{\diffT}{2}} = \absolut{T(\vplus)}^*
		\end{align*}
	
	\item \(\sumT\) is odd and \(T(\vminus) \in \Natstro\). 
	
	In this case, \(\tbid{v} = T(v) \ominus (\absolut{T(\vminus)} + 1) = \left(\floor{\frac{\sumT}{2}} + 1\right) - (\absolut{T(\vminus)} + 1) = \floor{\frac{\diffT}{2}}\).
	Therefore, we have
	\begin{align*}
		T(v) \oplus (\succb{\tbid{v}}) = \adjustbrac{\floor{\frac{\sumT}{2}} +1} + \adjustbrac{\floor{\frac{\diffT}{2}}}^* = \absolut{T(\vplus)}^* 
	\end{align*} 

	\item \(\sumT\) is even and \(T(\vminus) \in \Natstro\). 
	
	In this case, \(\tbid{v} = T(v) \ominus (\absolut{T(\vminus)} + 1) = \succInt{\frac{\sumT}{2}} \ominus (\absolut{T(\vminus)} + 1) \linebreak = \succInt{\frac{\sumT}{2} - \absolut{T(\vminus)} - 1} = \succInt{\frac{\diffT}{2} - 1}\). 
	Therefore, we have 
	\begin{align*}
		T(v) \oplus (\succb{\tbid{v}}) &= \succInt{\frac{\sumT}{2}} \oplus \frac{\diffT}{2} = \absolut{T(\vplus)}^*
	\end{align*}

	\item Finally, \(\sumT\) is odd and \(T(\vminus) \in \Nat\). 
	
	In this case, \(\tbid{v} = T(v) \ominus T(\vminus) = \succInt{\floor{\frac{\sumT}{2}}} \ominus T(\vminus) = \succInt{\floor{\frac{\diffT}{2}}}\). 
	Therefore, we have
	\begin{align*}
		T(v) \oplus (\succb{\tbid{v}}) &= \succInt{\floor{\frac{\sumT}{2}}} \oplus \adjustbrac{\floor{\frac{\diffT}{2}} + 1} \\
		&= \succInt{\frac{\sumT}{2} - \half{1} + \half{\diffT} - \half{1} + 1} = \absolut{T(\vplus)}^*
	\end{align*}
	\end{itemize}
\end{proof}

Furthermore, we also make the following observation, which intuitively enlists at least how much \PO's new budget would be after a bidding at \(v\) where he has a budget of \(\succb{T(v)}\) and he bids \(\succb{\tbid{v}}\).

\begin{observation}\label{obs: subsequent}
	Let \(T\) be a function that satisfies the average property and a vertex \(v \in V\), then 
	\begin{itemize}
		\item $(\succb{T(v)}) \ominus (\succb{\tbid{v}}) = T(v) \ominus \tbid{v}$\label{itm: claimd}, and 
		\item $(\succb{T(v)}) \oplus (\tbid{v} \oplus 1) = |T(v^+)|^* + 1$. \label{itm: claime}
	\end{itemize}
\end{observation}

We proceed to prove that strategies that agree with $f_T$ maintain an invariant on \PO's budget. 

\begin{restatable}{lemma}{invariant}
	\label{lem: invariant}
	Suppose that \PO plays according to a strategy $f'$ that agrees with $f_T$ starting from configuration $\zug{v, B}$ satisfying $B \geq T(v)$. 
	Then, against any \PT strategy, when the game reaches \(u \in V\), \PO's budget is at least \(T(u)\).
\end{restatable}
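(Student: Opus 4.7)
The plan is to prove the invariant by induction on the number of turns played. The base case holds by the hypothesis $B \geq T(v)$. For the inductive step, suppose that the play has reached a configuration $\zug{u, B'}$ with $B' \geq T(u)$, and consider the next turn. Recall from Section~\ref{sec:def-f_T} that $b_1 := b^T(u, B')$ equals $\tbid{u}$ when the advantage statuses of $\tbid{u}$ and $B'$ coincide and $\succb{\tbid{u}}$ otherwise; a short inspection of the cases in (\ref{eq: bids}) shows that $T(v)$ and $\tbid{v}$ always share the same advantage status. I show the invariant is preserved regardless of who wins the bidding.

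If \PO wins, the token moves to $u' \in A(u)$; by (\ref{eq: allowedvertices}), $T(u') \leq T(u^-)$ when $T(u^-) \in \Nat$ and $T(u') \leq \succb{T(u^-)}$ when $T(u^-) \in \Natstro$. When $b_1 = \tbid{u}$, the statuses of $B'$ and $\tbid{u}$ agree, and by monotonicity of $\ominus$ together with the inductive hypothesis $B' \ominus \tbid{u} \geq T(u) \ominus \tbid{u}$, which by items \ref{itm: claima} and \ref{itm: claimb} of Lemma~\ref{lem:resultonT} equals $T(u^-)$ or $\succb{T(u^-)}$, giving $B' \ominus b_1 \geq T(u')$. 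When $b_1 = \succb{\tbid{u}}$, the status mismatch forces $B' \geq \succb{T(u)}$, and item \ref{itm: claimd}, namely $(\succb{T(u)}) \ominus (\succb{\tbid{u}}) = T(u) \ominus \tbid{u}$, absorbs the extra increment and yields the same bound.

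If \PT wins, the token moves to some neighbor $u'$, and $T(u') \leq T(u^+)$. A direct inspection of the definition of $b^T$ shows that \PO's bid always carries the advantage when \PO holds it, so a tie would be resolved in \PO's favor; consequently \PT winning requires the strict inequality $b_2 \geq \succb{b_1}$, where $b_2$ denotes \PT's bid. Combining monotonicity of $\oplus$ with the inductive hypothesis and the case split on $b_1$: when $b_1 = \tbid{u}$, item \ref{itm: claimc} gives $B' \oplus b_2 \geq T(u) \oplus \succb{\tbid{u}} = |T(u^+)|^*$; when $b_1 = \succb{\tbid{u}}$, we have $B' \geq \succb{T(u)}$ and $\succb{b_1} = \tbid{u} \oplus 1$, so by item \ref{itm: claime}, $B' \oplus b_2 \geq \succb{T(u)} \oplus (\tbid{u} \oplus 1) = |T(u^+)|^* + 1$. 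Either way, $B' \oplus b_2 \geq T(u^+) \geq T(u')$.

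The main obstacle is the careful bookkeeping of the advantage across the four combinations of whether $B'$ and $\tbid{u}$ carry the advantage: each combination determines the literal value of $b^T(u, B')$, the minimum bid with which \PT can win, and which identity of Lemma~\ref{lem:resultonT} must be invoked. The lemma's items are tailored exactly to these subcases, and the observation that \PO's bid always uses the advantage when \PO holds it rules out tie-breaking subcases in \PT's favor and considerably streamlines the analysis of the second case.
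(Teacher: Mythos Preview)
Your proof is correct and follows essentially the same approach as the paper's: both argue by induction on turns, split into the two cases $b^T(u,B')=\tbid{u}$ versus $b^T(u,B')=\succb{\tbid{u}}$, and invoke the corresponding items of Lemma~\ref{lem:resultonT} to bound $B'\ominus b_1$ and $B'\oplus(\succb{b_1})$. Your write-up is somewhat more explicit than the paper's---you spell out why \PT must bid at least $\succb{b_1}$ to win and why the status mismatch in the second case forces $B'\geq\succb{T(u)}$---but the underlying argument is the same.
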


\begin{proof}
The invariant holds initially by the assumption. 
Consider a history that ends in a configuration $\zug{v, B}$. Assume that \(B \geq T(v)\). We claim that the invariant is maintained no matter the outcome of the bidding, namely $B \ominus b^T(v,B) \geq T(v^-)$ and $B \oplus  (\succb{b^T(v,B)}) \geq T(v^+)$.

	We distinguish between two cases. 
	First, when either both \(B\) and \(\tbid{v}\) are in \(\Nat\) or both \(B\) and \(\tbid{v}\) are in  \(\Natstro\). 
	In either case, \PO bids \(b^T(v, B) = \tbid{v}\). 
	It follows from Observation~\ref{obs: AfterbidBudget} and Lemma~\ref{lem:resultonT} that
	\[B \ominus b^T(v, B) = B \ominus \tbid{v} \geq T(v) \ominus \tbid{v} \geq T(v^-), \text{ and}\] 
	\[B \oplus b^T(v, B) \oplus 0^* = B \oplus \tbid{v} \oplus 0^* \geq T(v) \oplus \tbid{v} \oplus 0^* = |T(v^+)|^*.\]

In the second case, \PO bids \(b^T(v, B) = \succb{\tbid{v}}\). 
	Note that  \(B \geq  \succb{T(v)}\) because \(T(v)\) and \(\tbid{v}\) have the same advantage status (Observation~\ref{obs: TmatcheswithBid}). 
It follows from Observation.~\ref{obs: subsequent} that
	\[B \ominus b^T(v, B) = B \ominus (\succb{\tbid{v}}) \geq (\succb{T(v)}) \ominus (\succb{\tbid{v}}) = T(v) \ominus \tbid{v} \geq T(v^-), \text{ and}\] 
	\[B \oplus (\succb{b^T(v, B)}) \geq (\succb{T(v)}) \oplus (\tbid{v} \oplus 1) = |T(v^+)|^* + 1 > |T(v^+)|.\] 
	This concludes the proof. 
\end{proof}

The following proposition follows from Lemma~\ref{lem: invariant}.

\begin{restatable}{proposition}{corinvariant}
	\label{prop: corinvariant}
	Suppose that \PO plays according to a strategy $f'$ that agrees with \(f_T\) starting from configuration \(\zug{v, B}\) satisfying \(B \geq T(v)\), then:  	
	\begin{itemize}
	\item $f'$ is a legal strategy: the bid $b$ prescribed by $f_T$ does not exceed the available budget.
	\item $f'$ does not underestimate the frugal-budget: if \(s \in S\) is reached, \PO's budget is at least~\(\fr(s)\). 
	\end{itemize}
\end{restatable}

\subsubsection{Existence of thresholds in frugal-reachability discrete-bidding games}
We close this section by showing existence of threshold budgets in frugal-reachability discrete-bidding games. 
Recall that Theorem~\ref{thm:non-unique} shows that functions that satisfy the discrete average property are not unique. 
Let \(T\) be such a function.
The following lemma shows that \(T \leq \thresh_{\G}\). 
That is, if in a vertex \(v\), \PO has a budget less than \(T(v)\), then \PT has a winning strategy.
This proves that the threshold budgets for \PO cannot be less than \(T(v)\), when \(T\) is a function that satisfies the average property.

\begin{restatable}{lemma}{reachlower}
	\label{lem:reach-lower}
	Consider a frugal-reachability discrete-bidding game \(\G = \zug{V, E, k, S, \fr}\). If \(T: V \rightarrow [k] \cup \{k+1\}\) is a function that satisfies the average property, then \(T(v) \leq \thresh_{\G}(v)\) for every~\(v \in V\). 
\end{restatable}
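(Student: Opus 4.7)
The plan is to show, for every configuration $\zug{v,B}$ with $B<T(v)$, that \PT has a winning strategy from $\zug{v,B}$; this immediately yields $\thresh_\G(v)\ge T(v)$. The winning strategy will be obtained by applying the construction of Sec.~\ref{sec:def-f_T} to the flipped function $T'$ of Lem.~\ref{lem:flippedaverage}, but with the roles of the two players swapped.

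The first step is to translate budgets. If $B<T(v)$, then $B\le\predb{T(v)}$, hence \PT's budget $k^*\ominus B\ge k^*\ominus\predb{T(v)}=T'(v)$. The second step is to reinterpret the game from \PT's perspective: \PT wins the frugal-reachability game iff either the play never reaches $S$, or it ends in $\zug{s,B'}$ with $B'<\fr(s)$, equivalently $k^*\ominus B'\ge k^*\ominus\predb{\fr(s)}=T'(s)$. Thus \PT faces a frugal-safety-style objective with sinks $S$ and frugal-target budgets $\fr'(s):=T'(s)$; the boundary condition for the average property on $T'$ at sinks is automatically satisfied.

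The third step is to apply the analog of the $f_T$ construction for \PT using $T'$. Because Def.~\ref{def:average}, the bid formula~\eqref{eq: bids}, Lem.~\ref{lem:resultonT}, Lem.~\ref{lem: invariant}, and Cor.~\ref{cor: corinvariant} depend only on the values assigned by $T$ to the extremal neighbors and on the advantage status tracked by $^*$, the whole argument transfers verbatim to \PT once we relabel: the ``max'' and ``min'' neighbors relative to $T$ become respectively the ``min'' and ``max'' neighbors relative to $T'$. Since Lem.~\ref{lem:flippedaverage} ensures that $T'$ satisfies the average property, the resulting \PT-strategy maintains the invariant that \PT's budget at the currently visited vertex $u$ is at least $T'(u)$. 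If the play ever reaches $s\in S$, \PT's budget is then at least $T'(s)$, so \PO's budget is strictly less than $\fr(s)$ and \PO loses; if the play never reaches $S$, \PO also loses. Either way, \PT wins from $\zug{v,B}$.

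The main obstacle is making the role-swap argument rigorous. It boils down to checking that the involution $B\mapsto k^*\ominus B$ on $[k]\cup\{k+1\}$, together with the symmetric way the advantage is carried by $^*$, renders the entire framework of Sec.~\ref{sec:def-f_T} symmetric between \PO and \PT. The delicate case analysis in the proof of Lem.~\ref{lem:flippedaverage} already encapsulates this symmetry at the level of values, so the corresponding symmetry of the bidding rules and the budget-invariant follows by an essentially identical case split, without any new calculation.
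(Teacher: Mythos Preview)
Your proposal is correct and follows essentially the same approach as the paper: flip to \PT's perspective via the complement function $T'$ of Lem.~\ref{lem:flippedaverage}, build the partial strategy $f_{T'}$ for \PT, and invoke Lem.~\ref{lem: invariant} (hence Cor.~\ref{cor: corinvariant}) to maintain the budget invariant, which forces any play either to avoid $S$ or to reach a sink with \PO's budget below $\fr(s)$. Your write-up is in fact slightly more explicit than the paper's about the budget translation $B<T(v)\Leftrightarrow k^*\ominus B\ge T'(v)$ and about why the sink condition becomes $k^*\ominus B'\ge T'(s)$, but the underlying argument is identical.
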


\begin{proof}

Given $T$ that satisfies the average property, we construct $T'$ as in Definition~\ref{def: ComplimentofT}. 
Let $\zug{v, B_1}$ be a configuration, where $v \in V$, \PO's budget is $B_1$, and implicitly, \PT's budget is $B_2 = k^* \ominus B_1$. Note that $B_1 < T(v)$ iff $B_2 \geq T'(v)$. Moreover, for every $s \in S$, we have $T'(s) = (k+1) \ominus \fr(s)$. 
We consider the ``flipped'' game; namely, we associate \PT with \PO (of Proposition~\ref{prop: corinvariant}), and construct a partial strategy $f_{T'}$ for \PT. 
 We construct a \PT strategy $f'$ that agrees with $f_{T'}$: for each $v \in V$, we arbitrarily choose a neighbor $u$ from the allowed vertices. By Lemma~\ref{lem: invariant}, no matter how \PO responds, whenever the game reaches $\zug{u, B_1}$, we have $B_2 \geq T'(u)$. The invariant implies that $f'$ is a winning strategy. Indeed, if the game does not reach a sink, \PT wins, and if it does, \PO's frugal objective is not satisfied. 
\end{proof}

The following lemma shows the existence of a function that satisfies the average property and that coincides with threshold budgets.

\begin{restatable}{lemma}{reachupper}
\label{lem:reach-upper}
Consider a frugal-reachability discrete-bidding game $\G = \zug{V, E, k, S, \fr}$. There is a function $T$ 
that satisfies the average property with $T(v) \geq \thresh_\G(v)$, for every $v \in V$.
\end{restatable}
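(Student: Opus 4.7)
The plan is to build $T$ as the fixed point of a descending value-iteration. Define $T_0: V \to [k] \cup \{k+1\}$ by $T_0(s) = \fr(s)$ on sinks and $T_0(v) = k+1$ elsewhere, and set $T_{i+1} = F(T_i)$, where the operator $F$ preserves values on $S$ and at every $v \in V \setminus S$ applies the averaging expression of Def.~\ref{def:average} with the obvious extension accepting $k+1$. A case analysis on the parity of $|T(v^+)|+|T(v^-)|$ and the advantage status of $T(v^-)$, in the same spirit as Lem.~\ref{lem:flippedaverage}, shows $F$ is monotone in the pointwise order. Since $T_0$ takes the maximal value $k+1$ off $S$, we have $T_1 \leq T_0$, and by monotonicity $T_{i+1} \leq T_i$ for every $i$. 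Because $[k]\cup\{k+1\}$ is finite, this chain stabilizes at some $T^* = T_N$ with $F(T^*) = T^*$, which is precisely the average property required by the lemma.

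It remains to verify $T^*(v) \geq \thresh_\G(v)$. I show by induction on $i$ that for every $v$ and every $B \geq T_i(v)$ (in particular $T_i(v) \leq k^*$), \PO has a winning strategy from $\zug{v, B}$ in $\G$. The base case $i=0$ is trivial: $T_0(v) \leq k^*$ holds only at sinks, where the frugal-reachability objective is already satisfied by the definition of $T_0$. For the inductive step, consider $B \geq T_{i+1}(v)$ with $v \notin S$. \PO plays one round of the partial strategy $f_{T_i}$ from Sec.~\ref{sec:def-f_T}: he bids $\tbid{v}$ or $\succb{\tbid{v}}$ (determined from the advantage via eq.~(\ref{eq: bids}) using $T_i$) and, upon winning, proceeds to a vertex of $A(v)$ computed from $T_i$. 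Since $T_{i+1}(v) = F(T_i)(v)$, the algebra underlying Lem.~\ref{lem:resultonT} applies verbatim with $T_{i+1}(v)$ in the role of $T(v)$ and $T_i(v^\pm)$ in the role of $T(v^\pm)$, yielding $T_{i+1}(v) \ominus \tbid{v} \geq T_i(v^-)$ and $T_{i+1}(v) \oplus \succb{\tbid{v}} \geq T_i(v^+)$. Hence after the round the game reaches some $\zug{u, B'}$ with $B' \geq T_i(u)$, and the inductive hypothesis produces a winning continuation. Taking $i = N$ gives a winning \PO strategy from $\zug{v, T^*(v)}$, so $\thresh_\G(v) \leq T^*(v)$.

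The main obstacle is the monotonicity of $F$, because the correction term $\eps \in \{0, 0^*, 1\}$ toggles with the parity of $|T(v^+)|+|T(v^-)|$ and the advantage status of $T(v^-)$, so the inequality $F(T) \leq F(T')$ for $T \leq T'$ must be checked across the four parity/advantage combinations, with special care when $T$ and $T'$ select different argmin/argmax neighbors and when one of the arguments equals $k+1$. Once that bookkeeping is settled, everything else is routine: fixed-point existence comes from finiteness of the codomain, the average property falls out of $F(T^*)=T^*$, and the inductive winning argument is a clean chaining of the Lem.~\ref{lem:resultonT} identities. Combined with Lem.~\ref{lem:reach-lower}, the conclusion actually strengthens to $T^* \equiv \thresh_\G$, so the fixed point not only exists but coincides exactly with the threshold function.
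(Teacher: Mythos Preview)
Your proposal is correct and follows essentially the same value-iteration scheme as the paper: start from $T_0$ with $T_0(s)=\fr(s)$ on sinks and $T_0(v)=k+1$ elsewhere, iterate the averaging operator, and take the fixed point of the resulting descending chain. The difference is purely in framing. The paper interprets each iterate $T_n$ game-theoretically as the threshold in the \emph{truncated} game $\G[n]$ (where \PO must win within $n$ steps); monotonicity $T_{n+1}\leq T_n$ and the bound $T_n\geq\thresh_\G$ then fall out immediately from ``winning in $n$ steps is harder than winning in $n{+}1$ steps'' and ``winning in $n$ steps implies winning eventually.'' You instead prove monotonicity of $F$ directly by case analysis on the parity/advantage combinations, and establish $T_i\geq\thresh_\G$ via a one-round unrolling of $f_{T_i}$ chained with the inductive hypothesis. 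Your route is more algebraically self-contained (and makes explicit the case analysis the paper's sketch hides behind the game semantics); the paper's route avoids that bookkeeping by appealing to the truncated-game interpretation. Both arrive at the same fixed point and the same conclusion.
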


\begin{proof}
	
	The proof is similar to the one in \cite{DP10}. We illustrate the main ideas. For $n \in \Nat$, we consider the {\em truncated game} $\G[n]$, which is the same as $\G$ except that \PO wins iff he wins in at most $n$ steps. We find a sufficient budget for \PO to win in the vertices in $\G[n]$ in a backwards-inductive manner. For the base case, for every vertex $u \in V \setminus S$, since \PO cannot win from $u$ in $0$ steps, we have $T_0(u) = k+1$. For $s \in S$, we have $T_0(s) = \fr(s)$. Clearly, $T_0 \nequiv \thresh_{\G[0]}$. For the inductive step, suppose that $T_{n-1}$ is computed. For each vertex $v$, we define $T_n(v) = \floor{\frac{|T_{n-1}(v^+)| + |T_{n-1}(v^-)|}{2}} +\eps$ as in Def.~\ref{def:average}. Following a similar argument to Theorem~\ref{thm:cont-reach}, it can be shown that if \PO's budget is $T_n(v)$, he can bid a $b$ so that if he wins the bidding, his budget is at least $T_{n-1}(v^-)$ and if he loses the bidding, his budget is at least $T_{n-1}(v^+)$. By induction, we get $\thresh_{\G[n]}(v) = T_{n}(v)$, for every $v \in V$, which also implies that \(T_n\) is a monotonically non-increasing function. Thus, for every vertex $v$, we let $T(v) = \lim_{n \to \infty} T_n(v)$, which is well-defined because of the monotonicity of \(T_n\) (which is coming from monotonicity of \(\thresh_{\G[n]}\) and the fact that \(\thresh_{\G[n]} = T_n\)), and the fact that it only takes finitely many values, namely ranging over \([k] \cup \{k+1\}\).
	It is not hard to show that $T$ satisfies the average property and that $T(v) \geq \thresh_\G(v)$, for every $v \in V$. 
\end{proof}

Let $T$ be a function that results from the fixed-point computation from the proof of Lemma~\ref{lem:reach-upper}. Since it satisfies the average property, we apply Lemma~\ref{lem:reach-lower} to show that \PT wins from $v$ when \PO's budget is $\predb{T(v)}$. 
Since the values observed in a vertex during an execution of the fixed-point algorithm are monotonically decreasing and since the number of values that a vertex can obtain is $2k +1$, the running time is $O(|V| \cdot k)$. 
We thus conclude the following.

\begin{restatable}{theorem}{thm:frugal-reach}
	\label{thm:frugal-reach}
	Consider a frugal-reachability discrete-bidding game \(\G = \zug{V, E, k, S, \fr}\). Threshold budgets exist and satisfy the average property. Namely, there exists a function \(\thresh: V \rightarrow [k] \cup \{k+1\}\) such that for every vertex $v \in V$ 
	\begin{itemize}
		\item if \PO's budget is $B \geq \thresh(v)$, then \PO wins the game, and
		\item if \PO's budget is $B < \thresh(v)$, then \PT wins the game
	\end{itemize}
	Moreover, there is an algorithm to compute $\thresh$ that runs in time $O(|V| \cdot k)$, which is exponential in the size of the input when $k$ is given in binary.
\end{restatable}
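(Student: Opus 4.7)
The plan is to assemble Theorem~\ref{thm:frugal-reach} directly from the two lemmas already proved in this subsection, observing that together they pin the threshold function down exactly.

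First I would invoke Lem.~\ref{lem:reach-upper} to obtain an explicit function $T : V \to [k] \cup \{k+1\}$ that satisfies the average property and satisfies $T(v) \geq \thresh_\G(v)$ for every $v$. This half takes care of the ``if $B \geq T(v)$ then \PO wins'' direction, and in fact provides a concrete winning strategy: any strategy $f'$ that agrees with the partial strategy $f_T$ from Section~\ref{sec:def-f_T} maintains the budget invariant $B \geq T(u)$ at each reachable vertex $u$ by Lem.~\ref{lem: invariant}, and hence is non-losing by Cor.~\ref{cor: corinvariant} (the fixed-point construction in Lem.~\ref{lem:reach-upper} gives the stronger statement that such a strategy actually wins, since $T = \lim_n T_n$ with $T_n$ being the thresholds for the truncated games $\G[n]$).

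Second, I would feed the same $T$ into Lem.~\ref{lem:reach-lower}, which, because $T$ satisfies the average property, yields the reverse inequality $T(v) \leq \thresh_\G(v)$ for every $v$. Combining the two bounds gives $T \equiv \thresh_\G$. Consequently $\thresh_\G$ itself satisfies the average property, which establishes both the existence claim and the structural claim of the theorem. In particular, the ``if $B < \thresh(v)$ then \PT wins'' direction follows from the ``flipped'' strategy described in the proof of Lem.~\ref{lem:reach-lower}: \PT plays the partial strategy $f_{T'}$ associated with the complement function $T'$ of Lem.~\ref{lem:flippedaverage}, which is again an average-property function and therefore again yields a non-losing strategy for \PT via Cor.~\ref{cor: corinvariant}.

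Finally, for the complexity bound I would argue monotone convergence of the fixed-point iteration of Lem.~\ref{lem:reach-upper}. One starts from $T_0 \equiv k+1$ (with $T_0(s) = \fr(s)$ on sinks) and iteratively applies the one-step averaging operator from Def.~\ref{def:average}, which is monotone non-increasing on the finite lattice $([k]\cup\{k+1\})^V$ ordered pointwise. Each vertex can take at most $2k+2$ distinct values in $[k] \cup \{k+1\}$, so the total number of value changes across all vertices throughout the computation is $O(|V|\cdot k)$, and each change can be processed in amortized constant work by maintaining the current $v^+, v^-$ for each vertex under a standard worklist. Since $k$ is encoded in binary, this bound is exponential in the input size $|V|+|E|+\log k$. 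The only subtlety worth flagging is that Thm.~\ref{thm:non-unique} shows the average-property function is not unique in general, so one must be careful that the particular $T$ used for the upper bound is the one produced by the fixed-point computation (which is the pointwise largest such function); this is exactly why combining with Lem.~\ref{lem:reach-lower}, applied to this specific $T$, collapses the two bounds to an equality.
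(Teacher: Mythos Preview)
Your proposal is correct and follows essentially the same route as the paper: take the fixed-point function $T$ from Lem.~\ref{lem:reach-upper} (which gives $T \geq \thresh_\G$ and satisfies the average property), apply Lem.~\ref{lem:reach-lower} to the same $T$ to obtain $T \leq \thresh_\G$, and conclude $T \equiv \thresh_\G$; the running-time bound comes from the monotone descent of the fixed-point iteration over the $O(k)$ possible values per vertex. Your remark about non-uniqueness and why the specific fixed-point $T$ is needed is exactly the point the paper is making implicitly.
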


A frugal-safety objective is dual to a frugal-reachability objective. 
 Thus, if \(\thresh: V \rightarrow [k] \cup \{k+1\}\) is the function providing the threshold budgets for the reachability player, then the complement of \(\thresh\) (Definition~\ref{def: ComplimentofT}), denoted by \(\invT{\thresh}\), provides the threshold budget for the safety player. 
 Therefore, we conclude the following 
  
\begin{corollary}\label{corr: frugalSafety}
	Consider a frugal-safety discrete-bidding game \(\G = \zug{V, E, k, S, \fr}\). Threshold budgets exist and satisfy the average property. 
	Namely, there exists a function \(\thresh: V \rightarrow [k] \cup \{k+1\}\) such that for every vertex $v \in V$ 
	\begin{itemize}
		\item if \PO's budget is $B \geq \thresh(v)$, then \PO wins the game, and
		\item if \PO's budget is $B < \thresh(v)$, then \PT wins the game
	\end{itemize}
	Moreover, there is an algorithm to compute $\thresh$ that runs in time $O(|V| \cdot k)$, which is exponential in the size of the input when $k$ is given in binary.
\end{corollary}

\begin{remark}
We point to a conceptual similarity with {\em concurrent stochastic games}~\cite{Everett57}, in which in each turn, both players concurrently choose actions, and the joint action gives rise to a probability distribution over next states. There, the value of the game is given by the least fixed point of what is referred to as a \emph{value mapping} function~\cite{Everett57,FV12}. 
One can adapt their operator to our setting, thereby obtaining a operator that maps functions from states to budgets. Intuitively, each such function would be a candidate for the thresholds, and functions that satisfy the average property are the fixed points. As we show above, the maximal fixed point is the function that coincides with the thresholds. 
\stam{
On the other hand, structurally there is a similarity in between our algorithm for computing the threshold budgets and how values of a concurrent stochastic reachability games are computed~\cite{Everett57, FV12}. 
There, the values are given by the least fixed point of what is referred to as a \emph{value mapping} function~\cite{Everett57}. 
In our case this value mapping function maps one candidate for a function computing threshold budgets to another candidate of the same.
}
\end{remark}

\section{A Fixed-Point Algorithm for Finding Threshold Budgets}
In this section, we develop a fixed-point algorithm for finding threshold budgets in frugal-parity discrete-bidding games. While its worst-case running time is exponential in the input, the algorithm shows, for the first time, that threshold budgets in parity discrete-bidding games satisfy the average property. This property is key in the development of the NP and coNP algorithm (Sec.~\ref{sec:NP and coNP}).

\subsection{Warm up: a fixed-point algorithm for B\"uchi bidding games}
\label{sec:Buchi}
In this section, we illustrate the ideas of the fixed-point algorithm on the special case of B\"uchi games. The transition from B\"uchi to parity games involves an induction on the parity indices. A B\"uchi game is $\zug{V, E, k, F}$, where $F \subseteq V$ is a set of {\em accepting states}. Formally, B\"uchi games are a special case of parity games in which the vertices are labeled by $\set{0,1}$. We stress that, for ease of presentation, we focus on games without a frugal objective.

Throughout this section we will take the perspective of \PO, the co-B\"uchi player, whose objective is to visit $F$ only finitely often. We present an algorithm that takes as input a B\"uchi game $\G$ and outputs \PO's thresholds, which we denote by $\coTh$. That is, for an initial configuration $\zug{v, B}$, we have:
\begin{itemize}
\item if $B \geq \coTh(v)$, \PO can guarantee that $F$ is visited only {\em finitely} often, and
\item if $B < \coTh(v)$, \PT can guarantee that $F$ is visited {\em infinitely} often.
\end{itemize}

The fixed-point algorithm repeatedly finds thresholds in a sequence of increasingly easier objectives (from \PO's perspective) whose limit is the co-B\"uchi objective. 
Roughly, recall that the co-B\"uchi objective requires that $F$ is eventually not visited without specifying a bound on the number of visits to $F$. The objective $\safe_i$, for $i \geq 0$, introduces a restriction: $F$ can be visited at most~$i$ times. In particular, $\safe_0$ is a safety game. Formally, 

\begin{definition}[Bounded-eventual safety objectives]
For $i \geq 0$, the objective $\safe_i$ contains infinite paths that 
\begin{itemize}
\item start in $V \setminus F$ and enter $F$ at most $i$ times before exiting \(F\) eventually, or
\item start in $F$, exit $F$ for the first time at some point, and then enter $F$ at most $i-1$ more times before eventually exiting \(F\)
\end{itemize}
\end{definition}
The formal definition of when a path enters $F$ can be found in Sec.~\ref{sec:prelim-obj}. Note that we define $\safe_0$ so that every path that starts in $F$ violates $\safe_0$.

For $i \geq 0$, we denote by $\thresh_i: V \rightarrow [k] \cup \{k+1\}$ the threshold for the objective $\safe_i$. We make two observations. 

\begin{observation}
\label{obs:Buchi}
For $i \geq 0$ and $v \in \notF$, we have:
\begin{itemize}
\item $\thresh_i(v) \geq \coTh(v)$, and  
\item $\thresh_i(v) \geq \thresh_{i+1}(v)$.
\end{itemize}
\end{observation}
\begin{proof}
First, ensuring the \coBuchi objective (i.e., entering $F$ only finitely often) is easier than ensuring $\safe_i$ (i.e., entering $F$ at most $i$ times), meaning that more budget is necessary for $\safe_i$ than for \coBuchi, thus $\thresh_i(v) \geq \coTh(v)$. Second, similarly, since the restriction imposed by $\safe_i$ is harder than the restriction imposed by $\safe_{i+1}$, more budget is required for the latter, thus $\thresh_i(v) \geq \thresh_{i+1}(v)$.
\end{proof}

It follows that the sequence $\thresh_0,\thresh_1,\ldots$ of thresholds reaches a fixed point. We will show that the fixed point coincides with $\coTh$.

\subsubsection{A recursive algorithm to compute thresholds for $\safe_i$}
We describe a recursive algorithm to compute $\thresh_i$, for $i \geq 0$. The idea is to characterize $\thresh_i$ as thresholds in two bidding games: a frugal-reachability game $\R_i$ and a frugal-safety game $\S_i$. Throughout this section, we follow the convention of using $v$ to denote a vertex in $V \setminus F$ and $u$ to denote a vertex in $F$. 

\paragraph*{Base case.}
Recall that $\safe_0$ is a safety objective: \PO wins by ensuring that $F$ is not visited at all. In particular, paths that start from $F$ are losing for \PO, thus as in Remark~\ref{rem:losing-vertices}, we have the following.

\begin{lemma}
For $u \in F$, we have $\thresh_0(u) = k+1$. 
\end{lemma}

\paragraph*{Recursive step.}

Note that, in the base case, we have only computed \(\thresh_0(u)\) for \(u \in F\), and not \(\thresh_0(v)\) for \(v \in V \setminus F\). 
Therefore, in the general recursive step, we assume \(\thresh_i(u)\) have already been computed for every \(u \in F\) (recursive step hypothesis) and here we first compute \(\thresh_i(v)\) for \(v \in V \setminus F\), which is followed by the computation of \(\thresh_{i+1}(u)\) for \(u \in F\). 
 
We first characterize the thresholds in vertices in $\notF$ as thresholds in a frugal-safety game. 
Let $i \geq 0$ and suppose that $\thresh_i$ has been computed for vertices in $F$. Recall that for $u \in F$, a budget of $\thresh_i(u)$ is the threshold to ensure the objective of exiting $F$ and visiting $F$ at most $i-1$ more times. Suppose that the game starts in $v \in \notF$. \PO wins by either not visiting $F$ at all, or if $u \in F$ is reached, \PO's budget should exceed $\thresh_{i-1}(u)$ since he can continue with a strategy that ensures $\safe_{i-1}$. Formally, we characterize $\thresh_i$ in $\notF$ as thresholds in a frugal-safety game played on the same arena as $\G$. 

\begin{lemma}\label{lem: frugalsafetytoSafei}
Consider the frugal-safety game $\S_i = \zug{V, E, k, F, \fr_i}$, where every vertex in \(F\) are sinks, and for each $u \in F$, we have $\fr_i(u) = \thresh_{i}(u)$. 
Then, for every $v \in \notF$, we have $\thresh_i(v) = \thresh_{\S_i}(v)$. 
\end{lemma}

\begin{remark}\label{rem: averagepropertyinherit}
	Note that, in \(\S_i\), each vertex \(u \in F\) are sinks. Thus, \PO's threshold budget at those vertices in that game would be the same as the frugal budget. That is, for \(u \in F\), \(\thresh_{\S_i}(u) = \fr_i(u) = \thresh_i(u)\). Therefore, we indeed have for every vertex \(w \in V\) of \(\S_i\), \(\thresh_i(w) = \thresh_{S_i}(w)\). 
	Since \(\thresh_{S_i}\) satisfies the average property (Corollary~\ref{corr: frugalSafety}), so does \(\thresh_i\). 
\end{remark}

Next, we characterize the thresholds in vertices in $F$ as thresholds in a frugal-reachability game. 
We assume that $\thresh_{i}$ has been computed for vertices in $\notF$. 
Recall that for a vertex $v \in \notF$, a budget of $\thresh_{i}(v)$ is the threshold to ensure the objective of visiting $F$ at most $i$ times. 
Suppose that the game starts in $u \in F$. 
In order to ensure $\safe_{i+1}$, \PO must first force the game out of $F$ and then ensure that $F$ is visited at most $i$ more times. This is achieved by ensuring that some $v \in \notF$ is reached with a budget of $\thresh_{i}(v)$. Formally, we characterize $\thresh_{i+1}$ in $F$ as thresholds in a frugal-reachability game played on the same arena as $\G$.

\begin{lemma}\label{lem: frugalreachtoSafei}
	Consider the frugal-reachability game $\R_{i+1} = \zug{V, E, k, \notF, \fr_{i+1}}$, where every vertex in \(V \setminus F\) are sinks and for each $v \in \notF$, we define $\fr_{i+1}(v) = \thresh_{i}(v)$. 
	Then, for every $u \in F$, we have $\thresh_{i+1}(u) = \thresh_{\R_{i+1}}(u)$. 
\end{lemma}

\paragraph*{Pseudocode} 
We conclude this section with a pseudocode of the fixed-point algorithm. See Fig.~\ref{fig:Buchi} for a depiction of its operation. 
The algorithm calls two sub-routines $\Calli{Frugal-Reachability}$ and $\Calli{Frugal-Safety}$, which return the thresholds for all vertices that are not targets, respectively, in a frugal-reachability and frugal-safety game, e.g., by running the fixed-point algorithm described in Lemma~\ref{lem:reach-upper}.



\let\oldnl\nl
\newcommand{\nonl}{\renewcommand{\nl}{\let\nl\oldnl}}
\SetKwProg{Prog}{Algorithm}{}{}
\SetKwRepeat{Dowhile}{do}{while}%
\SetKwComment{Comment}{}{}
\SetProcNameSty{textsc}
 \begin{algorithm}[ht]
\caption{A fixed-point algorithm to find threshold budgets in co-B\"uchi games.}
\label{alg:Frugal-Buchi}
\nonl\Prog{co-B\"uchi-Thresholds($\G$)}{}
 $i := 0$\;
 Define the frugal-safety game $\S_0 = \zug{V, E, k, F, \fr_0}$, with $\fr_0 \equiv k+1$.\;
 $\thresh_{\S_0} = \Call{Frugal-Safety}{\S_0}$\;
 Define $\thresh_0(v) = \thresh_{\S_0}(v)$, for $v \in V \setminus F$, and $\thresh_0(u) = k+1$, for $u \in F$.\;
\Dowhile{$\thresh_{i-1} \neq \thresh_i$}{
 $i := i+1$\;
 Define $\R_i = \zug{V, E, k, V \setminus F, \thresh_{\S_{i-1}}}$.\;
 $\thresh_{\R_i} := \Call{Frugal-Reachability}{\R_i}$ \Comment{Thresholds for vertices in $F$.}
 Define $\S_i = \zug{V, E, k, F, \thresh_{\R_i}}$\;
 $\thresh_{\S_i} := \Call{Frugal-Safety}{\S_i}$ \Comment{Thresholds for vertices in $V \setminus F$.}
 Define $\thresh_i(v) = \thresh_{\S_i}(v)$, for $v \in V \setminus F$, and $\thresh_i(u) = \thresh_{\R_i}(u)$, for $u \in F$.\;
 }
\end{algorithm}

\begin{figure}
  \begin{subfigure}[t]{0.95\textwidth}
    \begin{center}
       	\includegraphics[width=0.6\textwidth]{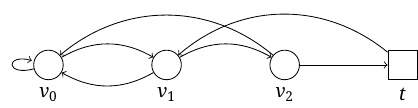}
    \end{center}

			
	\end{subfigure}

	\bigskip
	
	\begin{subfigure}[t]{0.95\textwidth}
          \begin{center}
             	\includegraphics[width=0.9\textwidth]{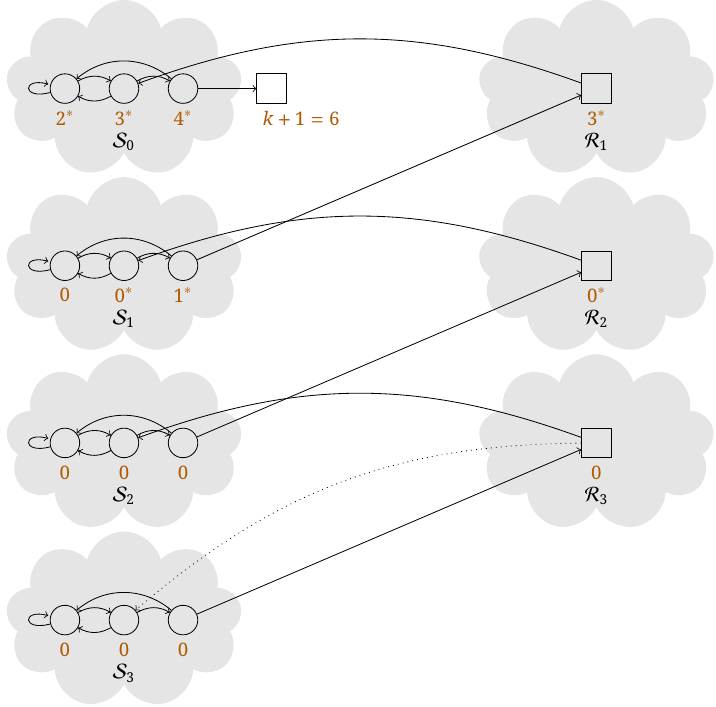}
              \end{center}
	\end{subfigure}
	\caption{A depiction of the fixed-point algorithm for the \coBuchi game on top of the figure. The goal of \PO (the \coBuchi player), is to visit $t$ only finitely often. Set the total budget to $k=5$. The lower part of the figure depicts the progress of thresholds, depicted in orange.
	For example, the game $\S_1$ is a frugal-safety game in which the safety player wins either if the game never reaches $t$ or if it reaches $t$, his budget is at least $3^*$. The frugal-reachability games have a trivial state space. The threshold in $t$ in $\R_i$ coincides with $v_1$ in $\S_{i-1}$ since the reachability player guarantees reaching the target $v_1$ with a sufficient budget by bidding $0$.
The algorithm terminates once a fixed-point is reached. In this example, we see that the thresholds in \(\G\) are \(0\) in all vertices.}
\label{fig:Buchi}
\end{figure}

\subsubsection{The fixed point coincides with $\coTh$}
As mentioned above, the sequence $\thresh_0,\thresh_1,\ldots$ of thresholds reaches a fixed point. We show that the fixed-point threshold coincides with the co-B\"uchi thresholds. 

\begin{theorem}
\label{thm:correctness-Buchi}
Consider a B\"uchi bidding game $\G$. For $i \geq 0$, let $\thresh_i$ be the threshold for satisfying the objective $\safe_i$, and $n \in \Nat$ be such that $\thresh_n = \thresh_{n+1}$. Then, $\thresh_n$ coincides with the thresholds for the co-B\"uchi player, namely $\thresh_n = \coTh$. Moreover, $\coTh$ satisfies the average property and computing it can be done in time $O\big((|V| \cdot k)^2\big)$.
\end{theorem}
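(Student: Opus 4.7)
The plan is to prove the two inclusions $\coTh \leq \thresh_n$ and $\thresh_n \leq \coTh$ separately, and then harvest the average property and the complexity bound from the iterative construction. For $\coTh \leq \thresh_n$ I would argue directly: applying Lem.~\ref{lem:Buchi-FrReach} and Lem.~\ref{lem:BuchiFrSafety} inductively over $i = 0, 1, \ldots, n$, with Lem.~\ref{lem:Buchi-base} as the base case, shows that $\thresh_i$ is genuinely the threshold for the objective $\safe_i$. Hence if $B \geq \thresh_n(v)$, then \PO has a strategy that enters $F$ at most $n$ times; a fortiori $F$ is visited only finitely often, so \PO wins co-B\"uchi and $\coTh(v) \leq B$.

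The hard direction $\thresh_n \leq \coTh$ will rely crucially on the fixed-point assumption. The key observation is that $\thresh_n = \thresh_{n+1}$ yields a self-consistent pair of thresholds: writing $T_\S$ for $\thresh_{\S_{n+1}} = \thresh_{\S_n}$ on $\notF$ and $T_\R$ for $\thresh_{\R_n}$ on $F$, the frugal-safety game $\S_{n+1}$ with frugal targets $T_\R$ has thresholds $T_\S$, while the frugal-reachability game $\R_n$ with frugal targets $T_\S$ has thresholds $T_\R$. Given $B < \thresh_n(v)$, I would define a \PT strategy by switching: while the token lies in $\notF$, follow \PT's winning strategy in $\S_{n+1}$; while it lies in $F$, follow \PT's winning strategy in $\R_n$. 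The $\S_{n+1}$ sub-strategy drives the play into $F$ at some configuration $\zug{u, B'}$ with $B' < T_\R(u)$. The $\R_n$ sub-strategy then either traps the play in $F$ forever, in which case $F$ is visited infinitely often trivially, or returns it to some $\zug{v', B''}$ with $v' \in \notF$ and $B'' < T_\S(v')$, at which point the argument repeats. In both cases $F$ is visited infinitely often, so \PT wins co-B\"uchi and $\coTh(v) > B$. The hard part will be making the hand-off rigorous: each switch between sub-strategies has to respect the budget invariant that conditions the new sub-strategy's winning guarantee, and it is precisely the self-consistency $B' < T_\R(u)$ and $B'' < T_\S(v')$ delivered by the fixed point that legitimises the switch.

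The average property for $\coTh = \thresh_n$ is then essentially free: Lem.~\ref{lem:Buchi-base} identifies $\thresh_0$ with the threshold of a frugal-safety game, which satisfies the average property by Thm.~\ref{thm:frugal-reach}, and Cor.~\ref{cor:Buchi-thresh-i} propagates the property from $\thresh_{i-1}$ to $\thresh_i$ via two applications of the same theorem. Finally, for the complexity: by Obs.~\ref{obs:Buchi} the sequence $\thresh_0(v), \thresh_1(v), \ldots$ is monotonically non-increasing at every vertex and each value lies in the set $[k+1]$ of size $O(k)$, so the total number of strict decreases across all $|V|$ vertices, and hence the number of iterations before stabilisation, is $O(|V| \cdot k)$. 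Each iteration performs two calls to the $O(|V| \cdot k)$-time frugal-reachability solver of Thm.~\ref{thm:frugal-reach}, giving the claimed $O((|V| \cdot k)^2)$ bound.
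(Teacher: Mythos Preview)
Your proposal is correct and follows essentially the same approach as the paper's proof: the paper likewise invokes Obs.~\ref{obs:Buchi} for the inequality $\coTh \leq \thresh_n$, constructs the identical switching strategy for \PT (alternating between the reachability-player's winning strategy in $\S_{n+1}$ and the safety-player's winning strategy in $\R_n$, using $\thresh_{\S_n} = \thresh_{\S_{n+1}}$ at each restart) for the reverse inequality, and derives the average property and the $O((|V|\cdot k)^2)$ bound from Cor.~\ref{cor:Buchi-thresh-i} and the monotonicity in Obs.~\ref{obs:Buchi} exactly as you do.
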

\begin{proof}
We show that $\thresh_n = \coTh$. First, $\thresh_n \geq \coTh$ follows immediately from Observation~\ref{obs:Buchi}; recall that $\thresh_n$ is the threshold for the objective $\safe_n$,  which is harder for \PO to ensure than the \coBuchi objective. 

Second, we show that $\thresh_n \leq \coTh$. 
Consider a vertex $v \in \notF$. We show that \PT, the B\"uchi player, wins from a configuration $\zug{v, B}$ with $B < \thresh_n(v) = \thresh_{n+1}(v)$. The case of an initial vertex in $F$ is also captured in the following proof. \PT plays as follows from $\zug{v, B}$. Recall that $\thresh_{n+1}(v) = \thresh_{\S_{n+1}}(v)$. Thus, a budget of $B < \thresh_{\S_{n+1}}(v)$ means that the reachability player wins the frugal-safety game $\S_i$. \PT follows the reachability player's strategy to ensure that $F$ is eventually reached with a budget below the frugal-target budget. Formally, a configuration $\zug{u, B'}$ is reached with $u \in F$ and $B' < \thresh_{\R_{n}}(u)$. Next, a budget of $B'$ suffices for the safety player to win the frugal-reachability game $\R_i$ and ensures that either: (1)~$F$ is never exited, or (2)~$\notF$ is visited with a budget that violates the frugal-target budget. \PT follows such a winning strategy to ensure either (1)~$F$ is never exited thus is clearly visited infinitely often or (2)~a configuration $\zug{v', B''}$ is reached with $v' \in \notF$ and $B'' < \thresh_{\S_{n}}(v')$. Since $\thresh_{\S_{n}}(v') = \thresh_{\S_{n+1}}(v')$, \PT restarts her strategy from $v'$. Thus, in both cases \PT guarantees infinitely many visits to $F$, and we are done. 

Finally, the thresholds satisfy the average property since so does each \(\thresh_i\) (see Remark~\ref{rem: averagepropertyinherit}). 
Regarding running time, note that the thresholds observed in a vertex are monotonically decreasing (Observation~\ref{obs:Buchi}), thus the number of iterations until a fixed-point is reached is $O(|V| \cdot k)$. Each iteration includes two solutions of frugal-reachability games, each of running time $O(|V| \cdot k)$ (Theorem~\ref{thm:frugal-reach}). 
\end{proof}

\subsection{A fixed-point algorithm for frugal-parity bidding games}
In this section, we extend the fixed-point algorithm developed in Sec.~\ref{sec:Buchi} to parity bidding games. The algorithm involves a recursion over the parity indices, which we carry out by strengthening the induction hypothesis and developing an algorithm for frugal-parity objectives instead of the special case of parity objectives. 

For the remainder of this section, fix a frugal-parity game $\G = \zug{V, E, k, p, S, \fr}$. Denote the maximal parity index by $d \in \Nat$. 
Recall that $S$ is a set of sinks where the parity indices are not defined and $\fr(s)$ denotes \PO's frugal target budget at $s \in S$. Thus, \PO wins a play $\pi$ if 
\begin{itemize}
\item $\pi$ is infinite and satisfies the parity condition, or 
\item $\pi$ is finite and ends in a configuration $\zug{s, B}$ with $s \in S$ and $B \geq \fr(s)$.
\end{itemize}

We characterize the thresholds in $\G$ by reasoning about games with a lower parity index. This characterization gives rise to a recursive algorithm to compute the thresholds.

\begin{lemma}[Base case]
Let $\G = \zug{V, E, p, S, \fr}$ with only one parity index, i.e., $p(v) = d$, for all $v \in V$ and \(S\) is the set of sinks for which frugal-budgets are given by \(\fr\). 
\begin{itemize}
\item Assume that $d$ is odd. Let $\S = \zug{V, E, S, \fr}$ be a frugal-safety game. Then, $\thresh_\G \equiv \thresh_\S$. 
\item Assume that $d$ is even. Let $\R = \zug{V, E, S, \fr}$ be a frugal-reachability game. Then, $\thresh_\G \equiv \thresh_\R$. 
\end{itemize}
\end{lemma}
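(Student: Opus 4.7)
The plan is a straightforward definitional unfolding: I will show that the set of \PO-winning plays in $\G$ coincides with the set of winning plays in the corresponding frugal-safety or frugal-reachability game. Since the two games are played on the same arena with the same total budget $k$, equality of the winning sets immediately yields equality of the threshold budgets at every vertex.

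First I would record one preliminary observation. Because $S$ is a set of sinks (disjoint from $V$) and the labeling $p$ is defined on $V$, every play $\pi$ falls into exactly one of two types: either (i) $\pi$ terminates in some configuration $\zug{s, B}$ with $s \in S$, or (ii) $\pi$ stays forever in $V$, in which case $\inf(\pi) \subseteq V$ and hence $\max_{v \in \inf(\pi)} p(v) = d$.

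For the odd case, every infinite play of type (ii) satisfies the parity condition (because $d$ is odd), so it is winning for \PO. A finite play of type (i) is winning for \PO iff its terminal configuration $\zug{s, B}$ satisfies $B \geq \fr(s)$. Rewriting the disjunction, \PO wins $\pi$ iff either $\pi$ never reaches $S$, or $\pi$ reaches $S$ with sufficient budget. This is exactly the winning condition of the frugal-safety game $\S$, hence $\thresh_\G \equiv \thresh_\S$. For the even case, every infinite play of type (ii) violates the parity condition (because $d$ is even) and is therefore losing for \PO; so \PO wins $\pi$ iff $\pi$ is of type (i) and its terminal configuration $\zug{s,B}$ satisfies $B \geq \fr(s)$. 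This matches the winning condition of the frugal-reachability game $\R$ verbatim, and hence $\thresh_\G \equiv \thresh_\R$.

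There is no real obstacle here; the lemma serves as a clean base case for the induction on parity indices. The only care required is to verify that the finite/infinite dichotomy on plays lines up correctly with the sink/non-sink dichotomy on vertices, which is immediate from the definition of $S$ as a set of sinks and of $p$ as being defined only on $V$.
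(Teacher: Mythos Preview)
Your proposal is correct and follows essentially the same approach as the paper: both argue that finite plays ending in a sink are winning iff the frugal target is met, while infinite plays (which stay in $V$ and thus have maximal parity index $d$) are winning for \PO precisely when $d$ is odd, so the winning sets of $\G$ and of $\S$ (resp.\ $\R$) coincide. Your write-up is slightly more explicit about the finite/infinite dichotomy lining up with the sink/non-sink dichotomy, but the argument is the same.
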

\begin{proof}
Clearly, in both cases, a finite play that ends in a sink is winning in $\G$ iff it is winning in $\S$, and similarly for $\R$. 
When $d$ is odd, any infinite play in $\G$ is winning for \PO, thus $\G$ is a frugal-safety game. On the other hand, when $d$ is even, any infinite play in $\G$ is losing for \PO, and the only way to win is by satisfying the frugal objective in a sink, thus $\G$ is a frugal-reachability game. 
\end{proof}

\begin{corollary}
When $\G$ contains only one parity index, computing $\thresh_\G$ can be done by calling a sub-routine that finds the thresholds in a frugal-reachability (or a frugal-safety) bidding game. Moreover, by Theorem~\ref{thm:frugal-reach}, $\thresh_\G$ satisfies the average property in this case. 
\end{corollary}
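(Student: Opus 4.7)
The corollary follows almost immediately by combining the preceding lemma with Thm.~\ref{thm:frugal-reach} and Lem.~\ref{lem:flippedaverage}. The plan is to address the two assertions (reduction to a sub-routine call, and the average property) in turn, splitting into the case $d$ odd versus $d$ even.

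First, I would note that the preceding lemma directly establishes $\thresh_\G \equiv \thresh_\S$ when the unique parity index $d$ is odd, and $\thresh_\G \equiv \thresh_\R$ when $d$ is even, where $\S$ and $\R$ are the frugal-safety and frugal-reachability games on the same arena as $\G$, with the same sinks $S$ and the same frugal target function $\fr$. Consequently, a single call to a sub-routine solving a frugal-reachability bidding game (in the even case) or a frugal-safety bidding game (in the odd case) suffices to compute $\thresh_\G$. This gives the first claim.

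For the average property, the even case is immediate from Thm.~\ref{thm:frugal-reach} applied to $\R$: the threshold function of a frugal-reachability bidding game satisfies the average property by definition of that theorem, and in our setting $\thresh_\G = \thresh_\R$. For the odd case, I would argue via the ``flip'' of Lem.~\ref{lem:flippedaverage}: consider the associated frugal-reachability game obtained by swapping the roles of the two players in $\S$, so that \PT becomes the reachability player whose targets are the sinks $S$ with frugal target budgets $k^* \ominus (\predb{\fr(s)})$. By Thm.~\ref{thm:frugal-reach}, the threshold of this flipped game satisfies the average property. Applying Lem.~\ref{lem:flippedaverage} to this threshold (viewing $\thresh_\S$ as obtained by the flipping operation $T \mapsto T'$ of that lemma) yields that $\thresh_\S$ itself satisfies the average property, completing the odd case.

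I do not anticipate any real obstacle here: the corollary is essentially a repackaging of the preceding lemma together with an invocation of Thm.~\ref{thm:frugal-reach}. The only point requiring minor care is the odd case, where one must route through the flipping lemma rather than appeal to Thm.~\ref{thm:frugal-reach} directly, since the theorem is stated for frugal-reachability rather than frugal-safety games.
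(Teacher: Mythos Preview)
Your proposal is correct and matches the paper's approach: the corollary is immediate from the preceding base-case lemma together with Thm.~\ref{thm:frugal-reach}, and you are right to route the odd (frugal-safety) case through Lem.~\ref{lem:flippedaverage}, which the paper's terse statement leaves implicit but uses elsewhere in the same way.
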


\paragraph*{Recursive step}
Suppose that more than one parity index is used. Let $d \in \Nat$ denote the maximal parity index in $\G$. 
We assume access to a sub-routine that computes thresholds in frugal-parity games with a maximal parity index of $d-1$, and we describe how to use it in order to  compute thresholds in $\G$. We assume that $d$ is even, and we describe the algorithm from \PO's perspective. 
The definition for an odd $d$ is dual from \PT's perspective.

Let $F_d = \set{v: p(v) = d}$. Since $d$ is even, a play that visits $F_d$ infinitely often is losing for \PO. Thus, a necessary (but not sufficient) requirement to win is to ensure that $F_d$ is visited only finitely often. For example, a B\"uchi game can be modeled as follows: \PO is the co-B\"uchi player, the parity indices are $1$ or $2$, and the set $F_2$ denotes the accepting vertices, which \PO needs to visit only finitely often. 

We define a bounded variant of the frugal-parity objective, similar to the definition of $\safe_i$ in Sec.~\ref{sec:Buchi}:
\begin{definition}
For $i \geq 0$, a play $\pi$ is in $\frparity_i$ if:
\begin{itemize}
\item $\pi$ is finite and satisfies the frugal objective: ends in $\zug{s, B}$ with $s \in S$ and $B \geq \fr(s)$, or
\item $\pi$ is infinite, satisfies the parity objective, and 
\begin{itemize}
\item starts from $\notFd$ and enters $F_d$ at most $i$ times before eventually exiting, or
\item starts from $F_d$, exits $F_d$ for the first time at some point, and then enters $F_d$ at most $i-1$ more times before exiting eventually. 
\end{itemize}
\end{itemize}
\end{definition}
In particular, every path that starts from $F_d$ violates $\frparity_0$. 

For $i \geq 0$, we denote by $\thresh_i$ the thresholds for objective $\frparity_i$. As in Observation~\ref{obs:Buchi}, since the restriction monotonically decreases as $i$ grows, the thresholds are monotonically non-increasing and they all lower-bound the thresholds in $\G$.

\begin{observation}
\label{obs:parity}
For $i \geq 0$, we have $\thresh_{i+1} \leq \thresh_i$ and $\thresh_{\G} \leq \thresh_i$. 
\end{observation}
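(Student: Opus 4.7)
The plan is to derive both inequalities from the set inclusions $\frparity_i \subseteq \frparity_{i+1}$ and $\frparity_i \subseteq \Obj_\G$, where $\Obj_\G$ denotes the full frugal-parity objective of $\G$. Once these inclusions are established, any \PO strategy $f$ that is winning from a configuration $\zug{v, B}$ for $\frparity_i$ is automatically winning for $\frparity_{i+1}$ and for $\Obj_\G$ using the same bids and moves, which immediately yields $\thresh_{i+1}(v) \leq \thresh_i(v)$ and $\thresh_\G(v) \leq \thresh_i(v)$.

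For the first inclusion, consider a play $\pi \in \frparity_i$. If $\pi$ is finite and ends in $\zug{s, B}$ with $s \in S$ and $B \geq \fr(s)$, then $\pi \in \frparity_{i+1}$ by the first clause. Otherwise, $\pi$ is infinite, satisfies the parity objective, and either starts in $\notFd$ and enters $F_d$ at most $i \leq i+1$ times, or starts in $F_d$ and re-enters $F_d$ at most $i - 1 \leq i$ more times. In both sub-cases, the weaker bound of $\frparity_{i+1}$ is met, so $\pi \in \frparity_{i+1}$.

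For the second inclusion, a finite play in $\frparity_i$ satisfies the frugal clause of $\Obj_\G$ verbatim, while an infinite play in $\frparity_i$ satisfies the parity condition by definition, and hence lies in $\Obj_\G$. The only step that requires a moment's care is that the bound on entries to $F_d$ in $\frparity_i$ really does force only finitely many visits to $F_d$, but this is immediate since $i$ is finite.

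There is no substantive obstacle here: the observation is essentially a semantic unfolding of the definitions, and the transfer of winning strategies is the standard monotonicity argument (a strategy winning against a stronger specification wins against any weaker one). The only reason to spell it out is to fix the direction of the inequalities (smaller $i$ means a stricter specification, which costs more budget) before using the sequence $(\thresh_i)_i$ in the subsequent fixed-point argument.
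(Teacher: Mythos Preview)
Your proposal is correct and matches the paper's own justification, which is simply the one-line remark that ``since the restriction monotonically decreases as $i$ grows, the thresholds are monotonically non-increasing and they all [upper-]bound the thresholds in $\G$'' (mirroring Obs.~\ref{obs:Buchi}). You have just spelled out the underlying set inclusions $\frparity_i \subseteq \frparity_{i+1} \subseteq \Obj_\G$ and the standard monotonicity of thresholds under objective inclusion, which is exactly the intended argument.
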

It follows that the sequence of thresholds reaches a fixed-point, and we will show that the thresholds at the fixed point coincide with $\thresh_\G$. 

We iteratively define and solve two sequences of games: a sequence of frugal-parity games $\G_0,\G_1,\ldots$ each with maximal parity index $d-1$ and a sequence of frugal-reachability games $\R_0,\R_1,\ldots$. For $i \geq 0$, recall that $\thresh_{\G_i}$ and $\thresh_{\R_i}$ respectively denote the thresholds in $\G_i$ and $\R_i$. We will show that $\thresh_i$ can be characterized by $\thresh_{\G_i}$ and $\thresh_{\R_i}$: we will show that for $v \in F_d$ we have $\thresh_i(v) = \thresh_{\G_i}(v)$ and for $u \in \notFd$, we have $\thresh_i(u) = \thresh_{\R_i}(u)$. 

We start with the frugal-parity games. The games share the same arena, which is obtained from $\G$ by setting the vertices in $F_d$ to be sinks. 
The games differ in the frugal target budgets. Formally, for $i \geq 0$, we define $\G_i = \zug{V, E', p', S', \fr_{\G_i}}$, where the sinks are $S' = S \cup F_d$, the edges are restricted accordingly $E' = \set{\zug{v, v'} \in E: v \in \notFd}$, the parity function $p'$ coincides with $p$ but is not defined over $F_d$, i.e., $p'(v) = p(v)$ for all $v \in V \setminus F_d$, and $\fr_{\G_i}$ is the only component that changes as \(i\) changes, and it is defined below based on a solution to $\R_i$.   Note that $p'$ assigns at most $d-1$ parity indices.

We construct the frugal-reachability games. Let $i \geq 0$. Intuitively, the game $\R_i$ starts from $F_d$ and \PO's goal is to either satisfy the frugal objective in $S$ or reach $\notFd$ with a budget that suffices to ensure that $F_d$ is entered at most $i$ more times. Formally, we construct the frugal-reachability game $\R_i = \zug{V, E'', \notFd \cup S, \fr_{\R_i}}$, where $E'' = \set{\zug{u, u'} \in E: u \in F_d}$ and 
\[
\fr_{\R_i}(v) = \begin{cases}\fr(v) & \text{ if } v \in S\\ \thresh_{\G_i}(v) & \text{ if } v \in \notFd. \end{cases}
\]


\begin{lemma}
\label{lem:parity-FrReach}
Let $i \geq 0$. Assume that for every $v \in V \setminus F_d$, we have $\thresh_{\G_i}(v) = \thresh_i(v)$. Then, for every $u \in F_d$, we have $\thresh_i(u) = \thresh_{\R_i}(u)$. 
\end{lemma}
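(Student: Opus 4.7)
The plan is to mirror the argument of Lem.~\ref{lem:Buchi-FrReach} in the frugal-parity setting. I would prove both inequalities $\thresh_i(u) \leq \thresh_{\R_i}(u)$ and $\thresh_i(u) \geq \thresh_{\R_i}(u)$ for every $u \in F_d$ by composing a winning strategy in $\R_i$ with a winning strategy for $\frparity_i$ (or its complement) at the exit vertex, exactly as was done in the B\"uchi case.

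For the upper bound, suppose \PO's budget at $u \in F_d$ is $B \geq \thresh_{\R_i}(u)$. First \PO plays a winning strategy in $\R_i$. Since the edges of $\R_i$ originate only from $F_d$ and its sinks are $\notFd \cup S$, in finite time the play reaches a configuration $\zug{w, B'}$ with either $w \in S$ and $B' \geq \fr(w)$ -- in which case $\pi$ already satisfies the frugal objective, hence $\frparity_i$ -- or $w \in \notFd$ and $B' \geq \fr_{\R_i}(w) = \thresh_{\G_i}(w) = \thresh_i(w)$ by the lemma's hypothesis. In the latter case \PO switches to a $\frparity_i$-winning strategy from $\zug{w, B'}$. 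Crucially, the entire $\R_i$-phase stays inside $F_d$, so no new entries to $F_d$ occur during that phase, and the residual requirements on entries, on the parity condition, and on the frugal target are all delegated to the winning strategy from $\zug{w, B'}$.

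For the lower bound, suppose $B < \thresh_{\R_i}(u)$. By Thm.~\ref{thm:frugal-reach}, \PT has a winning strategy in $\R_i$; she follows it. The resulting play either stays in $F_d$ forever, in which case the maximum parity visited infinitely often is $d$ -- which is even, so the parity condition fails and $\frparity_i$ is violated -- or it reaches a sink $\zug{w, B'}$ with $w \in \notFd \cup S$. If $w \in S$, then $B' < \fr(w)$ and the frugal objective is violated. Otherwise $w \in \notFd$ and $B' < \thresh_{\G_i}(w) = \thresh_i(w)$, and \PT switches to her winning strategy against $\frparity_i$ from $\zug{w, B'}$, which violates $\frparity_i$ in the continuation.

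The main subtlety I expect is bookkeeping the entries to $F_d$ across the two phases: during the $\R_i$-phase the play starts in $F_d$ and exits \emph{without} generating any new entries, so the residual number of entries that $\frparity_i$ allows from $u$ should match precisely what $\thresh_i(w)$ budgets from $w \in \notFd$. Since the strategies from $\zug{w, B'}$ are tuned to $\frparity_i$ starting at $\notFd$, the composition goes through provided the indexing is consistent with the inductive set-up of the sequence $\thresh_0, \thresh_1, \ldots$; otherwise the argument mirrors the B\"uchi case almost verbatim.
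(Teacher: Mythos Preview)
Your proposal is correct and follows essentially the same approach as the paper: compose a winning strategy in $\R_i$ with a winning strategy for $\frparity_i$ (resp.\ its complement) at the exit configuration, handling the three outcomes (stay in $F_d$, reach $S$, reach $\notFd$) separately. You are in fact more explicit than the paper about the entry-count bookkeeping across the two phases; the paper's proof simply invokes the $\frparity_i$-winning strategy from the exit vertex without spelling out the index alignment.
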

\begin{proof}
Suppose that $\G$ starts from $\zug{u, B}$ with $u \in F_d$. We first show that when $B \geq \thresh_{\R_i}(u)$, \PO can ensure the objective $\frparity_i$. Indeed, by following a winning strategy in $\R_i$, \PO guarantees that either (1) the frugal objective is satisfied in $S$, in which case the play is clearly winning in $\G$, or (2) the game reaches a configuration $\zug{v, B'}$ with $v \in \notFd$ and $B' \geq \thresh_{\G_i}(v)$, from which, by the assumption that $\thresh_{\G_i}(v) = \thresh_i(v)$, he can proceed with a winning strategy for $\frparity_{i}$. 

On the other hand, when $B < \thresh_{\R_i}(u)$, \PT violates $\frparity_i$ as follows. She first follows a winning strategy in $\R_i$, which ensures that no matter how \PO plays, the resulting play either (1) violates the frugal objective in $S$, (2) stays in $F_d$, or (3) it reaches a configuration $\zug{v, B'}$ with $v \in \notFd$ and $B' < \thresh_{\G_i}(v) = \thresh_i(v)$. In Cases~(1) and~(2), the play is clearly winning for \PT for violating the objective \(\frparity_{i}\) in \(\G\), and in Case~(3), the assumption on $\thresh_i(v)$ implies that \PT can continue with a strategy that violates $\frparity_{i}$.
\end{proof}

\begin{remark}
	Similar to Remark~\ref{rem: averagepropertyinherit}, here too, we obtain that \(\thresh_i\) satisfies the average property because it coincides with \(\thresh_{R_i}\) which is a function providing threshold budgets in a frugal-reachability game (Theorem~\ref{thm:frugal-reach}).   
\end{remark}

We define the frugal target budgets $\fr_{\G_i}$ of the frugal-parity game $\G_i$. Recall that we obtain $\G_i$ from $\G$ by setting $F_d$ to be sinks. Thus, the sinks in $\G_i$ consist of ``old'' sinks $S$ and ``new'' sinks $F_d$. The frugal target budgets of $\G$ and $\G_i$ agree on $S$, thus for $s \in S$ and $i \geq 0$, we have $\fr_{\G_i}(s) = \fr(s)$. For $u \in F_d$, we define $\fr_{\G_0}(u) = k+1$ and for $i > 0$, we define $\fr_{\G_i}(u) = \thresh_{\R_{i-1}}(u)$. 

\begin{lemma}
	\label{lem:FrParity-i}
	For \(i \geq 0\) and \(u \in F_d\), assume that a budget of \(\fr_{\G_i}(u)\) is the threshold to satisfy \(\frparity_{i}\). Then, for \(v \in \notFd\), we have \(\thresh_i(v) = \thresh_{\G_i}(v)\). 
\end{lemma}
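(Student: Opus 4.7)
The plan is to mirror Lem.~\ref{lem:parity-FrReach} with the roles of $F_d$ and $\notFd$ swapped. Fix a configuration $\zug{v, B}$ with $v \in \notFd$. We will show both inequalities: that $B \geq \thresh_{\G_i}(v)$ suffices for \PO to win $\frparity_i$ in $\G$, and that $B < \thresh_{\G_i}(v)$ suffices for \PT to violate $\frparity_i$ in $\G$. The assumption we use is that for each $u \in F_d$, the frugal target budget $\fr_{\G_i}(u)$ is exactly the $\frparity_i$-threshold at $u$ in $\G$; so whenever a play of $\G$ reaches $\zug{u, B'}$ with $u \in F_d$, whether \PO can still secure $\frparity_i$ from that point is decided by comparing $B'$ to $\fr_{\G_i}(u)$.

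For the upper bound, let $f$ be a winning \PO strategy in $\G_i$ from $\zug{v, B}$, and play it in $\G$. Since $F_d$ is absorbed into the sinks of $\G_i$, the $f$-play in $\G$ agrees with an $f$-play in $\G_i$ up until it either stays in $\notFd$ forever or it first reaches a vertex of $S \cup F_d$. In the first case, the play's parity indices are all at most $d-1$, the parity condition is satisfied because $f$ wins $\G_i$, and $F_d$ is entered $0 \leq i$ times, so $\frparity_i$ holds. If the play reaches $\zug{s, B'}$ with $s \in S$, then $B' \geq \fr_{\G_i}(s) = \fr(s)$, and the frugal objective of $\G$ is met. Finally, if it reaches $\zug{u, B'}$ with $u \in F_d$, then $B' \geq \fr_{\G_i}(u)$, so by hypothesis \PO can switch to a strategy winning $\frparity_i$ from $\zug{u, B'}$ in $\G$.

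For the lower bound, we reason symmetrically: we take a \PT strategy $g$ that violates the $\G_i$-objective from $\zug{v, B}$ and play it in $\G$. Either the play stays forever in $\notFd$ with parity violated (so $\frparity_i$ fails); or it reaches $\zug{s, B'}$ with $s \in S$ and $B' < \fr(s)$ (frugal fails); or it reaches $\zug{u, B'}$ with $u \in F_d$ and $B' < \fr_{\G_i}(u)$, in which case \PT switches to a strategy violating $\frparity_i$ from $\zug{u, B'}$, guaranteed by the hypothesis.

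The main subtlety, and the only non-routine step, is the book-keeping at the moment of transitioning strategies when the play enters $u \in F_d$. By the definition of $\frparity_i$ starting from $F_d$, the continuation from $u$ is permitted at most $i-1$ further entries to $F_d$; combined with the single entry used during the prefix from $v$ to $u$, the total number of entries is at most $i$, which matches the constraint of $\frparity_i$ on plays starting from $\notFd$. The parity-on-infinite-plays condition creates no issue, since it depends only on the tail of the play; an arbitrary finite prefix neither helps nor harms it, so the objective on the whole play reduces to the corresponding objective on the suffix, exactly where the second strategy (whether winning or spoiling) is in force.
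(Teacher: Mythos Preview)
Your proof is correct and follows essentially the same approach as the paper: both directions are obtained by importing a winning strategy from $\G_i$ into $\G$, splitting into the same three cases (infinite play within $\notFd$, reaching $S$, reaching $F_d$), and in the third case invoking the hypothesis to switch to a $\frparity_i$-optimal strategy from $u$. Your explicit book-keeping paragraph about the entry count (one entry used in the prefix plus at most $i-1$ in the suffix) and the tail-dependence of the parity condition spells out a step the paper leaves implicit, but the argument is otherwise the same.
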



\begin{proof}
	Recall that each $\G_i$ agrees with $\G$ on the parity indices in $\notFd$, thus an infinite path that satisfies the parity condition in $\G_i$ satisfies it in $\G$, and that $\G_i$ and $\G$ agree on the frugal target budgets in $S$. 
	
	Under the assumption in the statement, we prove that $\thresh_{i}(v) = \thresh_{\G_i}(v)$, for $v \in \notFd$. 
	Suppose that $\G$ starts from $\zug{v, B}$ with $v \in \notFd$. First, when $B \geq \thresh_{\G_i}(v)$, \PO ensures $\frparity_i$ by following a winning strategy in $\G_i$. Let $\pi$ be the play that is obtained when \PT follows some strategy. Note that $\pi$ is winning for \PO in $\G_i$, thus it satisfies one of the following: 
	\begin{enumerate}
		\item $\pi$ is finite and ends in $\zug{s, B}$ with $s \in S$ and $B \geq \fr_{\G_i}(s) = \fr(s)$, 
		\item $\pi$ is infinite (i.e., a sink is never reached) and satisfies the parity condition,  or 
		\item $\pi$ is finite and ends in $\zug{u, B}$ with $u \in F_d$ and $B \geq \fr_{\G_i}(u)$. 
	\end{enumerate} 
	Case~(1) clearly satisfies the frugal objective of $\frparity_i$, in Case~(2) the parity condition is satisfied without visiting $F_d$ once, thus again, $\frparity_i$ is satisfied. 
	Finally, in Case~(3), once the game reaches $\zug{u, B}$, the assumption on \(\fr_{\G_i}(u)\) implies that \PO can follow a strategy that ensures \(\frparity_{i}\). 
	Second, if $B < \thresh_{\G_i}(v)$, \PT violates $\frparity_i$ by following a winning strategy in $\G_i$. The argument is dual to the above. 
\end{proof}

Note that since every path that starts from $F_d$ violates $\frparity_0$, the threshold budget at every $u \in F_d$ is $k+1$. This constitutes the proof of the base case of the following lemma, and the inductive step is obtained by combining Lemma~\ref{lem:parity-FrReach} with Lemma~\ref{lem:FrParity-i}.

\begin{lemma}
\label{lem:frParityUpper}
For $i \geq 0$, for $v \in V$ we have $\thresh_i(v) = \thresh_{\G_i}(v)$ and for $u \in \notFd$, we have $\thresh_i(u) = \thresh_{\R_i}(u)$. 
\end{lemma}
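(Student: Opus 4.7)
The plan is to prove both equalities of the lemma simultaneously by induction on $i \geq 0$, using Lem.~\ref{lem:parity-FrReach} and Lem.~\ref{lem:FrParity-i} as the two ``halves'' of each inductive step: Lem.~\ref{lem:FrParity-i} turns information about $F_d$ into information about $\notFd$, and Lem.~\ref{lem:parity-FrReach} does the reverse. Chaining them once per iteration propagates the joint equality from level $i-1$ to level $i$.

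For the base case $i = 0$, I would first establish directly that $\thresh_0(u) = k+1$ for every $u \in F_d$: a play that starts in $F_d$ is required by $\frparity_0$ to exit $F_d$ and then re-enter it ``at most $-1$ more times'', which is vacuously false, so \PT wins from every configuration $\zug{u, B}$ with $B \leq k$. Because $\fr_{\G_0}(u) = k+1 = \thresh_0(u)$ by the construction of $\G_0$, the hypothesis of Lem.~\ref{lem:FrParity-i} at level $0$ is satisfied, and applying it gives $\thresh_0(v) = \thresh_{\G_0}(v)$ for every $v \in \notFd$. The hypothesis of Lem.~\ref{lem:parity-FrReach} at level $0$ is now also satisfied, and applying it gives $\thresh_0(u) = \thresh_{\R_0}(u)$ for every $u \in F_d$.

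For the inductive step at $i \geq 1$, I assume the lemma for $i-1$. The inductive hypothesis supplies $\thresh_{\R_{i-1}}(u) = \thresh_{i-1}(u)$ for every $u \in F_d$, while the construction sets $\fr_{\G_i}(u) = \thresh_{\R_{i-1}}(u)$. Together with a short auxiliary argument that the frugal target budgets placed on the $F_d$-sinks of $\G_i$ really coincide with the $\frparity_i$ thresholds from those sinks, this discharges the hypothesis of Lem.~\ref{lem:FrParity-i} at level $i$ and yields $\thresh_i(v) = \thresh_{\G_i}(v)$ for every $v \in \notFd$. The hypothesis of Lem.~\ref{lem:parity-FrReach} is then met, and applying it delivers $\thresh_i(u) = \thresh_{\R_i}(u)$ for every $u \in F_d$, closing the induction.

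The main obstacle is the auxiliary step inside the inductive case, namely verifying $\fr_{\G_i}(u) = \thresh_i(u)$ on $F_d$. The reconciliation relies on the structural observation that any play from $u \in F_d$ that is in $\frparity_i$ is forced to exit $F_d$ immediately, after which the permitted number of further entries into $F_d$ aligns the remaining suffix with exactly the frugal-parity objective whose threshold $\R_{i-1}$ was built to compute, and the frugal target budget of $\R_{i-1}$ on $\notFd$ is in turn $\thresh_{\G_{i-1}} = \thresh_{i-1}$ by the inductive hypothesis. Spelling out this entry-counting correspondence between the ``starts from $F_d$'' and ``starts from $\notFd$'' cases of $\frparity_i$, so that the index shift is absorbed cleanly at the interface between the two recursive games, is the bookkeeping on which the induction hinges.
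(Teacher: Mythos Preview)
Your proposal is essentially the paper's own argument: the paper proves the base case by noting that every path starting in $F_d$ violates $\frparity_0$ (so $\thresh_0(u)=k+1=\fr_{\G_0}(u)$ on $F_d$), and then states that the inductive step is obtained by combining Lem.~\ref{lem:parity-FrReach} with Lem.~\ref{lem:FrParity-i}, exactly as you chain them. Your explicit flagging of the index-shift bookkeeping at the $F_d$/$\notFd$ interface (matching the ``exit then enter at most $i-1$ times'' clause of $\frparity_i$ from $F_d$ against $\frparity_{i-1}$ from $\notFd$, so that $\fr_{\G_i}=\thresh_{\R_{i-1}}$ really equals the $\frparity_i$ threshold on $F_d$) is a point the paper leaves implicit, and being careful there is exactly right.
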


It follows from Observation~\ref{obs:parity} that the sequence $\thresh_0,\thresh_1,\ldots$ reaches a fixed point. We show that at the fixed point, the threshold coincides with $\thresh_\G$. 

\begin{lemma}
Let $n \in \Nat$ such that $\thresh_n = \thresh_{n+1}$. Then, $\thresh_\G = \thresh_n$. 
\end{lemma}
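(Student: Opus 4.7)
The direction $\thresh_\G \leq \thresh_n$ is already provided by Obs.~\ref{obs:parity}. The plan is to establish the reverse inequality by constructing, for every configuration $\zug{v,B}$ with $B < \thresh_n(v)$, a winning \PT strategy in $\G$. The construction mirrors the B\"uchi case in the proof of Thm.~\ref{thm:correctness-Buchi}: \PT stitches together her winning strategies in the auxiliary games $\G_{n+1}$ and $\R_n$, exploiting the fixed-point equality $\thresh_n = \thresh_{n+1}$ to ``restart'' whenever the token returns to $\notFd$ with a sufficiently low budget.

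In more detail, I first reduce to the case $v \in \notFd$: if $v \in F_d$ then by Lem.~\ref{lem:frParityUpper} we have $\thresh_n(v) = \thresh_{\R_n}(v)$, so \PT begins by following her winning strategy in $\R_n$ and the analysis collapses to the one below for the \(\notFd\) case. Assume then $v \in \notFd$, so Lem.~\ref{lem:frParityUpper} yields $B < \thresh_n(v) = \thresh_{\G_n}(v) = \thresh_{\G_{n+1}}(v)$, where the last equality uses the fixed-point property together with the fact that $\thresh_{\R_{n-1}} = \thresh_{\R_n}$ (which itself follows from $\thresh_{\G_n} = \thresh_{\G_{n+1}}$ via the construction of the $\R_i$'s). \PT follows her winning strategy in $\G_{n+1}$ until one of three things occurs: (i) the play stays forever in $\notFd$ and its parity condition is violated there; (ii) the play ends at some $s \in S$ with budget strictly below $\fr(s)$; or (iii) the play reaches some $u \in F_d$ with budget strictly below $\fr_{\G_{n+1}}(u) = \thresh_{\R_n}(u)$. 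In cases~(i) and~(ii) she has already won in $\G$ (the play is either infinite and loses the parity condition without ever touching $F_d$, or violates the frugal objective at an old sink).

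In case~(iii), \PT switches to her winning strategy in $\R_n$ from the current configuration. That strategy guarantees one of: (a) an old sink $s \in S$ is reached with budget below $\fr(s)$ — \PT wins $\G$; (b) the play stays in $F_d$ forever — since $d$ is even, $F_d$ is visited infinitely often, so \PT wins $\G$; or (c) the play reaches a configuration $\zug{v', B'}$ with $v' \in \notFd$ and $B' < \fr_{\R_n}(v') = \thresh_{\G_n}(v') = \thresh_{\G_{n+1}}(v')$. In case~(c), \PT restarts the stitched strategy from $\zug{v', B'}$. Iterating, the resulting infinite play either is eventually trapped in one of the winning cases (i), (ii), (a), (b), or else alternates between $\notFd$ and $F_d$ infinitely often; in the latter event $F_d$ is visited infinitely often and the parity condition fails (since $d$ is even is the maximum), so \PT wins $\G$. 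In every scenario \PT wins, giving $\thresh_n(v) \leq \thresh_\G(v)$.

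The case of $d$ odd is handled dually by swapping the roles of the two players. The main technical obstacle is the bookkeeping of the stitched strategy: one must verify that each ``restart'' lands in a configuration satisfying the same strict inequality $B' < \thresh_{\G_{n+1}}(v')$ that triggered the original application — this is exactly where the fixed-point equality $\thresh_{\G_n} = \thresh_{\G_{n+1}}$ (and the induced $\thresh_{\R_{n-1}} = \thresh_{\R_n}$) is essential, as otherwise the invariant needed to iterate would drift.
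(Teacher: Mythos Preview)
Your proposal is correct and follows essentially the same stitching argument as the paper's proof: \PT alternates between her winning strategies in $\G_{n+1}$ and $\R_n$, and the fixed point lets her restart whenever the token returns to $\notFd$. One small remark: your justification of $\thresh_{\G_n}(v) = \thresh_{\G_{n+1}}(v)$ via $\thresh_{\R_{n-1}} = \thresh_{\R_n}$ is circular as written (and $\thresh_{\R_{n-1}} = \thresh_{\R_n}$ is not actually needed); the equality follows directly from Lem.~\ref{lem:frParityUpper} applied at $i=n$ and $i=n+1$ together with the hypothesis $\thresh_n = \thresh_{n+1}$.
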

\begin{proof}
Lemma~\ref{lem:frParityUpper} and Observation~\ref{obs:parity} show that $\thresh_\G \leq \thresh_n$. To show equality, we show that \PT wins $\G$ starting from a configuration $\zug{v, B}$ with $v \in \notFd$ and $B < \thresh_n(v)$. \PT proceeds by following a winning strategy in $\G_{n+1}$. Let $\pi$ be a play that results from some \PO strategy. Since $\pi$ is winning for \PT in $\G_{n+1}$,  there are three cases: 
\begin{enumerate}
\item $\pi$ is finite and ends in $\zug{s, B'}$ with $s \in S$ and $B' < \fr(s)$, thus it is winning also in $\G$, 
\item $\pi$ is infinite and violates the parity objective, thus since $\G$ and $\G_{n+1}$ agree on the parity indices, $\pi$ is winning for \PT in $\G$, or 
\item $\pi$ ends in $\zug{u, B'}$ with $B' < \fr_{n+1}(u)$. 
\end{enumerate}
In Case~(3), since $\fr_{n+1}(u) = \thresh_{\R_n}(u)$, \PT continues by following a winning strategy for the safety player in $\R_n$. This guarantees that no matter how \PO plays, the play either stays within $F_d$, thus it necessarily violates the parity objective of $\G$, or it reaches $\zug{v, B''}$ with $v \in \notFd$ and $B'' < \fr_{\R_n}(v)$.  In the latter case, since $\fr_{\R_n}(v) = \thresh_n(v) = \thresh_{n+1}(v)$, \PT can restart her strategy. Note that \PT's strategy guarantees that either $F_d$ is eventually never reached, then she wins, or it is reached infinitely often, in which case she also wins since the play visits parity index $d$ infinitely often. 
\end{proof}

\paragraph*{Pseudocode}

The algorithm is described in Alg.~\ref{alg:Frugal-parity} for an even $d$ and from \PO's perspective.



\begin{algorithm}[ht]
  \setcounter{AlgoLine}{0}
\caption{A fixed-point algorithm to find threshold budgets in frugal parity games.}
\label{alg:Frugal-parity}
\nonl\Prog{Frugal-Parity-Threshold($\G = \zug{V, E, k, p, S, \fr}$)}{}
\If{$\G$ uses one parity index $d$}
	{\If{$d$ is odd}
	 {Return \Call{Frugal-Safety}{$\S = \zug{V, E, k, S, \fr}$}\;}
	\Else
	 {Return \Call{Frugal-Reachability}{$\R=\zug{V, E, k, S, \fr}$}\;}
}

 Define $E' = \set{\zug{v, v'} \in E: v \in \notFd}$ and $E'' = \set{\zug{u, u'} \in E: u \in F_d}$.\;
 Define $\G_0 = \zug{V, E', k, p_{|\notFd}, S \cup F_d, \fr_{\G_0}}$ with $\fr_{\G_0}(u) = \begin{cases}\fr(u) & \text{ if } u \in S\\ k+1 & \text{ if } u \in F_d\end{cases}$.\; 
 $\thresh_{\G_0} = \Call{Frugal-Parity-Threshold}{\G_0}$\;
 Define $\R_0 = \zug{V, E'', k, (\notFd) \cup S, \fr_{\R_i}}$.\;
 $\thresh_{\R_0} = \Call{Frugal-Reachability-Threshold}{\R_0}$\;
\For{$i=1,\ldots$}
{
  $\thresh_{\G_i} = \Call{Frugal-Parity-Threshold}{\G_i = \zug{V, E', k, p', S \cup F_d, \fr_{\G_i} = \fr \cup \thresh_{\R_{i-1}}}}$\;
 $\thresh_{\R_i} = \Call{Frugal-Reachability-Threshold}{\R_i = \zug{V, E'', k,  \notFd, \thresh_{\G_i}}}$\;
 For each $v \in F_d$, define $\fr_{i+1}(v) = \thresh_{\R_i}(v)$\;
\If {$\fr_i(v) = \fr_{i+1}(v)$, for all $v \in F_d$} 
{ Define $\thresh_{\G}(v) = \fr_i(v) \text{ for } v \in F_d$.\;
 Define $\thresh_{\G}(u) = \thresh_{\G_i}(u) \text{ for } u \in V \setminus F_d$.\;
 Return $\thresh_{\G}$\;
}
}
 \end{algorithm}

Note that since in a frugal-reachability game both the thresholds for the reachability and safety player satisfy the average property (Theorem~\ref{thm:frugal-reach}) and the algorithm boils down to repeated calls to a solution of a frugal-reachability game, it outputs a function that satisfies the average property. 

\begin{theorem}
\label{thm:fixed-point-parity}
Given a frugal-parity bidding game $\G$ with maximal index $d$, Alg.~\ref{alg:Frugal-parity} outputs the thresholds $\thresh_\G$. Moreover, $\thresh_\G$ satisfies the average property and Alg.~\ref{alg:Frugal-parity} runs in time $O\big((|V| \cdot k)^d\big)$. 
\end{theorem}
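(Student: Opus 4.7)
The plan is to prove all three claims (correctness, the average property of $\thresh_\G$, and the running-time bound) by induction on the number of parity indices $d$ used in $\G$. The base case $d = 1$ is handled directly by the algorithm's first branch: it returns $\thresh_\S$ (respectively $\thresh_\R$) obtained from a single call to \Call{Frugal-Safety}{} (resp.\ \Call{Frugal-Reachability}{}), and Thm.~\ref{thm:frugal-reach} immediately delivers correctness, the average property, and the $O(|V|\cdot k)$ bound. For the inductive step on $d > 1$, I will combine two already-established facts: Lem.~\ref{lem:frParityUpper}, which characterizes the bounded threshold $\thresh_i$ as $\thresh_{\G_i}$ on $\notFd$ and $\thresh_{\R_i}$ on $F_d$, and the lemma stating that if $n$ is such that $\thresh_n = \thresh_{n+1}$ then $\thresh_n = \thresh_\G$. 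Since the algorithm iterates exactly until this stabilization condition $\fr_i \equiv \fr_{i+1}$ on $F_d$ holds, its output coincides with $\thresh_\G$.

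For the average property, the key observation is that at the fixed-point iteration $n$, the algorithm's output is built from two sub-routine calls whose outputs each satisfy the average property on their respective games. On $F_d$, the returned value is $\thresh_{\R_n}$, obtained from a frugal-reachability call, which by Thm.~\ref{thm:frugal-reach} satisfies the average property in $\R_n$; the neighbors of $u \in F_d$ in $\R_n$ are exactly its neighbors in $\G$, and the frugal target $\fr_{\R_n}$ agrees with the returned value on $\notFd \cup S$ (namely $\thresh_{\G_n}$ on $\notFd$ and $\fr$ on $S$). On $\notFd$, the returned value is $\thresh_{\G_n}$, obtained from a recursive call to a frugal-parity game with at most $d-1$ indices; by the inductive hypothesis it satisfies the average property in $\G_n$, whose neighbor structure at $v \in \notFd$ matches that of $\G$, and whose frugal targets on $F_d$ equal $\thresh_{\R_{n-1}} = \thresh_{\R_n}$ precisely because of the fixed-point condition $\fr_n \equiv \fr_{n+1}$. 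Stitching these two average-property witnesses together at the common boundary yields a single function on $V$ that satisfies the average property in $\G$.

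For the running time, let $T(d)$ denote the worst-case cost on a frugal-parity game with $d$ indices. Observe that the sequence $\fr_0, \fr_1, \ldots$ on $F_d$ is componentwise non-increasing (a consequence of Obs.~\ref{obs:parity}) and each component lies in $[k]\cup\{k+1\}$, a set of size $O(k)$; hence the fixed point is reached after at most $O(|V|\cdot k)$ iterations of the main loop. Each iteration performs one recursive call on a game with at most $d-1$ parity indices and one call to \Call{Frugal-Reachability-Threshold}{}, so $T(d) \le O(|V|\cdot k)\cdot\bigl(T(d-1) + O(|V|\cdot k)\bigr)$, which unfolds to $T(d) = O\bigl((|V|\cdot k)^d\bigr)$ given $T(1) = O(|V|\cdot k)$. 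The main subtlety that I expect to require the most care in a full write-up is not any single inequality but the bookkeeping that justifies that the ``stitched'' function satisfies Def.~\ref{def:average} at boundary vertices --- specifically, verifying that the extremal neighbors $v^+, v^-$ used in the average property of $\G_n$ at a vertex $v \in \notFd$ remain extremal when we evaluate the average property in $\G$ itself, which follows once we observe that the values assigned to $F_d$-neighbors in $\G_n$ (via $\fr_{\G_n}$) are literally the values we return for them in $\thresh_\G$.
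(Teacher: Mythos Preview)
Your proposal is correct and follows essentially the same approach as the paper: correctness is assembled from Lem.~\ref{lem:frParityUpper}, Obs.~\ref{obs:parity}, and the fixed-point lemma; the average property comes from stitching the outputs of the recursive call and the frugal-reachability call at the $F_d$ boundary (which the paper compresses to a one-line remark before the theorem statement); and the running time comes from bounding the number of iterations by $O(|V|\cdot k)$. Your write-up is in fact more explicit than the paper's on two points --- the recurrence $T(d) \le O(|V|\cdot k)\cdot(T(d-1)+O(|V|\cdot k))$ is not spelled out in the paper, and your observation that the fixed-point condition $\fr_n \equiv \fr_{n+1}$ is exactly what makes the $F_d$-values used as sinks in $\G_n$ agree with the returned $F_d$-values is the right justification for the stitching.
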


\begin{remark}
We point out that while we develop Alg.~\ref{alg:Frugal-parity} for discrete-bidding games, it can be seen as a general ``recipe'' for extending a solution for frugal-reachability games to parity bidding games. 
While the algorithm that arises from this recipe might not be optimal complexity wise, it does provide a first upper bound, and importantly, it extends a proof that thresholds in frugal-reachability games have the average property to parity bidding games. 
\end{remark}

\section{Finding threshold budgets is in \NP and co\NP}
\label{sec:NP and coNP}

We formalize the problem of finding threshold budgets as a decision problem: 
\begin{problem}
\label{prob:decision-threshold}
{\bf (Finding Threshold Budgets).}
Given a frugal-parity bidding game $\G = \zug{V, E, k, p, S, \fr}$, a vertex $v \in V$, and $\ell \in [k]$, decide whether $\thresh_\G(v) \geq \ell$.
\end{problem}

We will show that Prob.~\ref{prob:decision-threshold} is in \NP and co\NP. 
Note that a function \(T : V \rightarrow [k] \cup \{k+1\}\) can be represented using $O(|V| \cdot \log(k))$ bits, thus it is polynomial in the size of the input to Prob.~\ref{prob:decision-threshold}. 
We describe a first attempt to show membership in \NP and co\NP. Guess $T$, verify that it satisfies the average property, and accept $\zug{G, v, \ell}$ iff $T(v) \geq \ell$. 
Unfortunately, such an attempt fails. Even though by Theorem~\ref{thm:fixed-point-parity}, the thresholds satisfy the average property, Theorem~\ref{thm:non-unique} shows that there can be other functions that satisfy it. That is, it could also be the case that $T$ satisfies the average property and $T \not \equiv \thresh_\G$. 
We point out that in continuous-bidding games, if guessing such $T$ would be possible, this scheme would have succeeded since there is a unique function that satisfies the continuous average property (Theorem~\ref{thm:cont-reach}). 

In the remainder of this section, we will show that the following problem is in \NP and co\NP by reducing it to solving a turn-based parity game of size linear in the size of the graph (and not in the size of the encoding of \(k\)). 
Then an algorithm for Prob.~\ref{prob:decision-threshold} guesses both \(T\) and winning strategies in the turn-based game. 

\begin{problem}
\label{prob:thresh-equiv}
{\bf (Verifying a guess of $T$).} Given a frugal-parity discrete-bidding game $\G$ with vertices $V$ and a function \(T : V \rightarrow [k] \cup \{k+1\}\) that satisfies the average property, decide whether \(T \equiv \thresh_\G\).
\end{problem}

We describe the high-level idea. We find it instrumental to first recall an \NP algorithm to decide whether \PO wins a turn-based parity game from an initial vertex $v_0$. 
The algorithm first guesses a memoryless strategy $f$, which is a function that maps each vertex $v$ that is controlled by \PO to an outgoing edge from $v$. The algorithm then verifies that $f$ is winning for \PO. 
The verification proceeds as follows. We solve the following problem: given a \PO strategy $f$, check whether \PT has a counter strategy $g$ such that $\play(v_0, f, g)$ violates \PO's objective. If we find that \PT has a counter strategy $g$ to $f$, then $f$ is {\em not} winning, and we reject the guess. On the other hand, if \PT cannot counter $f$, then $f$ is winning, and we accept the guess. Deciding whether \PT can counter $f$ is done as follows. We trim every edge in the game that does not comply with $f$. This leaves a graph with only \PT choices, and we check if in every reachable SCC the highest priority index is odd. 
There exists a reachable SCC where the highest priority is even iff \PT can counter $f$ iff $f$ is {\em not} winning.\footnote{An alternative description of the verification algorithm is the following. View the trimmed graph as an automaton with a singleton alphabet whose acceptance condition is \PT's objective, and check whether the language of the automaton is empty. The language is empty iff \PT cannot counter $f$.}

Our algorithm for frugal-parity games follows conceptually similar steps. Let \(T : V \rightarrow [k] \cup \{k+1\}\) that satisfies the average property. 
We verify whether \(T \equiv \thresh_\G\) as follows. We construct a partial strategy $f_T$ based on $T$. Recall that a partial strategy proposes a bid and a set of allowed vertices in each configuration. That is, guessing $T$ is not quite like guessing a strategy as in the algorithm above, rather $T$ gives rise to a partial strategy. 
We seek a \PO strategy that agrees with $f_T$ and wins when the game starting from every configuration $\zug{v, T(v)}$ with $T(v) < k+1$. 
Note that if $T \equiv \thresh_\G$, then such a strategy exists. 
Given $f_T$, we describe an algorithm that decides whether \PT can counter every \PO strategy that agrees with $f_T$. Our algorithm constructs and solves a turn-based parity game.

\subsection{From bidding games to turn-based games}
Let  \(T\) be a function that satisfies the average property. Recall that the partial strategy \(f_T\) that is constructed in Sec.~\ref{sec:def-f_T} is a function that, given a configuration $\zug{v, B}$, outputs $\zug{b, A}$, where $b \leq B$ is a bid and $A \subseteq V$ is a subset of neighbors of $v$ that are called {\em allowed vertices}. A strategy $f'$ agrees with $f_T$ if from each configuration, it bids the same as $f_T$ and chooses an allowed vertex upon winning the bidding. 

We construct a parity turn-based game $G_{T, \G}$ such that, roughly, if \PO wins in every vertex in $G_{T, \G}$, then \PO has a strategy $f'$ that agrees with $f_T$ and wins from every configuration $\zug{v, T(v)}$ in $\G$, thus $T \geq \thresh_\G$. 

We describe the intuition behind the construction of $G_{T, \G}$. Consider the following first attempt to construct $G_{T, \G}$. 
Recall the construction in Sec.~\ref{sec:bidding2conc} of the explicit concurrent game that corresponds to $\G$, and denote it by $\G'$. The vertices of $\G'$ are the configurations $\C$ of $\G$. We construct a game $\G''$ on the configuration graph $\C$. 
Recall that our goal is to check whether \PT can counter \PO's strategy, which can be thought of as \PT responds to \PO's actions in each turn. Thus, $\G''$ is turn-based: when the game is in configuration $c$, \PO first chooses $\zug{b_1, v_1}$, and only then, \PT responds by choosing an action $\zug{b_2, v_2}$. The next configuration is determined by these two actions in the same manner as the concurrent game. 
Next, we trim \PO actions in $\G'$ that do not comply with $f_T$: in a configuration $c = \zug{v, B}$ in $\G''$ with $\zug{b, A} = f_T(c)$, \PO must bid $b$ and choose a vertex in $A$. That is, an action $\zug{b', v'}$ is not allowed if $b' \neq b$ or if $v' \notin A$. Finally, we omit \PT actions that are dominated: observing a \PO bid of $b$, she chooses between bidding $0$ and letting \PO win the bidding or bidding $\succb{b}$ and winning the bidding. 
It is not hard to see that \PO wins $\G''$ from configuration $\zug{v, B}$ iff there is a strategy $f'$ that agrees with $f_T$ and wins $\G$ from $\zug{v, B}$.

The first attempt fails since the size of $\G''$ is proportional to the number of configurations, which is exponential in $\G$. 
We overcome this key challenge as follows. 
Lemma~\ref{lem: invariant} shows that when $\G$ starts from configuration $\zug{v, B}$ with $B \geq T(v)$ a strategy $f'$ that agrees with $f_T$ maintains an invariant on \PO's budget: the game only reaches configurations of the form $\zug{v', B'}$ with $B' \geq T(v')$. 
We shrink the size of the game by grouping all configurations in which \PO's budget is greater than $\succb{T(v)}$ into a vertex denoted $\zug{v, \top}$. 

We describe the idea that allows keeping only three copies of each vertex (see details in Lemma~\ref{lemma:TBparity-iff-thresh}). 
We refer to the distance from the invariant as {\em spare change}, formally at $v$ with budget $B \geq T(v)$, the spare change is \(|B \ominus T(v)|\). 
Recall from Sec.~\ref{sec:def-f_T} that $f_T$ chooses one of two bids in a vertex $v \in V$, and the choice depends on the advantage status and does not depend on the spare change. Thus, our winning strategy in $\G$ emulates a winning strategy in $G_{T,\G}$: both bid according to $f_T$ and the latter prescribes a vertex to move to upon winning a bidding. Thus, a play $\pi$ in $\G$ corresponds to a play $\playtg$ in $G_{T,\G}$.
There can be three outcomes: (1) $\playtg$ is infinite, (2) $\playtg$ ends in a sink $S$, or (3) $\playtg$ ends in a sink $\zug{v,\top}$. The first two cases mean that $\pi$ is winning in $\G$. When Case~(3) occurs and $G_{T, \G}$ reaches $\zug{v, \top}$, then $\G$ reaches $\zug{v, B}$ with $B > \succb{T(v)}$, and we restart $G_{T, \G}$ from either $\zug{v, T(v)}$ or $\zug{v, \succb{T(v)}}$ depending on the advantage status. Note that, after restarting the game, \PO plays the same except that his spare change increased. 
This is the key idea. Since whenever Case~(3) occurs, \PO's spare change strictly increases and the spare change cannot exceed the total budget $k$, Case~(3) can occur only finitely often.

Formally, we define the turn-based parity game $G_{T, \G} = \zug{V_1, V_2, E, p}$. The vertices controlled by \PLi are $V_i$, for $i \in \set{1,2}$, where
	\begin{itemize}
		\item $V_1 = \{\zug{v,T(v)}, \zug{v, \succb{T(v)}}, \zug{v, \top}: v \in (V \cup S)\}$ and 
		\item $V_2 = \set{\zug{v, c}: v \in V, c \in \C}$.
	\end{itemize}
    Note that it is possible that $T(v) = k+1$, which as in  Remark~\ref{rem:losing-vertices}, means that \PO loses from $v$ in $\G$ with every initial budget. 
    We define each vertex $\zug{v, k+1}$ as losing for \PO in $G_{T, \G}$: it is a sink with even parity index. 
	We define the edges in the game. A vertex $\zug{v, B}$ is a sink if $v \in S$ or if $B = \top$. Consider $c = \zug{v, B} \in V_1$ and let $\zug{b, A} = f_T(c)$. The neighbors of $c$ are $\set{\zug{v', c}: v' \in A}$. Intuitively, $\zug{v', c}$ means that \PO chooses the action $\zug{b,v'}$ at configuration $c$; the bid $b$ is determined by $f_T$ and $v'$ is an allowed vertex. A vertex $\zug{v', c}$ is a \PT vertex. Intuitively, \PT makes two choices: who wins the bidding and where the token moves upon winning. Thus, a vertex $\zug{v', c}$ has two types of neighbors, depending on who wins the bidding at $c$:
	\begin{itemize}
		\item First, $\zug{v', B \ominus b}$ is a neighbor of $\zug{v', c}$, meaning \PT lets \PO win the bidding by bidding $0$.
		\item Second, suppose that $k^* \ominus B \geq \succb{b}$, i.e., \PT has sufficient budget to win the bidding. Let $B' = B \oplus (\succb{b})$ be \PO's updated budget and $w \in N(v)$. If $c' = \zug{w, B'} \in V_1$, then $c'$ is a neighbor of $\zug{v', c}$. We note that $c' \notin V_1$ when $B'$ exceeds $\succb{T(w)}$, then we trim the budget and set  $\zug{w, \top}$ as a neighbor of $\zug{v', c}$.
	\end{itemize}
	For ease of presentation, we define parity indices only in \PO vertices. 
	A non-sink vertex in $G_{T, \G}$ ``inherits'' its parity index from the vertex in $\G$; namely, for \(c = \zug{v,B} \in V_1\), we define $p'(c) = p(v)$.  The parity index of a sink is odd so that \PO wins in sinks.

\stam{OLD
for $i \in \set{1,2}$, the vertices $V_i$ are controlled by \PLi, $E \subseteq (V_1 \cup V_2) \times (V_1 \cup V_2)$ is a collection of edges, 
and \(p: V_1 \cup V_2 \rightarrow \{1,2,, \ldots d\}\) assigns parity indices to the vertices. 
The vertices of $G_{T, \G}$ are $V_2 = \{\zug{v,T(v)}, \zug{v, \succb{T(v)}}, \zug{v, \top}: v \in (V \cup S)\}$ and $V_1 = \set{w^1: w = \zug{v, B} \in V_2, B \neq \top}$. 
We define the edges in $G_{T, \G}$. 
The sinks in $G_{T, \G}$ are of the form $\zug{s, B}$, for $s \in S$, or $\zug{v, \top}$, for $v \in V$. Sinks have self loops. We describe the other edges. 
Let $w = \zug{v, B} \in V_2$, where $B \neq \top$ and $v \notin S$. Intuitively, reaching $w$ in $G_{T, \G}$ means that $\G$ is in configuration $\zug{v, B_1}$, where $B_1 \geq B$ and they agree on which player has the advantage. Thus, $f_T$ bids $b^T(v,B)$. 
Outgoing edges from $w$ model \PT's two options. 
First, \PT can choose to win the bidding by bidding $\succb{b^T(v,B)}$ (any higher bid is wasteful on her part) and moving the token to $u \in N(v)$. The next configuration is intuitively $\zug{u, B'}$, where $B' = B \oplus \succb{b^T(v,B)}$. 
If $T(u) \leq B' \leq \succb{T(u)}$, then $\zug{u, B'} \in V_2$ and we define $E(\zug{v, B}, \zug{u, B'})$. Otherwise, we truncate \PO budget by defining $E(\zug{v, B}, \zug{u, \top})$. We disallow transitions in which \PT's bid exceeds her available budget, i.e., when $\succb{b^T(v,B)} > k^* \ominus B$.
Second, \PT can choose to move to $w^1$ modeling \PT allowing \PO to win the bidding, e.g., by bidding~$0$. In this case, \PO's budget is updated to $B' = B \ominus b^T(v,B)$ and he moves the token to some vertex in $A(u)$. Thus, the neighbors of $w^1$ are $\zug{u, B'}$, for $u \in A(v)$. By the proof of  Lemma~\ref{lem: invariant},
 $B' \in \set{T(u), \succb{T(u)}}$, thus $\zug{u, B'} \in V_2$. 
Note that all paths in \(G_{T, \G}\) are infinite.
Finally, we define the parity indices. A non-sink vertex in $G_{T, \G}$ ``inherits'' its parity index from the vertex in $\G$; namely, for \(w = \zug{v,B} \in V_2\), we define $\gamma(w) = \gamma(w^1) = p(v)$. We define $\gamma$ so that \PO wins in sinks, thus we set the parity index of a sink to be odd.
}

	\begin{figure}[ht]
	\captionsetup{justification=centering}
	\begin{subfigure}{0.48\linewidth}
          \begin{center}
                    	\includegraphics[width=0.9\textwidth]{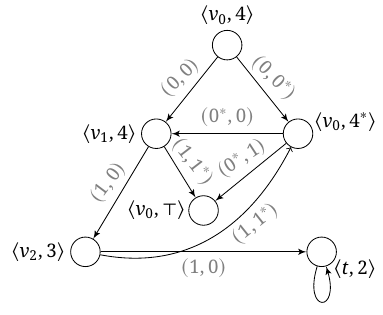}
                      \end{center}
			

			

			
		\caption{The turn-based game $G_{T_1, \G_1}$. \PO loses from some vertices, thus $T_1 \not \equiv \thresh_{\G_1}$}\label{fig:turn-based-T1}
	\end{subfigure}\hfill
	\begin{subfigure}{0.48\textwidth}
          \captionsetup{justification=centering}
          \begin{center}
                             	\includegraphics[width=0.95\textwidth]{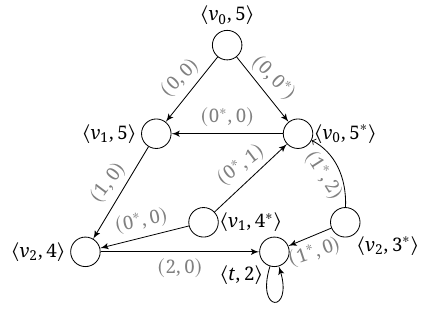}
                              \end{center}
		\caption{The turn-based game $G_{T_2, \G_1}$. \PO wins from all vertices, thus $T_2 \equiv \thresh_{\G_1}$}\label{fig:turn-based-T2}
	\end{subfigure}
	\caption{Turn-based reachability games that correspond to \(\G_1\) from Fig.~\ref{fig:avgbutnotthreshold} for two functions that satisfy the average property. \PO vertices are omitted; that is, all depicted vertices are \PT vertices. The edge labeling depict bidding outcomes and are meant to ease presentation.}
	\label{fig:turn-based}
\end{figure}

\begin{example}
Fig.~\ref{fig:avgbutnotthreshold} depicts a frugal-reachability bidding game \(\G_1\) with two functions that satisfy the average property: \(T_1 = \zug{4,3^*, 3, 2}\) and \(T_2 = \zug{5,4^*, 3^*, 2}\). 
Fig.~\ref{fig:turn-based} depicts the games $G_{T_1, \G_1}$ and $G_{T_2, \G_1}$. For ease of presentation, Fig.~\ref{fig:turn-based} is slightly inconsistent with the construction of the games. 
The reason is that both $f_{T_1}$ and $f_{T_2}$ prescribe a singleton set of allowed vertices from all configurations, thus \PO makes no choices in the game. We thus skip his vertices and simplify \PT's vertices: each vertex in Fig.~\ref{fig:turn-based} corresponds to a configuration, and all vertices are controlled by \PT. \PO's goal in both games is to reach a sink.  An outgoing edge from vertex $c$ labeled by $\zug{b_1, b_2}$ represents the outcome of a bidding at configuration $c$ in which \PLi bids $b_i$, for $i \in \set{1,2}$. Thus, each vertex $c$ has two outgoing edges labeled by $\zug{b_1, 0}$ and $\zug{b_1, \succb{b_1}}$, where $b_1$ is the bid that $f_{T_1}$ or $f_{T_2}$ prescribes at $c$. Note that some edges are disallowed. For example, in the configuration $\zug{v_1, 5}$ in $G_{T_2, \G_1}$, the bid prescribed by $f_{T_2}$ is $b_1 = 1$ and \PT cannot bid $\succb{b_1} = 1^*$ since it exceeds her available budget (indeed, $k=5$, thus \PT's budget in $c$ is $k^* \ominus 5 = 0^*$). 

Note that \(G_{T_1, \G_1}\) has a cycle. Thus, \PO does not win from every vertex and $T_1$ does not coincide with the threshold budgets. On the other hand, \(G_{T_2, \G_1}\) is a DAG. Thus, no matter how \PT plays, \PO wins from all vertices, and indeed \(T_2 \equiv \thresh_{\G_1}\).
\end{example}

\subsection{Correctness}
In this section, we prove soundness and completeness of the approach. We start with soundness.

\begin{restatable}{lemma}{TBparityiffthresh}
	\label{lemma:TBparity-iff-thresh}
	If Player 1 wins from every vertex $\zug{v,B}$ in $G_{T, \G}$ with $B < k+1$, then $T \geq \thresh_\G$. 
\end{restatable}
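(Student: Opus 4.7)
The plan is to convert a memoryless winning \PO strategy $\sigma$ in the turn-based parity game $G_{T, \G}$ (which exists by the classical memoryless-determinacy theorem for parity games) into a \PO strategy $f^*$ in $\G$ that wins from every configuration $\zug{v_0, B_0}$ with $B_0 \geq T(v_0)$; this immediately yields $T(v_0) \geq \thresh_\G(v_0)$. By construction, $f^*$ will always bid according to the partial strategy $f_T$ of Section~\ref{sec:def-f_T}, so the only additional information needed is which allowed successor to pick upon winning a bidding, and that is exactly what $\sigma$ supplies.

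Concretely, $f^*$ maintains a \emph{simulation state} $c = \zug{v, B^\dagger} \in V_1$, where $v$ is the current vertex of $\G$ and $B^\dagger$ is the element of $\{T(v), \succb{T(v)}\}$ whose advantage status matches the actual budget $B$ (so, in particular, $B \geq B^\dagger$ in $<$). At $\zug{v, B}$ with sim state $c$, $f^*$ bids $b = b^T(v, B) = b^T(v, B^\dagger)$ and, if it wins, moves to the successor $v'$ prescribed by $\sigma(c) = \zug{v', c}$. An arbitrary \PT strategy in $\G$ is mirrored in $G_{T, \G}$ by a play-along that, at each step, picks the ``let \PO win'' neighbour of $\zug{v', c}$ if \PO won the actual bidding, and the ``\PT wins, move to $w$'' neighbour otherwise (where $w$ is \PT's chosen successor). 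A straightforward induction, using Lemma~\ref{lem:resultonT} to match the arithmetic of $\oplus$ and $\ominus$ together with monotonicity of these operations in $<$, shows that as long as the simulation stays within $V_1$, the sim and actual share the current vertex and satisfy $B \geq B^\dagger$.

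The main obstacle is that the simulated play may hit a sink of the form $\zug{v, \top}$ while the actual play must continue. I would resolve this with a restart: upon reaching $\zug{v, \top}$, reset the sim state to $\zug{v, B^\dagger_{\mathrm{new}}}$ where $B^\dagger_{\mathrm{new}} \in \{T(v), \succb{T(v)}\}$ matches the advantage of the new actual budget; hitting $\top$ guarantees $B > \succb{T(v)}$, hence $B \geq B^\dagger_{\mathrm{new}}$. The key quantitative claim is that each restart strictly increases the \emph{spare} $\mathrm{rank}(B) - \mathrm{rank}(B^\dagger)$, where $\mathrm{rank}(\cdot)$ is the position in the order $<$ on $\Nat^*$: reaching $\top$ means that the continuing simulated budget would have exceeded $\succb{T(v)}$ by at least one rank-unit, while $B^\dagger_{\mathrm{new}}$ is at most $\succb{T(v)}$, so the spare picks up at least one unit. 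Because $\mathrm{rank}(B) \leq 2k+1$ throughout the play, only finitely many restarts can occur.

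After the last restart the simulation never hits a $\top$-sink again, so from that point on it is a legitimate $G_{T, \G}$-play consistent with $\sigma$. Being winning, it either ends in an $S$-sink $\zug{s, B^\dagger}$ with $B^\dagger \geq T(s) = \fr(s)$, in which case the invariant gives $B \geq B^\dagger \geq \fr(s)$ and \PO meets the frugal objective in $\G$; or it is infinite and satisfies the parity condition of $G_{T, \G}$, in which case, since the sim and actual plays share their vertex sequences (restarts change only the budget coordinate) and the parity indices of non-sinks in $G_{T, \G}$ are inherited from $p$, the two plays have identical $inf$-sets in $V$, so the actual play satisfies $\G$'s parity condition. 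Either way \PO wins $\G$ from $\zug{v_0, B_0}$, completing the proof.
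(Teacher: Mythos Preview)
Your proposal is correct and follows essentially the same approach as the paper: use a memoryless winning strategy in $G_{T,\G}$ to supply the successor choices while bidding according to $f_T$, simulate the actual play by a play in $G_{T,\G}$ that maintains the invariant that the actual budget dominates the simulated one, restart the simulation whenever a $\top$-sink is hit, and bound the number of restarts via a monotone, bounded spare quantity. Your rank-based spare is a cosmetic variant of the paper's $|B|-|T(v)|$, and your three-way case split on the tail of the simulation (infinite parity-winning, $S$-sink, $\top$-sink) is exactly the paper's.
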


\begin{proof}
	Suppose that \PO wins from every such vertex of \(G_{T, \G}\) and let \(\ftg\) be a \PO memoryless winning strategy. 
	We construct a strategy \(f\) in \(\G\) based on \(\ftg\) and show that it is winning from every configuration \(\zug{v, B}\) where \(B \geq T(v)\). 
	This implies that \(T \geq \thresh_{\G}\) since \(f\) witnesses that \PO can win with a budget of \(T(v)\) from \(v\). Note that we do not yet rule out that a different strategy wins with a lower budget, this will come later. 
	
	We introduce notation. 
	Consider a configuration \(c = \zug{v, B}\) in \(\G\) with $T(v) < k+1$ and \(B \geq T(v)\). 
	The vertex in \(G_{T, \G}\) that \emph{agrees} with \(c\), denoted by \(\configtg\), is the vertex in \(\{\zug{v, T(v)}\, \zug{v, \succb{T(v)}}\}\) that matches with \(c\) on the status of the advantage (and of course on the vertex of \(\G\)). Note the convention of calling $c$ a {\em configuration} in $\G$ and a {\em vertex} in $G_{T,\G}$. 
	For example if \(T(v) = 5^*\) for some vertex \(v\) and \(c = \zug{v, 9}\) is a configuration in \(\G\), then the vertex of \(G_{T, \G}\) that \emph{agrees} with \(c\), denoted by \(\configtg\), is \(\zug{v, 6}\).  
	Recall that even though the budget in \(c\) may be higher than that of \(\configtg\), the partial strategy \(f_T\) acts the same in both, i.e., \(f_T(c) = f_T(\configtg)\). 
	The \emph{spare change} that is associated with \(c\), denoted by \(\Spare(c)\) is \(|B| - |T(v)|\). 
	
	In the following, we construct \(f\) based on \(f_T\) and \(\ftg\). 
	Specifically, we define \(f\) to agree with \(f_T\) on the bid and choose the successor vertex according to \(\ftg\).
	Let \(\zug{b, A} = f_T(c)\).  
	Recall that \(\configtg\) is a \PO vertex in \(G_{T, \G}\) and its neighbours are of the form \(\zug{v', c}\) such that \(v'\) is an allowed vertex, i.e, \(v' \in A\). 
	Intuitively, proceeding to vertex \(\zug{v', \configtg}\) in \(G_{T, \G}\) is associated with moving to \(v'\) upon winning the bidding at \(\configtg\).
	Let \(\zug{v', \configtg} = \ftg(\configtg)\). 
	Then, we define \(f(c) = \zug{b, v'}\). 
	
	We claim that \(f\) is winning from an initial configuration \(c_0 = \zug{v, B}\) in \(\G\) with $T(v) < k+1$ and \(B \geq T(v)\). 
	Let \(g\) be a \PT strategy in \(\G\). 
	The initial vertex \(\configtg{c}{0}\) in \(G_{T, \G}\) is the vertex that \emph{agrees} with \(c_0\).
	We construct a \PT strategy \(\ftg{g}\) in \(G_{T, \G}\) so that \(\play(\configtg{c}{0}, \ftg, \ftg{g})\) in \(G_{T, \G}\) simulates \(\play(c_0, f, g)\) in \(\G\): when \(\play(c_0, f, g)\) is in a configuration \(c\), \(\play(\configtg{c}{0}, \ftg, \ftg{g})\) is in a vertex \(\configtg{c}\) that agrees with \(c\). 
	
	We define \(\ftg{g}\) inductively. 
Initially the invariant holds due to our choice of \(\configtg{c}{0}\) in \(G_{T, \G}\). 
	Suppose that \(\G\) is at configuration \(c = \zug{v, B}\), then the \(\play(\configtg{c}{0}, \ftg, \ftg{g})\) in  \(G_{T, \G}\) is at the vertex \(c^*\) that agrees with \(c\). 
	Denote by \(\placeholdr \in \{T(v), \succb{T(v)}\}\) such that \(c^* = \zug{v, \placeholdr}\). 
	Let \(\zug{b_1, v_1} = f(c)\) as defined above, let \(\zug{b_2, v_2} = g(c)\) be \PT's choice, and let \(d\) be the next configuration in $\G$. We extend the play in $G_{T,\G}$ as follows. 
	We first register \PO's move in \(G_{T, \G}\) by proceeding to the \PT vertex \(\zug{v_1, c^*}\). 
	We distinguish between two cases. First, \PO wins the bidding in $\G$. We define \(\ftg{g}\) to choose \(\zug{v_1, \placeholdr \ominus b_1}\) as the successor vertex from \(\zug{v_1, c^*}\). 
	Note that, in this case, the configuration \(d\) is \(d^* = \zug{v_1, B \ominus b_1}\).  
	Since \(c^*\) agrees with \(c\), then \(d^*\) agrees with \(d\). 
	Second, \PT wins the bidding in $\G$. We define \(\ftg{g}\) to proceed to \(d^* = \zug{v_2, \placeholdr \oplus b_2}\), if it exists in \(G_{T, \G}\). 
	If \(d^*\) is in the graph, then again, $d^*$ agrees with $d$. 
	On the other hand, if \(d^*\) is not a vertex in \(G_{T, \G}\), it intuitively means \(\placeholdr \oplus b_2 > \succb{T(v_2)}\), and we define \(\ftg{g}\) to proceed to \(\zug{v_2, \top}\). 
	Let us denote \(\hat{d^*}\) be the vertex in \(G_{T, \G}\) that agrees with \(d\). 
	We apply the same definition above starting from vertex \(\hat{d^*}\) in \(G_{T, \G}\). 
	That is, in the next turn, assuming that \PO proceeds in \(G_{T, \G}\) to \(\zug{v_2, \hat{d^*}}\), we define \(\ftg{g}\) according to \(g(\zug{v_2, B \oplus b_2})\), as discussed above. 
	We call this a \emph{restart} in the simulation. 
	Note that if the simulation is not a restart, then \(\Spare(c) = \Spare(d)\), and if it is a restart, then \(\Spare(c) < \Spare(d)\). 
	
	We claim that $\play(c_0, f, g) = c_0, c_1,\ldots$ is winning for \PO in $\G$. We slightly abuse notation and denote by $\play(\configtg{c}{0}, \ftg, \ftg{g}) = \configtg{c}{0}, c^*_1,\ldots$ the sequence of \PO vertices that are traversed in $G_{T, \G}$, that is we skip \PT vertices. 
	Since $\ftg$ is winning in $G_{T, \G}$, then $\play(\configtg{c}{0}, \ftg, \ftg{g})$ is winning for \PO. We distinguish between three cases. 
	First, $\play(\configtg{c}{0}, \ftg, \ftg{g})$ is infinite. Since for every $i \geq 0$, the vertex $c^*_i$ agrees with $c_i$, the two plays agree on the parity indices that are visited, thus $\play(c_0, f, g)$ satisfies the parity objective. 
	Second, $\play(\configtg{c}{0}, \ftg, \ftg{g})$ is finite and ends in a sink configuration $c_k = \zug{s, B}$. Note that $B \geq T(s)$, and the definition of $T$ requires that $T(s) = \fr(s)$. 
	Since $c^*_k$ agrees with $c_k$, it follows that $c_k$ satisfies the frugal objective. 
	Third, $\play(\configtg{c}{0}, \ftg, \ftg{g})$ is finite and ends in $c^*_k = \zug{v, \top}$. 
	Let $\hat{c^*_k}$ denote the vertex that agrees with $c_k$. 
	We apply the reasoning above to $\play(\hat{c^*_k}, \ftg, \ftg{g})$. 
	
	Note that the third case can occur only finitely many times, since a restart causes the spare change to strictly increase, and the spare change is bounded by $k$. Thus, eventually, the play in $G_{T, \G}$ falls into one of the first two cases, which implies that $\play(c_0, f, g)$ is winning for \PO. 	
\end{proof}

\begin{remark}
The proof of Lemma~\ref{lemma:TBparity-iff-thresh} constructs a \PO winning strategy $f$ in $\G$. Note that in order to implement $f$, we only need to keep track of a vertex in $G_{T, \G}$. Thus, its memory size equals the size of $G_{T, \G}$, which is linear in the size of $\G$. This is significantly smaller than previously known constructions in parity and reachability discrete-bidding games, where the strategy size is polynomial in $k$, and is thus exponential when $k$ is given in binary.
\end{remark}

The following lemma shows completeness; namely, that a correct guess of $T$ implies that \PO wins from every vertex in $G_{T, \G}$. 

\begin{restatable}{lemma}{thresholdpassesthetest}
	\label{lemma:thresholdpassesthetest}
	If $T \equiv \thresh_\G$, then \PO wins from every vertex $\zug{v, B}$ in $G_{T,\G}$ with $B \in \Natstr$ and $B < k+1$. 
\end{restatable}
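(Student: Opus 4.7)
The plan is to run the simulation of the proof of Lem.~\ref{lemma:TBparity-iff-thresh} in reverse. Since $T \nequiv \thresh_\G$, the fixed-point algorithm Alg.~\ref{alg:Frugal-parity} together with Thm.~\ref{thm:fixed-point-parity} yields, for every configuration $\zug{v, B}$ with $B \geq T(v)$, a \PO winning strategy $f$ in $\G$ that agrees with the partial strategy $f_T$ constructed in Sec.~\ref{sec:def-f_T}: each recursive call in the algorithm solves a frugal-reachability sub-game whose winning strategies can be taken to bid $b^T$ and route through allowed vertices, and these pieces compose into a strategy that preserves $f_T$-agreement across the parity-index recursion and across the fixed-point iteration on visits to $F_d$.

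Given such an $f$, I will define a \PO strategy $f^*$ in $G_{T, \G}$ from an arbitrary initial vertex $c^*_0 = \zug{v_0, B^*_0}$ with $B^*_0 \in \set{T(v_0), \succb{T(v_0)}}$ by tracking a \emph{shadow play} in $\G$ that starts at the matching configuration $c_0 = \zug{v_0, B^*_0}$ and follows $f$. At each \PO vertex $c^* = \zug{v, B^*}$ visited in $G_{T, \G}$, if the shadow has reached $c = \zug{v, B^*}$, then $f^*$ moves to $\zug{v', c^*}$, where $v'$ is the vertex chosen by $f$ at $c$. Because $f$ agrees with $f_T$, the vertex $v'$ lies in the allowed set $A(v)$, so this edge exists in $G_{T, \G}$.

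To show that $f^*$ is winning, I will translate any \PT strategy $g^*$ in $G_{T, \G}$ into a \PT strategy $g$ in $\G$: when $g^*$ takes the edge $\zug{v', c^*} \to \zug{v', B^* \ominus b}$ (``\PT loses''), set $g$ to bid $0$ in $\G$; when $g^*$ takes the edge to $\zug{v_2, B^* \oplus \succb{b}}$ or $\zug{v_2, \top}$ (``\PT wins''), set $g$ to bid $\succb{b}$ and move to $v_2$. Because the shadow maintains budget equal to that of the $G_{T, \G}$ play, these \PT bids are legal in $\G$ exactly when the corresponding edges exist in $G_{T, \G}$, and the biddings produce matching outcomes in both games. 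A straightforward induction shows that the play in $G_{T, \G}$ and the shadow play in $\G$ visit the same vertex sequence with identical budgets, up until the play in $G_{T, \G}$ possibly terminates at a $\top$ sink. Case analysis on $\play(c^*_0, f^*, g^*)$ then concludes: reaching $\zug{u, \top}$ is a \PO win by the odd parity assigned to $\top$ sinks; reaching $\zug{s, B^*}$ with $s \in S$ is mirrored by the shadow play, so $B^* \geq \fr(s) = T(s)$ since $f$ wins, giving a \PO win in $G_{T, \G}$ as well; and an infinite play inherits its sequence of parity indices from the shadow, which satisfies \PO's parity objective because $f$ is winning in $\G$.

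The hard part will be the first step: formally extracting a winning strategy $f$ in $\G$ that agrees with $f_T$. Intuitively, each level of the recursion in Alg.~\ref{alg:Frugal-parity} composes winning strategies from frugal-reachability and lower-parity sub-games, all of which can be chosen to agree with $f_T$ by the construction in Sec.~\ref{sec:average-based-strat}; making this rigorous requires an induction on the parity-index depth and on the $F_d$ fixed-point iterations, showing at each stage that the composite strategy still bids according to $f_T$ and routes only through allowed vertices. This ``lifting'' is the crucial ingredient not captured by Lem.~\ref{lem: invariant} alone, which guarantees only that $f_T$-agreeing strategies are non-losing, not that a specific such strategy achieves the full frugal-parity objective.
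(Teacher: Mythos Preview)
Your simulation step is fine; the gap is in what you flag as ``the hard part,'' and it is more serious than your outline suggests. You need a \PO winning strategy in $\G$ that \emph{globally} agrees with $f_T$, but nothing in Sec.~\ref{sec:average-based-strat} or in Alg.~\ref{alg:Frugal-parity} produces one. Cor.~\ref{cor: corinvariant} says only that $f_T$-agreeing strategies are non-losing; it gives you no control over \emph{which} allowed vertex to pick so as to guarantee reachability. Already in the base case of a frugal-reachability sub-game, the winning strategy supplied by Lem.~\ref{lem:reach-upper} comes from the truncated games $\G[n]$ and is defined step-by-step against the pre-fixed-point functions $T_{n-1}, T_{n-2},\ldots$, not against $T$; it is not obvious that this strategy (or any memoryless one) both agrees with $f_T$ \emph{and} reaches $S$. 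If two neighbours share the minimum $T$-value, one may progress toward the target while the other cycles, and $f_T$ does not distinguish them. Your proposed induction on the parity depth presupposes this base case, so the whole argument hinges on a lemma the paper never proves and that you do not prove either.

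The paper avoids this difficulty entirely by arguing contrapositively. It takes an \emph{arbitrary} \PO winning strategy $f$ in $\G$ (which need not agree with $f_T$) and a \PT winning strategy $g^*$ in $G_{T,\G}$, simulates them against each other, and shows the joint play cannot be infinite. Hence $f$ must eventually deviate from $f_T$; a case analysis (underbid, overbid, disallowed vertex) then shows that at the first deviation \PT can drive \PO's budget strictly below $T$ at the next vertex, from which \PT wins since $T\equiv\thresh_\G$. The one delicate sub-case---overbidding by exactly $0^*$ when $T(v^-)\in\Nat^*\setminus\Nat$---is absorbed back into the simulation rather than punished immediately. This argument never needs to exhibit an $f_T$-agreeing winner; it only needs that every winner is ``close enough'' to $f_T$ that deviations are exploitable. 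That is the idea your proposal is missing.
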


\begin{proof}
Assume towards a contradiction that $T  \equiv \thresh_{\G}$ and there is a \PO vertex \(\configtg{c}{0} = \zug{v,B}\) in \(G_{T, \G}\) that is losing for \PO. 
Let \(\ftg{g}\) be a \PT memoryless strategy that wins from vertex $\configtg{c}{0}$ in \(G_{T, \G}\). 
Recall that \(B \in \{T(v), \succb{T(v)}\}\). 
Note that $B \geq T(v)$. Since we assume $T \equiv \thresh_\G$, and \(B < k+1\) \PO wins from configuration \(c_0 = \zug{v, B}\) in $\G$. Let \(f\) be a \PO winning strategy from \(c_0\) in \(\G\). Note that we follow the convention of referring to $c$ in $\G$ as a {\em configuration} and \(\configtg\) in $G_{T,\G}$ as a {\em vertex}, even though both are \(\zug{v, B}\). 
We will reach a contradiction by constructing a \PT strategy $g$ in $\G$ that counters $f$, thus showing that $f$ is not winning. 
Recall that a winning \PO strategy can be thought of as a strategy that, in each turn, reveals \PO's action first, and allows \PT to respond to \PO's action. 

We construct a \PO strategy $\ftg$ in $G_{T,\G}$ based on $f$ as long as $f$ agrees with $f_T$ and a \PT strategy $g$ in $\G$ based on $\ftg{g}$. Both constructions are straightforward. First, for $\ftg$, consider a \PO vertex $\configtg$ in $G_{T,\G}$. Recall that $\configtg$ is a configuration in $\G$, which we denote by $c$ to avoid confusion. Suppose that $f(c)$ agrees with $f_T(c)$, that is denoting $\zug{b, A} = f_T(c)$, we have $\zug{b, v} = f(c)$ with $v \in A$. 
Then, in $G_{T,\G}$, from vertex $\configtg$, the strategy $\ftg$ proceeds to $\zug{v, \configtg}$. We stress that $\ftg$ is only defined when $f$ agrees with $f_T$. 
Second, for $g$, recall that \PT vertices in $G_{T, \G}$ are of the form $\zug{v, c}$, and \PT chooses between, intuitively letting \PO win the bidding or bidding $\succb{b}$, winning the bidding, and choosing the next vertex. Assume $\zug{b, v} = f(c)$ that agrees with $f_T$, then $g$ responds by following $\ftg{g}$: if $\ftg{g}$ lets \PO win from $\zug{v, \configtg}$, then $g$ bids $0$ in $c$ and lets \PO win the bidding, and if it wins the bidding by proceeding to vertex $\zug{v', B'}$, then $g$ chooses $\zug{\succb{b}, v'}$, i.e., it too wins the bidding in $c$ and proceeds to vertex $v'$.

Let $\playtg$ and $\pi$ respectively denote the longest histories of $G_{T, \G}$ and $\G$ that start from $\configtg{c}{0}$ and \(c_0\) that arise from applying $\ftg$ against \(\ftg{g}\) in \(G_{T, \G}\) and \(f\) against \(g\) in \(\G\), as long as $f$ agrees with $f_T$. Note that, skipping \PT vertices in $G_{T,\G}$, the plays $\playtg$ and $\pi$ traverse the same sequence of configurations. We claim that the two plays cannot be infinite. Indeed, assume otherwise, then since we assume $f$ is winning, $\pi$ satisfies \PO's objective, and since we assume $\ftg{g}$ is winning, $\playtg$ violates \PO's objective, but both cannot hold at the same time. 
Also, $\playtg$ cannot end in a sink. Indeed, visiting a sink that is winning for \PO is not possible since $\playtg$ is consistent with a \PT winning strategy, and visiting a winning sink $\zug{v, k+1}$ for \PT is not possible since it means that $\pi$, which is consistent with a \PO winning strategy in $\G$, visits a losing configuration for \PO.   We conclude that $\pi$ and $\playtg$ are finite and end in a configuration in which \(f\) does not agree with~\(f_T\). 

Let $c = \zug{v, B}$, where \(B \in \{T(v), \succb{T(v)}\}\), be the last configuration in $\pi$. 
That is, $c$ is the first configuration in which $f$ chooses an action that does not agree with $f_T$. Let $\zug{b,A} = f_T(c)$ and $\zug{b_1, v_1} = f(\pi)$. 
In the remainder of the proof, we consider the three ways in which $f$ can disagree with $f_T$. 
In each of these cases, we subsequently define \PT response \(g\), and show that she can win from the resulting configuration. 

In particular, we show that in all but one subcases of these three cases, we have a ``suitable'' \PT response by \(g\) which results in a configuration of the form \(c = \zug{v', B'}\), where \(B' < T(v')\). 
Thus, the standard argument follows (in all but one subcase) from there as: because the budget ``falls'' below the threshold budget (by hypothesis, \(T \equiv \thresh\)), by definition, \PT has a winning strategy in \(\G\) from there onwards, and \(g\) simply follows that. 
In the remaining subcase (Case 2. (ii), in particular), we will see that even though this is also a way how \(f\) differs from \(f_T\), it does not necessarily result \PO's budget falling below \(T(v')\) (assuming the resulting vertex is \(v\)) for any \PT response. 
But in this case, we argue that \(f\) eventually differs from \(f_T\) by other means. 
Hence, even though we may not have the desired ``suitability'' in  \PT's response \(g\) in this case, we will eventually encounter it when \(f\) eventually differs from \(f_T\) by other means.

We recall Observation~\ref{obs: TmatcheswithBid}, which intuitively states that when \PO has the advantage and the bids of $f$ and $f_T$ agree, then \PO uses the advantage.

We finally proceed to analyze the three ways in which $f$ disagrees with $f_T$:

\noindent{\bf Case 1: $f$ underbids; $b_1 < b$.} 
\PT responds by bidding $b \ominus 0^*$. 
We show that she wins the bidding, but before that 
we show  \PT can indeed bid \(b \ominus 0^*\) at vertex \(v\) from her budget \(k^* \ominus B\).
Note that, here \(b = \tbid{v}\) if \(B = T(v)\), and \(b = \succb{\tbid{v}}\), if \(B = \succb{T(v)}\). 

\begin{claim*}\label{clm: PTcanbid}
	When \PO has a budget \(T(v)\) (alternatively, \(\succb{T(v)}\)), \PT can bid \(\predb{\tbid{v}}\) (\(\tbid{v}\) respectively). 
\end{claim*}

\begin{subproof}
	We analyze the case when  \(B = T(v)\), as the other case is exactly similar.
	So, when \PO's budget is \(B = T(v)\), \PT's budget is \(k^* \ominus T(v)\). 
	In order to establish that \PT can indeed bid \(\predb{b}\), we show the following:
	\begin{equation*}
		(k^* \ominus T(v)) \ominus (\predb{\tbid{v}}) \geq 0
	\end{equation*}
	
	We prove this by a case analysis akin to the proof of Lemma~\ref{lem:resultonT}, i.e, we analyze four cases, each of which corresponds to a parity of \(\sumT\) and an advantage status of \(T(\vminus)\). 
	\begin{itemize}
		\item \(\sumT\) is even and \(T(\vminus) \in \Nat\). 
		
		In this case, 
		\begin{align*}
			\adjustbrac{k^* \ominus T(v)} \ominus \adjustbrac{\predb{\tbid{v}}} &= \adjustbrac{k^* \ominus \half{\sumT}} \ominus \succInt{\half{\diffT} -1}\\
			&= \succInt{k - \half{\sumT}} \ominus \succInt{\half{\diffT} -1} \\
			&= k - \half{\sumT} - \half{\diffT} + 1\\
			&= \adjustbrac{k+1} - \absolut{T(\vplus)} \geq 0 
		\end{align*}
	
	\item \(\sumT\) is odd and \(T(\vminus) \in \Natstro\). 
	
	In this case, we have 
	\begin{align*}
		\adjustbrac{k^* \ominus T(v)} \ominus (\predb{\tbid{v}}) &= \adjustbrac{k^* \ominus \adjustbrac{\floor{\half{\sumT}}+1}} \ominus \succInt{\floor{\half{\diffT}} -1}\\
		&= k - \floor{\half{\sumT}} - 1 - \floor{\half{\diffT}} + 1\\
		&= k+1 - \adjustbrac{\half{\sumT} + \half{\diffT}}\\
		&= \adjustbrac{k+1} - \absolut{T(\vplus)} \geq 0
	\end{align*}

	\item \(\sumT\) is even and \(T(\vminus) \in \Natstro\). 
	
	In this case, it goes as following:
	\begin{align*}
		\adjustbrac{k^* \ominus T(v)} \ominus (\predb{\tbid{v}}) &=
		\adjustbrac{k^* \ominus \succInt{\half{\sumT}}} \ominus \adjustbrac{\half{\diffT} -1}\\
		&= \adjustbrac{k+1} - T(\vplus) \geq 0
	\end{align*}

	\item Finally, \(\sumT\) is odd and \(T(\vminus) \in \Nat\). 
	
	Here, we have
	\begin{align*}
		\adjustbrac{k^* \ominus T(v)} \ominus (\predb{\tbid{v}}) &=
		\adjustbrac{k^* \ominus \succInt{\floor{\half{\sumT}}}} \ominus \floor{\half{\diffT}}\\
		&= k - \half{\sumT} + \half{1} - \half{\diffT} + \half{1}\\
		&= \adjustbrac{k+1} - T(\vplus) \geq 0
	\end{align*}
	\end{itemize}
Therefore, we conclude that when \PO has a budget of \(T(v)\), \PT does have the enough budget to bid \(\predb{\tbid{v}}\).
\end{subproof}
 
She then proceeds to a neighbour $v'$ with \(T\)-value \(T(v^+)\). Let $c' = \zug{v', B'}$ denote the resulting configuration. Intuitively, \PT pays less than she should for winning the bidding. Formally, we will show that $B' < T(v')$. This will conclude the proof. Indeed, since we assume $T \equiv \thresh_\G$, \PT has a winning strategy from $c'$, which she uses to counter $f$ from \(c'\) onwards. 
We distinguish between two cases depending on whether \PO holds the advantage:
\begin{enumerate}[(a)]
	\item \PO holds the advantage, i.e., \(B \in \Natstro\). By Observation~\ref{obs: TmatcheswithBid}, he uses it according to \(f_T\), thus \(b \in \Natstro\). 
	\PT bids $b_2 = b \ominus 0^*$. 
	First, note that the bid is legal. Indeed, since $b$ contains the advantage, $b_2$ does not. 
	Second, note that \PT wins the bidding. 
	Indeed, if \PO bids less than $b_2$, clearly \PT wins, and if he bids $b_2$, then a tie occurs, and since he has the advantage and does not use it, \PT wins the bidding. 
	As a result, \PO's budget is updated to \(B \oplus (b \ominus 0^*) < B \oplus b < B \oplus (\succb{b}) = |T(v^+)|^*\), in particular, \(B \oplus (b \ominus 0^*) < T(v^+)\). 
	
	\item \PO does not hold the advantage, i.e., \(B \in \Nat\). Again, by Observation~\ref{obs: TmatcheswithBid}, $b$ does not include the advantage and bidding \(b_1 < b\) necessarily implies \(b_1 < b \ominus 0^*\), simply because he does not hold the advantage. 
 \PT bids \(b \ominus 0^* \in \Natstro\). It is not hard to see that she wins the bidding and showing that $B' < T(v')$ is done as in the previous case. 
\end{enumerate}

\noindent{\bf Case 2: $f$ over bids; $b_1 > b$.} 
We assume \(B = T(v)\) and the case of \(B = \succb{T(v)}\) is similar. Note that Observation~\ref{obs: TmatcheswithBid} implies that $f_T$ proposes a bid of \(\succb{b}\) when \PO's budget is \(\succb{T(v)}\). 
Intuitively, if \PO wins the bidding with his bid of $b_1$, he will pay ``too much'', and \PT indeed lets him win by bidding $0$ (except for one case that we will explain later). 
The resulting configuration is \(c' = \zug{v_1, B \ominus b_1}\), and we will show that \(B \ominus b_1 < T(v_1)\) (barring one case). As in the underbidding  case, this concludes the proof: since we assume $T \equiv \thresh_\G$, \PT wins from $c'$. 

We first consider the easier case when \(b_1 > \succb{b}\). Then, \(B \ominus b_1 < B \ominus (\succb{b}) \leq T(v^-) \leq T(v_1)\), thus $B \ominus b_1 < T(v_1)$, as required. We proceed to the harder case of \(b_1 = \succb{b}\). Note that this case necessarily implicates that \PO has the advantage, i.e, \(B \in \Natstro\). 
Indeed, otherwise when he does not have the advantage, i.e., \(B \in \Nat\) then \(b \in \Nat\) too (from Observation~\ref{obs: TmatcheswithBid}), thus he cannot bid \(\succb{b}\), which is in \(\Natstro\), from his budget \(B\). 
Recall from Definition~\ref{def:average} that when \PO's budget \(B = T(v)  \in \Natstro\), there are two possibilities: (i) \(|T(v^+)| + |T(v^-)|\) is odd and \(T(v^-) \in \Nat\), and (ii) \(|T(v^+)| + |T(v^-)|\) is even and \(T(v^-) \in \Natstro\). In Case~(i), \(T(v) - b = T(v^-)\), hence when \PO bids \(\succb{b}\), \PT's response is \(0\), and \PO's budget in the next configuration is strictly lower than the threshold. 

We conclude with Case~(ii). Recall that in this case \(b = \lfloor \frac{|T(v^+)| - |T(v^-)|}{2} \rfloor \ominus 0^*\). This case requires a different approach since \PO can bid \(\succb{b}\) and even if he wins the bidding, his budget in the next configuration does not fall below the threshold. We define $g$ to follow $\ftg{g}$. Consider the move of $\ftg{g}$ from \(\zug{v_1, c}\). If it lets \PO win by proceeding to \(\zug{v_1, B \ominus b}\), then $g$ responds to \(f\) in \(\G\) by bidding \(0\). Recall that \PT's other action in $G_{T,\G}$ corresponds to a bid of $\succb{b}$, and is represented by proceeding to vertex \(\zug{v', B \ominus (\succb{b})}\). Then, in $\G$, we define $g$ to bid \(\succb{b}\), thus both players bid $\succb{b}$ and \PT wins the tie since \PO has the advantage and does not use it. \PT proceeds to \(v'\) following $\ftg{g}$. The key idea is that in both cases, we reach the same configuration in $\G$ and $G_{T,\G}$. That is, even though $f$ disagrees with $f_T$, we extend the two plays \(\pi\) and \(\playtg\) and restart the proof. As discussed in the beginning of the proof, the plays cannot be infinite, thus eventually $f$ disagrees with $f_T$ in one of the other manners. 
\stam{
\(b_1\) is exactly \(\succb{b}\). 
	First note that if \PO does not hold the advantage (which implies \(b \in \mathbb{N}\)), then he cannot bid \(b_1 =  \succb{b}\) even though he might be able to bid more than \(\succb{b}\).
	By this argument, we have two possibilities (out of four) remaining (recall from Definition \ref{def:average}): (i) \(|T(v^+)| + |T(v^-)|\) is odd and \(T(v^-) \in \Nat\), and (ii) \(|T(v^+)| + |T(v^-)|\) is even and \(T(v^-) \in \Natstro\). 
	For the former case, \(T(v) - b = T(v^-)\), hence when \PO bids \(\succb{b}\), \PT's response is \(0\), which makes \PO's budget strictly lower than the threshold at the next configuration. 
	In the latter case, \(b = \lfloor \frac{|T(v^+)| - |T(v^-)|}{2} \rfloor \ominus 0^*\), which lets \PO indeed bid \(\succb{b}\) without making his budget less than the threshold at the next vertex even if he wins the bidding. 
	Here, we argue \PT's response would be accordingly to  \PT's winning strategy \(\ftg{g}\) in \(G_{t, \G}\). 
	That is, if \(\ftg{g}\), from a vertex of the form \(\zug{v', c}\), chooses \(\zug{v', B \ominus b}\), then her response to \(f\) in \(\G\) here would be to bid \(0\) (does not matter what vertex she chooses). 
	And if \(\ftg{g}\) chooses \(\zug{v", B \ominus \succb{b}}\), she would bid \(\succb{b}\) at \(\G\), and choose \(v"\) as her next choice of vertex. 
	Note that, when in \(\G\), both \PO and \PT bids \(b_1 = \succb{b}\), \PT wins the bidding simply because \PO holds the advantage yet not playing it in his bid. 
	Therefore, this response of \PT maintains the simulation of \(\pi\) and \(\playtg\) even if \(f\) has diverted in its bid from \(f_T\), and we continue. 
	In this particular case, we note that this form of "diversion" of \(f\) from \(f_T\) cannot effectively make the prefix apart from each other, therefore \(f\) has to divert from \(f_T\) in some other ways. 
	At the end of this, we will see that each of all the other ways of diversion lets \PT have an immediate response that makes \(f\) become a winning \PO strategy in \(\G\).
}

\noindent{\bf Case 3: $f$ does not choose an allowed vertex; $b_1 = b$ and $v_1 \notin A$.}
Recall that, by definition, the set \(A\) of allowed vertices consists of all vertices \(v'\) that satisfy \(T(v) \ominus b \geq T(v')\). 
Therefore, \PT responds to \(f\) by letting \PO win by bidding \(0\). 
In the resulting configuration, \PO's budget is strictly less than $T$, which coincides with the threshold budget, and, as in the above, \PT proceeds with a winning strategy. 
\stam{
We distinguish three cases:
\begin{itemize}
\item $f$ overbids, namely $b_1 > b$,
\item $f$ underbids, namely $b_1 < b$, and
\item $f$ does not choose an allowed vertex, namely $b_1 = b$ and $v_1 \notin A$. 
\end{itemize}
}
\end{proof} 

Finally, we verify that \(T \leq \thresh_{\G}\). We define a function \(T': V \rightarrow [k] \cup \set{k+1}\) as follows. For \(v \in V\), when $T(v) > 0$ we define $T'(v) = (k+1) \ominus T(v)$, and $T'(v) = k+1$ otherwise. Lemma~\ref{lem:flippedaverage} shows that \(T'\) satisfies the average property.
We proceed as in the previous construction only from \PT's perspective. 
We construct a partial strategy $f_{T'}$ for \PT from $T'$ just as $f_T$ is constructed from $T$, and construct a turn-based parity game $G_{T', \G}$. Let \(\thresh_{\G}^2\) denote \PT's threshold function in \(\G\). That is, at a vertex $v \in V$, \PT wins when her budget is at least $\thresh_\G^2(v)$ and she loses when her budget is at most $\predb{\thresh_\G^2(v)}$. 
Applying Lemmas~\ref{lemma:TBparity-iff-thresh} and~\ref{lemma:thresholdpassesthetest} to \PT, we obtain the following.

\begin{lemma}
\label{lem:parity-PT}
    If \PT wins from every vertex $\zug{v, B}$ in $G_{T', \G}$ with $B \in \Natstr$ and $B < k+1$, then $T' \geq \thresh^2_{\G}$.
    If $T' \equiv \thresh^2_{\G}$, then \PT wins from every vertex $\zug{v, B}$ of $G_{T', \G}$ with $B \in \Natstr$ and $B < k+1$. 
\end{lemma}

Given a frugal-parity discrete-bidding game $\G = \zug{V, E, k, p, S, \fr}$, a vertex $v \in V$, and $\ell \in [k]$, we guess $T: V \rightarrow [k] \cup \set{k+1}$ and verify that it satisfies the average property. Note that the size of $T$ is polynomial in $\G$ since it consists of $|V|$ numbers each of size $O(\log k)$. We construct $G_{T, \G}$ and $G_{T', \G}$, guess memoryless strategies for \PO and \PT, respectively, and verify in polynomial time that they are indeed winning.  Finally, we check whether $T(v) \geq \ell$, and answer accordingly. Correctness follows from Lemmas~\ref{lemma:TBparity-iff-thresh},~\ref{lemma:thresholdpassesthetest}, and~\ref{lem:parity-PT}. We thus obtain our main result.

\begin{restatable}{theorem}{parityNPcoNP}
	\label{thm:parity-NP-coNP}
	The problem of finding threshold budgets in frugal-parity discrete-bidding games is in \NP and co\NP. 
\end{restatable}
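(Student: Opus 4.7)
The plan is to exhibit both an \NP and a co\NP verifier for Problem~\ref{prob:decision-threshold} using $\thresh_\G$ itself as the witness. Given $\zug{\G, v, \ell}$, I would nondeterministically guess a function $T: V \to [k] \cup \set{k+1}$, whose description has size $O(|V| \log k)$ and is thus polynomial in the input. The verifier then decides in polynomial time whether $T \equiv \thresh_\G$; if so, it accepts iff $T(v) \geq \ell$ (for \NP) or iff $T(v) < \ell$ (for co\NP). The key design choice is that the \emph{same} deterministic verification subroutine works for both directions, so membership in \NP and in co\NP are obtained simultaneously.

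The verification subroutine proceeds in three polynomial-time stages. First, confirm locally at every vertex that $T$ satisfies the discrete average property of Def.~\ref{def:average}; reject if it does not. Second, compute the flipped function $T'$ from Lem.~\ref{lem:flippedaverage} (which then also satisfies the average property) and construct the two turn-based parity games $G_{T, \G}$ and $G_{T', \G}$ defined in Section~\ref{sec:NP and coNP}; both have $O(|V|)$ vertices and $O(|E|)$ edges, so this is polynomial. Third, nondeterministically guess memoryless strategies $\ftg$ for \PO in $G_{T, \G}$ and $g^*$ for \PT in $G_{T', \G}$, and verify in polynomial time that each is winning from every vertex of its game. Checking a memoryless strategy in a turn-based parity game reduces to analysing the maximal parity index on each cycle of the one-player graph obtained by fixing the strategy, which is standard.

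Correctness is immediate from the lemmas already established. If the verifier accepts, then by Lem.~\ref{lemma:TBparity-iff-thresh} applied to $G_{T,\G}$ we have $T \geq \thresh_\G$, and by Lem.~\ref{lem:parity-PT} applied to $G_{T',\G}$ we have $T' \geq \thresh_\G^2$, which by the definition of $T'$ rewrites as $T \leq \thresh_\G$; thus $T \equiv \thresh_\G$. Conversely, when the nondeterministic guess is $T = \thresh_\G$ itself, the average property holds by Thm.~\ref{thm:fixed-point-parity}, and Lem.~\ref{lemma:thresholdpassesthetest} together with its \PT-side dual in Lem.~\ref{lem:parity-PT} guarantees that memoryless winning strategies in both auxiliary games exist and can be guessed. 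Hence on every input there is an accepting computation iff $\thresh_\G(v) \geq \ell$, giving \NP, and symmetrically for the complement, giving co\NP.

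The conceptual obstacle here would have been the non-uniqueness exhibited by Thm.~\ref{thm:non-unique}: merely guessing a function with the average property is not enough, since distinct such functions exist. The whole point of the turn-based reductions $G_{T,\G}$ and $G_{T',\G}$ is to distinguish $\thresh_\G$ from spurious solutions, and since that work has already been done, the remaining task in this proof is only to assemble the pieces into a single nondeterministic procedure, which is what the plan above does.
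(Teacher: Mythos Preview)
Your proposal is correct and follows essentially the same approach as the paper: guess $T$ and memoryless strategies in the two auxiliary turn-based games $G_{T,\G}$ and $G_{T',\G}$, verify the average property and that the strategies win everywhere, and compare $T(v)$ with $\ell$. The paper's argument is identical in structure, invoking Lemmas~\ref{lemma:TBparity-iff-thresh},~\ref{lemma:thresholdpassesthetest}, and~\ref{lem:parity-PT} in exactly the way you describe.
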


\section{Discussion}
We develop two algorithms to find threshold budgets in discrete-bidding games. Our first algorithm shows, for the first time, that thresholds in parity discrete-bidding games satisfy the average property. Previously, only thresholds in reachability discrete-bidding games were known to have this property. 
We study, for the first time, the problem of computing threshold budgets in discrete-bidding games in which the budgets are given in binary, and establish membership in NP and coNP for reachability and parity objectives. 
Previous algorithms for reachability and parity discrete-bidding games have exponential running time in this setting. 
We develop novel building blocks as part of our algorithms, which can be of independent interest. First, we define and study, for the first {\em frugal} objectives, which are reachability objectives accompanied by an enforcement on a player's budget when reaching the target. Second, our fixed-point algorithm provides a recipe for extending a proof on the structure of thresholds in reachability bidding games to parity bidding games. Third, we develop, for the first time, strategies that can be implemented with linear memory in reachability and parity discrete-bidding games, whereas previous constructions used exponential memory.

We point to the intriguing state of affairs in parity discrete-bidding games. 
Deciding the winner in a turn-based parity game is a long-standing open problem, which is known to be in \NP and co\NP but not known to be in P. 
A very simple reduction from turn-based parity games to parity discrete-bidding games was shown in~\cite{AAH21}. Moreover, the reduction outputs a bidding game with a total budget of $0$; that is, a discrete bidding game with constant sum of budgets.
Our results show that parity discrete-bidding games are in \NP and co\NP even when the sum of budgets is given in binary. One might expect that such games would be at least exponentially harder than bidding games with constant sum of budgets. But all of these classes of games actually lie in \NP and co\NP.

\printbibliography

\end{document}